\title{The model checking fingerprints\\ of CTL operators}
\author{Andreas Krebs\inst{1} \and Arne Meier\inst{2} \and Martin Mundhenk\inst{3}}
\institute{Universität Tübingen, Sand 13, 72076 Tübingen, Germany, \email{krebs@informatik.uni-tuebingen.de} \and
  Leibniz Universität Hannover, Appelstra\ss{}e 4, 30167 Hannover, Germany, \email{meier@thi.uni-hannover.de} \and
  Friedrich-Schiller-Universit\"at Jena, Ernst-Abbe-Platz 2, 07743 Jena, Germany, \texttt{martin.mundhenk@uni-jena.de}
}
\begin{document}

\maketitle

\begin{abstract}
 The aim of this study is to understand the inherent expressive power of \CTL operators. 
 We investigate the complexity of model checking for all \CTL fragments
 with one \CTL operator and arbitrary Boolean operators.
 This gives us a fingerprint of each \CTL operator.
 The comparison between the fingerprints 
 yields a hierarchy of the operators that mirrors their strength with respect to model checking.
\end{abstract}

\section{Introduction}
Temporal logics are a long used and well-understood concept to model software specifications and computer programs by state transition semantics. 
The first approaches in this currently quite large area of research go back to Arthur N.\ Prior \cite{pr57,pr67}. 
The logics became more prominent in the 70s and 80s due to significant effort of Pnueli, Emerson, Halpern, and Clarke \cite{pnu77,clem81,emha86}. 
Usually one distinguishes between three temporal logics: linear time logic \LTL, computation tree logic \CTL, and the full branching time logic \CTLstar. 
All these logics are defined as extensions of (modal) propositional logic to express properties of computer programs 
by introducing two path quantifiers $\A$ and $\E$, resp., five temporal operators ne$\X$t, $\U$ntil, $\F$uture, $\G$lobally, and $\R$elease. 
Form a syntatctic point of view,
the three temporal logics differ in the way how the path quantifiers and temporal operators may be combined.  
The computation tree logic \CTL allows operators that are combined from one path quantifier directly followed by one temporal operator.
Thus there are ten different \CTL operators---e.g., $\EX$ or $\AU$. 
% From a semantical point of view one uses Kripke structures which are essentially directed proposition labelled graphs 
% (whose single transition relation is total in the sense that for any state there must be at least one successor state). 
% Thereby one sees such a graph as transitions between states within one step of time, as induced by a computer program.

The most important decision problems related to temporal logics are the satisfiability problem and the model checking problem. 
The complexity of these problems ranges between $\P$ and $\TwoEXPTIME$ and 
has been classified for the general cases \cite{vast85,Var85a,emju00,fila79,pr80,clemsi86,sch02}. 
Recently the satisfiability problem for all three logics has been completely classified 
with respect to all Boolean and temporal operator fragments \cite{mmtv09,bsssv09},
motivated in part by the fundamental work of E.\ Post \cite{pos41} on Boolean functions.
In the same way, the model checking problem for $\LTL$  was studied in detail \cite{BaulandLTL11}. 
The model checking problem for \CTL has been deeper understood in \cite{Beyer11} 
who examined the complexity of \CTL fragments 
that have arbitrary \CTL operators that are combined only with all monotone Boolean operators.
For model checking, there are seven relevant fragments of Boolean operators \cite{BaulandLTL11},
but only one of these was considered in \cite{Beyer11}.

\begin{figure}[t]
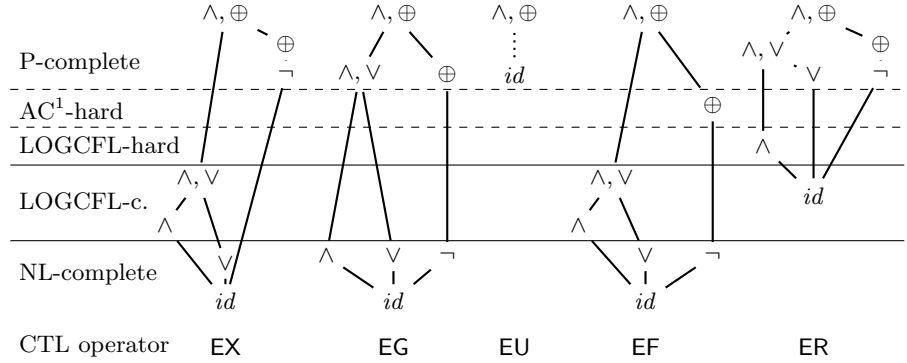

\centering
\CTLResultate
\caption{Overview of complexity results---the model checking fingerprints of the \CTL operators.}
\label{fig:complexity-overview}
\end{figure}

\begin{wrapfigure}{r}{.5\textwidth}
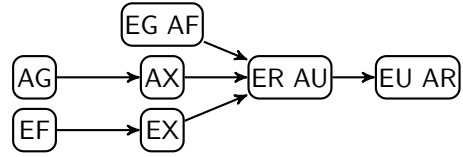

\centering
  \CTLOperatorHierarchie

\caption{The mc-strength hierarchy of \CTL operators that relies on their fingerprints (Fig.~\ref{fig:complexity-overview}).
Arrows indicate the relation $\morepowerfull$.
Operators $X$ and $Y$ in the same circle have the same mc-strength (i.e., $X\morepowerfull Y$ and $Y\morepowerfull X$).
The hierarchy is proper under the common assumptions $\NL\subsetneq \LOGCFL \subsetneq \P$.
}\label{fig:OperatorHierarchie}
\vspace{-15pt}
\end{wrapfigure}
We aim to fill this gap by classifying the remaining relevant Boolean operator fragments for the computation tree logic \CTL.
More specifically,
we examine the complexity of \CTL model checking 
for all fragments of formulas that combine one of the ten \CTL operators with one of the seven relevant fragments of Boolean operators.  
With our work one can completely characterize all but four of these combinations.
Our classifications---informally called fingerprints---yield a preorder expressing how powerful a \CTL operator is.
We say a \CTL operator $\mathcal T$ is \emph{mc-stronger} than $\mathcal T'$, in symbols $\mathcal T'\morepowerfull\mathcal T$, if for every set $B$ of Boolean operators the model checking problem for the $(\{\mathcal T\}\cup B)$-fragment of \CTL is computationally harder than that for the $(\{\mathcal T'\}\cup B)$-fragment. The resulting partial order is shown in Figure~\ref{fig:OperatorHierarchie}.
It can be seen as a generalization of the notion of expressiveness of \CTL fragments \cite{laroussinie95}.
Whereas the notion of expressiveness deals with equivalence of formulas from different fragments, our notion of mc-strength deals with equivalence of model checking instances for different fragments.
Expressiveness is meaningful from a language theoretic point of view, and mc-strength from the computational complexity perspective.

The paper is organized as follows. 
At first we introduce syntax and semantics of \CTL, and we explain the alternating graph accessibility problems that we use in our hardness proofs (Section~\ref{sec:prelims}).
We visit each \CTL operator and show its complexity fingerprint (Sections~\ref{sec:EU} and \ref{sec:remaining}).
Finally we conclude with the resulting comparison of the mc-strength of \CTL operators and an outlook to future work (Section~\ref{sec:conclusion}).
The Appendix contains the missing proofs.

\section{Preliminaries}
\label{sec:prelims}

\subsection{Computation Tree Logic CTL}

Let $\PROP$ be a set of atomic propositions. Then the set of all well-formed \CTL formulas is
$
\varphi ::= 
\true\mid
p\mid
\varphi\land\varphi\mid
\varphi\lor\varphi\mid
\varphi\xor\varphi\mid
\lnot\varphi\mid
\mathcal{P}\mathcal{O}\varphi\mid
\varphi\mathcal{P}\mathcal{O}'\varphi, 
$
for $p\in\PROP$, $\mathcal{P}\in\{\A,\E\}$, $\mathcal{O}\in\{\X,\F,\G\}$, $\mathcal{O}'\in\{\U,\R\}$. 
We say $\mathcal{P}\mathcal{O}$ are the unary \CTL operators $\EX$, $\AX$, $\EG$, $\AG$, $\EF$, $\AF$ 
and $\mathcal P\mathcal O'$ are the binary \CTL operators $\EU$, $\AU$, $\ER$, $\AR$.
%The semantics of these \CTL operators is defined on states of total Kripke structures. 
A Kripke model (for \CTL) is a triple $(W,R,\xi)$, where $W$ is a finite set of states, 
$R\colon W\to W$ is a total transition relation (i.e., for all $w\in W$ there is a $w'\in W$ with $wRw'$), 
and $\xi\colon W\to 2^{\PROP}$ is an assignment function.

The semantics of \CTL is defined as follows on states.
Let $\Model M=(W,R,\xi)$ be a Kripke model. 
Let $\Pi(w)$ denote the set of infinite paths starting in $w\in W$ through $(W,R)$, 
i.e., a path $\pi\in\Pi(w)$ is an infinite sequence $\pi=\pi[1]\pi[2]\cdots$ with $\pi[1]=w$ and $(\pi[i],\pi[i+1])\in R$ for all $i\geq 1$.
%\begin{figure}[b]
% \centering
% \ctloperators
% \caption{Semantics of existential \CTL operators.}\label{fig:ctlops}
%\end{figure}
%\vspace*{2ex}
%
{%\renewcommand{\arraystretch}{1.2}
$$
\begin{array}{lcl}
 \Model M, w\models \true & \multicolumn{2}{l}{\text{always,}}\\
% \Model M, w\models \false & \multicolumn{2}{l}{\text{never}},\\
 \Model M, w\models p & \text{iff} & p\in\xi(w),\\
 \Model M, w\models \lnot\psi & \text{iff} &  \Model M, w\not\models\psi,\\
 \Model M, w\models \psi\land\phi & \text{iff} & \Model M, w\models\psi \text{ and }\Model M, w\models\phi,\\
 \Model M, w\models \psi\lor\phi & \text{iff} & \Model M, w\models\psi \text{ or }\Model M, w\models\phi,\\
 \Model M, w\models \psi\oplus\phi & \text{iff} & (\Model M, w\models\psi \text{ and }\Model M, w\nmodels\phi) \text{~or~} (\Model M, w\nmodels\psi \text{ and }\Model M, w\models\phi),\\
 \Model M, w \models \EX\varphi & \text{iff} & \exists \pi\in\Pi(w): ~\Model  M, \pi[2]\models \varphi,\\
 \Model M, w \models \EF\varphi & \text{iff} & \exists\pi\in\Pi(w) ~\exists k\ge 1: ~\Model M, \pi[k]\models \varphi,\\
 \Model M, w \models \EG\varphi & \text{iff} & \exists\pi\in\Pi(w) ~\forall k\ge 1: ~\Model M, \pi[k]\models \varphi,\\
 \Model M, w \models \psi\EU\varphi & \text{iff} & \exists\pi\in\Pi(w) ~\exists k\ge 1: ~\Model M, \pi[k]\models \varphi\text{ and } \forall i<k: \Model M,\pi[i]\models\psi,\\
 \Model M, w \models \psi\ER\varphi & \text{iff} & \exists\pi\in\Pi(w) ~\forall k\ge 1: ~\Model M, \pi[k]\models \varphi\text{ or } \exists i< k: \Model M,\pi[i]\models\psi.\\
\end{array}
$$
}

%Notice that the semantics of $\ER$ can also be described as follows.
%
%\begin{mathe}
%\Model M, w \models \psi\ER\varphi & \text{iff} & \exists\pi\in\Pi(w) :&  (i) ~~~ \forall k\geq 1: \pi[k] \models \varphi, \text{~~~or} \\
%& & & (ii)~~  \exists k\geq 1: \pi[k] \models \psi \text{~and~} \forall i\leq k: \pi[i] \models \varphi .
%\end{mathe}

The remaining \CTL operators can be expressed as duals of the above defined operators.
We have the equivalences $\AX\varphi\equiv\lnot\EX\lnot\varphi, \AF\varphi\equiv\lnot\EG\lnot\varphi$, $\AG\varphi\equiv\lnot\EF\lnot\varphi$,
$\psi\AR\varphi \equiv \neg(\neg\psi\EU\neg\varphi)$, and $\psi\AU\varphi \equiv \neg(\neg\psi\ER\neg\varphi)$.
Moreover, the operators $\EX,\EG,\EU$ are a minimal set of CTL operators that together with the Boolean operators 
suffice to express any from the others \cite{laroussinie95}, and with the Boolean operators $\land,\oplus$ one can express every Boolean function.
For a set $T\subseteq \{\EX$, $\AX$, $\EG$, $\AG$, $\EF$, $\AF$, $\EU$, $\AU$, $\ER$, $\AR$, $\land$, $\lor$, $\neg$, $\oplus\}$ of Boolean functions and \CTL operators, a $T$-formula is a formula that has operators only from $T$.
The $T$-fragment of \CTL is the set of all $T$-formulas.
The model checking problems for \CTL fragments are defined as follows. 

\problemdef{$\CTLMC T$}{The model checking problem for $T$-fragments of \CTL}%
  { A \CTL formula $\phi$ with operators in $T\subseteq\{\EX$, $\AX$, $\EG$, $\AG$, $\EF$, $\AF$, $\EU$, $\AU$, $\ER$, $\AR$, $\land$, $\lor$, $\neg$, $\oplus\}$,
    a Kripke model $\Model{M}=(W,R,\xi)$, and a state $w_0\in W$%
  }{%
    Does $\Model{M},w_0 \models \phi$ hold%
  }
Usually we will omit the $\{\cdot\}$ and $\cup$ in the problem notion for convenience.

%\textbf{The structure of the Boolean clones.}
%  
  Post \cite{pos41} classified the lattice of all relevant sets of Boolean operators---called \emph{clones}---and 
  found a finite base for each clone.
  The definitions of all clones as well as the full inclusion graph can be found, for example, in~\cite{bcrv03}.
  Whereas in general there is an infinite set of clones,
  for model checking luckily there are only seven different clones~\cite{BaulandLTL11}
  depicted in Figure~\ref{fig:mc-clones}, where we describe the clones by their standard bases. 
  (See, e.g., \cite{mc_survey_beatcs} for more explanations.)

\begin{wrapfigure}{r}{.3\textwidth}
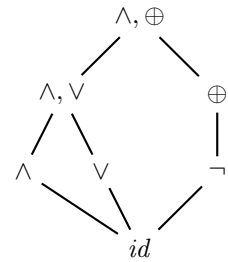

\centering
  \ClonesBild

 \caption{The Boolean clones relevant for model checking, represented by their standard bases.
\emph{id} denotes the clone represented without operator (``identity'' of an atom).}
\label{fig:mc-clones}
\end{wrapfigure}

\subsection{Computational Complexity}
\label{subsubsec:Kripke}

We will make use of standard notions of complexity theory \cite{pap94}. 
In particular, we will make use of the complexity classes $\NL,\LOGCFL,\AC1,$ and $\P$. 

$\NL$ is the class of problems decided by nondeterministic logarithmically space bounded Turing machines.
The typical complete problem is the graph accessibility problem for directed graphs $\REACH$ (given a directed graph
with two nodes $s$ and $t$, is there a path form $s$ to $t$?).
$\LOGCFL$ is the class of problems decided by nondeterministic logarithmically space bounded Turing machines,
that are additionally allowed to use a stack and run in polynomial time.
$\AC1$ is the class of problems decided by alternating logarithmically space bounded Turing machines
with logarithmically bounded number of alternations.
We will shortly present complete problems for both of these classes.
In order to prove hardness results, we will make use of logarithmic space bounded many-one reductions $\leqlogm$.
It is known that $\NL\subseteq\LOGCFL\subseteq\AC1\subseteq\P$ but not whether any inclusion is strict.

Clarke, Emerson, and Sistla~\cite{clemsi86} 
showed that model checking for \CTL is in $\P$,
and Schnoebelen \cite{sch02} showed that it is $\P$-hard.

\begin{theorem}[\!\!{\cite{clemsi86,sch02}}]\label{thm:CTLMC}
$\CTLMC{\EX,\ldots,\AR,\wedge,\vee,\neg,\oplus}$ is $\P$-complete.
\end{theorem}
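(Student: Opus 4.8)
The plan is to establish membership in $\P$ by the classical bottom-up labelling algorithm of \cite{clemsi86} and $\P$-hardness by a $\leqlogm$-reduction from a $\P$-complete alternating graph accessibility problem; the hardness will in fact already hold for a small fragment. For membership I would process the subformulas of the input formula $\phi$ in order of non-decreasing size, computing for each subformula $\psi$ the set $W_\psi := \{w\in W \mid \Model M, w\models\psi\}$. For $\true$, atoms, and the Boolean connectives $\land,\lor,\neg,\oplus$ the set $W_\psi$ is read off directly or obtained by the corresponding set operation on the previously computed sets; $\EX\psi$ is one sweep over $R$; $\EF\psi$ is backward reachability from $W_\psi$; $\EG\psi$ is obtained by restricting the transition graph to $W_\psi$ and keeping the states from which a cycle of this restricted graph is reachable (strongly connected components plus backward reachability); and $\chi_1\EU\chi_2$ and $\chi_1\ER\chi_2$ are the usual least, respectively greatest, fixed points, each stabilising within $|W|$ rounds. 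The universal operators $\AX,\AF,\AG,\AU,\AR$ are reduced to these via the equivalences stated above by pushing negations inwards. As there are $O(|\phi|)$ subformulas and each costs time polynomial in $|W|+|R|$, the whole procedure runs in polynomial time.

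For the lower bound I would reduce from the alternating graph accessibility problem. A routine logspace normalisation turns any instance into an AND/OR graph that is \emph{strictly alternating} --- every edge runs from an $\exists$-vertex to a $\forall$-vertex or vice versa --- has a total edge relation, an $\exists$-source $s$, and an accepting-vertex set $F$, and in which $N := |V|$ bounds the length of a relevant play; this preserves both the answer and $\P$-completeness. The Kripke model $\Model M$ unfolds this graph into $N+1$ layers: its states are the pairs $\langle v,k\rangle$ with $0\le k\le N$, it has an edge $\langle v,k\rangle R\langle v',k-1\rangle$ for every graph edge $(v,v')$ and every $k\ge 1$, plus self-loops $\langle v,0\rangle R\langle v,0\rangle$ to keep $R$ total, and a single atom $w$ holds in exactly the states $\langle v,k\rangle$ with $v\in F$. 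Along every path issuing from $\langle s,N\rangle$ the layer index drops by one per step and the vertex type alternates, so an existential choice in the underlying game corresponds to a step via $\EX$ and a universal choice to a step via $\AX$. Accordingly, put $\phi_0 := w$ and $\phi_{k+1} := w\lor\EX\phi_k$ or $\phi_{k+1} := w\lor\AX\phi_k$, the choice alternating down the layers and matching the type of layer $k+1$, so that, since $s$ is existential, $\phi := \phi_N$ begins with $\EX$. An induction on $k$ then shows that $\Model M,\langle v,k\rangle\models\phi_k$ iff the existential player can force a visit to $F$ within $k$ moves from $v$; hence $\Model M,\langle s,N\rangle\models\phi$ iff the accessibility instance is positive. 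The model has $O(N^2)$ states, the formula $\phi = w\lor\EX(w\lor\AX(w\lor\cdots))$ has length $O(N)$, and both are plainly logspace computable, so this is a $\leqlogm$-reduction and establishes $\P$-hardness \cite{sch02}.

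The delicate point --- and the only real obstacle --- is the size of $\phi$. Mirroring the game on an \emph{arbitrary} AND/OR graph would require, at each vertex, a formula that branches into an $\EX$-copy for the existential case and an $\AX$-copy for the universal case, each guarded by a proposition recording the vertex type; substituting the recursion out then yields a formula of exponential size --- enough for $\AC1$-hardness when the graph has logarithmic depth, but not for $\P$. Forcing strict alternation removes this branching: exactly one of $\EX,\AX$ is relevant at each layer, the recursion degenerates into the single guarded chain $w\lor\EX(w\lor\AX(w\lor\cdots))$, and $\phi$ stays linear. Everything else is routine; and since only $\EX,\AX,\lor$ --- equivalently $\EX,\lor,\neg$ --- occur in the reduction, the $\P$-hardness is already inherited by a fairly small fragment of \CTL.
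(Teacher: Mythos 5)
Your proposal is correct and follows exactly the route the paper relies on for this theorem, which it does not reprove but attributes to \cite{clemsi86,sch02}: the bottom-up subformula-labelling algorithm for membership in $\P$, and a Schnoebelen-style logspace reduction from alternating graph accessibility, unfolded into layers so that a linear-size formula of alternating $\EX$/$\AX$ guarded by $\lor$ suffices. Your closing observation that the hardness already lives in the $\{\EX,\AX,\lor\}$ (i.e.\ $\{\EX,\neg,\lor\}$) fragment matches what the paper later records as Theorem~\ref{thm:ctlmc-EX}(1); the only cosmetic caveat is that your induction claim for $\phi_k$ should be restricted to layer-$k$ states whose vertex type matches the operator chosen at that layer, which are the only ones reachable from $\langle s,N\rangle$.
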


How \CTL operators compare with respect to the complexity of model checking,
was investigated in~\cite{Beyer11} in the following way.
They completely characterize the complexity of $\CTLMC{T,\land,\lor}$
for every set $T$ of \CTL operators.
They show that this complexity is either $\P$-complete or $\LOGCFL$-complete.
For singletons $S\subset\{\AF,\EG,\AU,\EU,\AR,\ER\}$ the problems $\CTLMC{S,\land,\lor}$ are $\P$-complete,
whereas for all other singletons $S\subset\{\AX,\EX,\EF,\AG\}$ $\CTLMC{S,\land,\lor}$ is only $\LOGCFL$-complete.

\begin{wrapfigure}{r}{.5\textwidth}
\vspace{-.85cm}
\centering
\ASAGAPveeonein

\caption{An instance $\langle G,s,T\rangle$ of $\ASAGAPtwooutonein$.
The marked edges indicate the witness for $\apath_G(s,T)$.}
\label{fig:Bsp-ASAGAPveeonein}
\end{wrapfigure}
Next, we consider problems that we will use
for reductions in our hardness proofs.
The alternating graph accessibility problem is shown to be $\P$-complete in \cite{chkost81}.
We use the following restricted version of this problem 
that is very similar to Boolean circuits with and- and or-gates (and input-gates).
An \emph{alternating slice graph} \cite{MW13} $G=(V,E)$ is
a directed bipartite acyclic graph with a bipartitioning $V=V_{\exists} \cup V_{\forall}$,
and a further partitioning
$V  =  V_0 \cup V_1 \cup \cdots \cup V_{m}$ ($m+1$ \emph{slices}, $V_i\cap V_j=\emptyset$ if $i\not=j$)
where
$$
	V_{\exists}  = \bigcup\limits_{i\leq m, i \text{ even}} V_i \text{ and } V_{\forall}=\bigcup\limits_{i\leq m, i \text{ odd}} V_i,
\text{ such that } 
	E  \subseteq \bigcup\limits_{i=0}^{m-1} (V_i \times V_{i+1}).
$$
(All edges go from slice $V_i$ to slice $V_{i+1}$ for $i=0,1,2,\ldots,m-1$.)
All nodes excepted those in the last slice $V_m$ have a positive outdegree.
Nodes in $V_{\exists}$ are called \emph{existential} nodes,
and nodes in $V_{\forall}$ are called \emph{universal} nodes.
Notice that $V_0\subseteq V_{\exists}$ by definition.
Alternating paths from node $x$ to nodes in $T\subseteq V_m$ are defined as follows by
the property $\apath_G(x,T)$.

\begin{enumerate}
	\item[(1)] for $x\in V_m$ $\apath_G(x,T)$ iff $x\in T$
	\item[(2a)] for $x\in V_{\exists}-V_m : \apath_G(x,T)$ iff $\exists z\in V_{\forall} : (x,z)\in E$ and $\apath_G(z,T)$
	\item[(2b)] for $x\in V_{\forall}-V_m : \apath_G(x,T)$ iff $\forall z\in V_{\exists} :$ if $(x,z)\in E$ then $\apath_G(z,T)$
\end{enumerate}

The problem \ASAGAP is similar to the alternating graph accessibility problem,
but for the restricted class of alternating slice graphs.
%Notice that we consider only graphs with slices $V_0,\ldots,V_m$,
%where $V_m\subseteq V_{\exists}$ (i.e.~$m$ is even).
%
\problemdef{\ASAGAP}{The alternating slice graph accessibility problem}{$\langle G,s,T \rangle$, where $G=(V_{\exists}\cup V_{\forall},E)$ is an alternating slice graph
                                                  with slices $V_0,V_1,\ldots,V_{m}$, $m$ even,
                                          and $s\in V_{0}$, $T\subseteq V_m$}{Does $\apath_G(s,T)$ hold}

We will use also the following variant where the outdegree of $\forall$-nodes
and the indegree of $\exists$-nodes is restricted.

\problemdef{\ASAGAPtwooutonein}{The alternating slice graph accessibility problem with bounded degree}{$\langle G,s,T \rangle$, where $G=(V_{\exists}\cup V_{\forall},E)$ is an alternating slice graph
                                                  with slices $V_0,V_1,\ldots,V_{m}$,
                                                  where every node in $V_{\forall}$ has outdegree $2$ and every node in $V_{\exists}-V_0$ has indegree $1$,
                                          and $s\in V_{0}$, $T\subseteq V_m$}{Does $\apath_G(s,T)$ hold}

%\problemdef{\ASAGAPonein}{The alternating slice graph accessibility problem with unary existential indegree}{$\langle G,s,T \rangle$, where $G=(V_{\exists}\cup V_{\forall},E)$ is an alternating slice graph
%                                                  with slices $V_0,V_1,\ldots,V_{m}$,
%                                                  where every node in $V_{\exists}-V_0$ has indegree $1$, 
%                                          and $s\in V_{0}$, $T\subseteq V_m$}{does $\apath_G(s,T)$ hold?}

%As with the alternating graph accessibility problem, \ASAGAP is $\P$-complete \cite{MW-RP10}.
%It is straightforward to show that \ASAGAPtwooutonein, too.
%In order to obtain complete problems for $\LOGCFL$ and for $\AC1$,
%we need to restrict the depth of the graphs.

$\ASAGAP_{\log}$ is the set of all elements $\langle G,s,T\rangle$ of $\ASAGAP$,
where $G$ is a graph with $n$ nodes and $m$ slices such that $m\leq \log n$.
Similarly, the problem $\ASAGAPtwooutonein_{\log}$ is the subset of $\ASAGAPtwooutonein$ with graphs of logarithmic depth.
The following completeness results are straightforward.

\begin{theorem}
\begin{enumerate}
\item  $\ASAGAP$ is $\P$-complete \cite{MW-RP10}.
\item $\ASAGAPtwooutonein$ is $\P$-complete.
\item $\ASAGAP_{\log}$ is $\AC1$-complete \cite{MW13}.
\item $\ASAGAPtwooutonein_{\log}$ is $\LOGCFL$-complete.
\end{enumerate}
\end{theorem}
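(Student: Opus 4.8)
The plan is to dispatch parts~(1) and~(3) by the cited references and to derive parts~(2) and~(4) from the respective unrestricted problems by local degree‑reduction gadgets; in both cases the upper bound is the easy direction. For~(2), evaluating $\apath_G(\cdot,T)$ bottom‑up, slice by slice from $V_m$ to $V_0$, runs in polynomial (indeed linear) time, so $\ASAGAPtwooutonein\in\P$. For~(4), read a logarithmic‑depth instance $\langle G,s,T\rangle$ as a monotone Boolean circuit: every $\exists$‑node becomes a $\vee$‑gate, every $\forall$‑node a $\wedge$‑gate, every leaf in $V_m$ a constant that is $1$ iff it lies in $T$, and $s$ is the output. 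Since each $\forall$‑node has out‑degree $2$, the $\wedge$‑gates have fan‑in $2$; the circuit is leveled, monotone, of polynomial size and logarithmic depth (a logarithmic‑depth $\mathrm{SAC}^{1}$ circuit), and an auxiliary pushdown machine evaluates it in polynomial time by a depth‑first traversal of its polynomial‑size proof tree: guess a true argument at each $\vee$‑gate, and at each $\wedge$‑gate push one argument while recursing on the other. Hence $\ASAGAPtwooutonein_{\log}\in\LOGCFL$, and only the hardness directions need work.

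For the hardness of~(2) I would give a $\leqlogm$‑reduction from $\ASAGAP$. An alternating slice graph is already leveled and strictly $\exists/\forall$‑alternating, so it remains only to enforce the degree bounds. First, replace every $\forall$‑node $x$ of out‑degree $d>2$ by a balanced binary tree of fresh $\forall$‑nodes computing the conjunction of the $d$ old successors of $x$, inserting identity $\vee$‑nodes (in‑degree $1$) between consecutive $\forall$‑levels of the tree to preserve the bipartition and padding odd arities with a small always‑true sub‑instance; doing this for every $\forall$‑level in lockstep multiplies the number of slices by $O(\log n)$ and is clearly logspace computable. Second, replace every $\exists$‑node $y$ of in‑degree $d$ by $d$ copies $y^{1},\dots,y^{d}$, each inheriting \emph{all} out‑edges of $y$ and each receiving exactly one of the $d$ edges entering $y$. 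Since the (at most two) $\forall$‑successors of each $\vee$‑node are shared by all copies rather than duplicated, and $\forall$‑nodes carry no in‑degree bound, the number of fresh nodes created in any slice is bounded by the total out‑degree of the previous slice, so the output stays polynomial. Every copy evaluates to the same value as the node it copies, whence $\apath_{G'}(s,T)$ holds iff $\apath_G(s,T)$ does. Together with the upper bound this gives $\P$‑completeness.

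For the hardness of~(4) the source must itself be $\LOGCFL$‑hard, so I would reduce from the circuit value problem for leveled, alternating, logarithmic‑depth $\mathrm{SAC}^{1}$ circuits, which is $\LOGCFL$‑complete. Evaluating the input level turns such an instance into a leveled, alternating, monotone circuit of logarithmic depth whose $\wedge$‑gates all have fan‑in $2$ and all of whose constants sit at the bottom level; renaming $\wedge$‑gates to $\forall$‑nodes, $\vee$‑gates to $\exists$‑nodes, and the bottom‑level constants to membership resp.\ non‑membership in $T$ already yields an alternating slice graph with $\forall$‑out‑degree $2$, and applying just the $\exists$‑duplication gadget from~(2) enforces $\exists$‑in‑degree $1$. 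Crucially no $\forall$‑expansion is needed here — which is exactly why part~(2) cannot be reused verbatim, as that step would blow the depth up by a $\log n$ factor — and the $\exists$‑duplication does not change the depth, so the graph still has $O(\log n)$ slices while its size stays polynomial. Padding the bottom level with polynomially many further nodes outside $T$ then makes the number of slices at most the logarithm of the number of nodes, giving an instance of $\ASAGAPtwooutonein_{\log}$ and, with the upper bound, $\LOGCFL$‑completeness.

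The step I expect to be the crux is precisely the in‑degree condition on $\exists$‑nodes: the naive fix — duplicating an $\exists$‑node together with the whole sub‑DAG below it — blows the size up exponentially in the number of slices, which is fatal for a logspace reduction and, in~(4), also destroys logarithmic depth. The resolution is the observation that only the $\exists$‑node itself has to be copied, its $\forall$‑successors being shared across all copies, so that the number of new nodes per slice telescopes to a polynomial; after that, the remaining work — bookkeeping the slice indices of the tree‑expanded graph, matching copies to incoming edges, handling odd arities, and the node‑count padding for the $m\le\log n$ normalization in~(4) — is routine.
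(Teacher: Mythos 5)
The paper itself gives no proof of this theorem: parts (1) and (3) are attributed to \cite{MW-RP10} and \cite{MW13}, and parts (2) and (4) are merely declared ``straightforward'', so there is no in-paper argument to compare yours against. Your proposal is a correct and essentially standard way to fill this in. The upper bounds are right: bottom-up evaluation for (2), and for (4) the instance is exactly a leveled, monotone, semi-unbounded ($\wedge$-fan-in $2$), log-depth circuit, so the nondeterministic AuxPDA that guesses a polynomial-size proof tree (branching at $\forall$-nodes, guessing at $\exists$-nodes) gives membership in $\LOGCFL$. For the lower bounds you correctly identify the one non-trivial point, namely the in-degree-$1$ condition on $\exists$-nodes, and your resolution is the right one: copy only the $\exists$-node itself, once per incoming edge, and let all copies share the same $\forall$-successors, so the number of new nodes per slice is bounded by the number of edges between consecutive slices and the whole construction stays polynomial and depth-preserving; combined with the binary-tree expansion of high-out-degree $\forall$-nodes (interleaved with identity $\exists$-levels to keep the bipartition) for (2), and with a reduction from the log-depth semi-unbounded circuit value problem for (4), this yields the claimed hardness.

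Two details deserve slightly more care than your write-up suggests, though neither is a real gap. First, the final padding for the $m\le\log n$ normalization in (4) cannot literally consist of isolated nodes added to the bottom slice, since depending on the parity of $m$ such nodes would violate either the in-degree-$1$ condition on $\exists$-nodes or the out-degree-$2$ condition on $\forall$-nodes; you should pad with disjoint full-depth ``ladders'' (one $\forall$-node with two $\exists$-successors of in-degree $1$ per pair of slices, ending outside $T$), which costs only $O(m)$ nodes each and is logspace-computable. Second, the ``always-true'' gadgets used to even out odd arities in the $\forall$-tree expansion must likewise be extended down to the last slice and their endpoints added to $T$ (or, more simply, the spare tree leaf can point to an already-existing child, creating an in-degree that your duplication step repairs anyway). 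With these bookkeeping points spelled out, the proof is complete.
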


%\subsubsection*{From alternating graphs to Kripke models}
A Kripke model $(W,R,\xi)$ contains a total graph $(W,R)$.
We will use several methods to transform an alternating graph
to a graph that appears as (part of) a Kripke model.

If $G=(V,E)$ is an alternating graph with slices $V_0,\ldots,V_m$,
then $G^{\sharp}=(V^{\sharp},E^{\sharp})$ is the total graph
obtained from $G$ by adding a singleton slice $V_{m+1}=\{e\}$
and edges from all nodes in $V_m\cup V_{m+1}$ to $e$.
More formally, $V^{\sharp}=V\cup V_{m+1}$ and
$E^{\sharp} = E \cup ((V_m\cup\{e\})\times \{e\})$.

For instances of $\ASAGAPtwooutonein$, we will also apply another transformation.
Let $G=(V,E)$ with slices $V_0,\ldots,V_m$ be such an instance.
Every slice $V_i\subseteq V_{\exists}-V_0$ consists of
nodes with indegree $1$.
(Remind that node(s) in $V_0$ have indegree $0$.)
Thus $V_i\subseteq V_{\exists}-V_0$ can be considered as being partitioned
into sets $V^u_i:=\{v\mid (u,v)\in E\}$ for every $u\in V_{i-1}$.
Then each $V^u_i$ consists of two nodes
which can be assumed to be ordered arbitrarily.
and we will use the notation $V^u_i = \{v_{u,1},v_{u,2}\}$.

Let $\hat{V}_i:=\{\hat{v} \mid v\in V_i\}$ be a set
of nodes that are ``copies'' of the nodes of $V_i$.
Similarly as $V_i$ for even $i>0$ (i.e. $V_i\subseteq V_{\exists}-V_0$),
$\hat{V}_i$ is partitioned into sets $\hat{V}^u_i = \{\hat{v}_{u,1},\hat{v}_{u,2}\}$
for all $u\in V_{i-1}$.
The graph $G^{\flat}=(V^{\flat},E^{\flat})$ obtained from $G$ is defined as follows.
(See also Figure~\ref{fig:Bsp-Gl} for an example.)
\vspace*{1ex}

\noindent
{
\renewcommand{\arraystretch}{1.2}
$
\begin{array}{@{}rp{0.89\textwidth}}
V^{\flat} := & $V \cup \bigcup_{i=0}^m \hat{V}_i$ \\
E^{\flat} := & $E \cap V_{\exists}\times V_{\forall}$ \hfill \small{(The edges leaving $\exists$-nodes are as in $G$.)} \\
       & $\cup ~ \{(u,v_{u,1}) \mid u\in V_{\forall}\}$ \hfill \small{($\forall$-nodes have an edge to their ``first'' successor in $G$.)}\\
       & $\cup ~ \{(v_{u,1}, \hat{v}_{u,1}),(\hat{v}_{u,1}, {v}_{u,2}),(v_{u,2}, \hat{v}_{u,2}),(\hat{v}_{u,2}, \hat{v}_{u,2}) \mid u\in V_{\forall}\}$ \\
       & \hfill \small{(From each first suc.\ $v_{u,1}$ starts a path $v_{u,1},\hat{v}_{u,1},v_{u,2},\hat{v}_{u,2}$ ending in a loop.)} \\
%       & $\cup ~ \{(\hat{v}_{u,2}, \hat{v}_{u,2}) \mid u\in V_{\forall} \}$  \hfill (that ends in a loop.)\\
       & $\cup ~ \{(u,\hat{u}),(\hat{u},\hat{u}) \mid u\in V_{\forall}\cup V_0\}$ \\
       &       \hfill \small{($\forall$-nodes and $V_0$-nodes $u$ have another edge to $\hat{u}$ having a loop.)}\\
%       & $\cup ~ \{(\hat{u},\hat{u}) \mid  u\in V_{\forall}\}$ \hfill \small{(and $\hat{u}$ has a loop.)}
\end{array}
$
}

We will use the notion of slices also for $G^{\flat}$,
even though there are edges between nodes in the same slice.
The set of nodes $G^{\flat}$ is partitioned to $G^{\flat}=V^{\flat}_0\cup V^{\flat}_1 \cup \ldots \cup V^{\flat}_m$, where
slice $V^{\flat}_i = V_i \cup \hat{V}_i$.

\begin{figure}
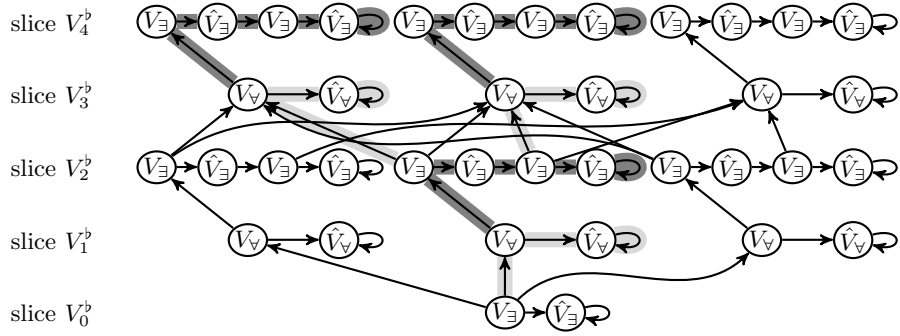
%[htb]

\transformedGl

\caption{The graph $G^{\flat}$ obtained from the graph $G$ in Figure~\ref{fig:Bsp-ASAGAPveeonein}.
The labels in the nodes indicate to which partition the node belongs.
The marked edges indicate infinite paths whose collection ``simulates'' the witness for $\apath_G(s,T)$ in $G$. 
}
\label{fig:Bsp-Gl}
\end{figure}

%---------------------------------------------------------

\section{Computation Tree Logic CTL}

\subsection{Existential Until $\EU$}
\label{sec:EU}

It was shown in \cite{Beyer11} that
$\CTLMC{\EU,\land,\lor}$ is $\P$-complete.
We improve this result by showing that the Boolean operators are not necessary for the hardness and show that $\CTLMC{\EU}$ is $\P$-complete (Theorem~\ref{thm:EU}).
Since model checking for formulas with $\EU$ as single operator reaches the maximal hardness, $\EU$ turns out to be the hardest $\CTL$ operator.
We also can conclude that $\CTLMC{T}$ is $\P$-complete for every set $T$ of Boolean functions and \CTL operators that contain $\EU$.

Technically, the proof of Theorem~\ref{thm:EU} can be seen as a guide for the $\P$-hardness proofs for $\CTLMC{\ER,\lor}$ and for $\CTLMC{\EG,\oplus}$.
Since the latter consider fragments with a combination of temporal and Boolean operators, their proofs are technically more involved, but the basic strategies are similar.

\begin{theorem}\label{thm:EU}
$\CTLMC{\EU}$ is $\P$-complete.
\end{theorem}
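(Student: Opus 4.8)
The upper bound is immediate from Theorem~\ref{thm:CTLMC}, so the whole work is the $\P$-hardness, which I would prove by a logspace many-one reduction from $\ASAGAP$ (or from its bounded-degree variant $\ASAGAPtwooutonein$, whichever makes the bookkeeping cleaner), using Theorem~3. The key conceptual point is that $\EU$ alone — with no Boolean connectives at all — must both realise the existential branching of $\exists$-nodes and somehow encode the universal branching of $\forall$-nodes, even though $\psi\EU\varphi$ is itself existential over paths. The trick I would use is the one the excerpt foreshadows with the $G^{\flat}$ construction: replace each $\forall$-node by a little gadget in which the two successors are strung out along a single path, so that a formula of the form $\ldots\EU\ldots$ which must ``survive'' along that path is forced to hold at *both* successor-copies, thereby simulating a conjunction. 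Concretely, given an instance $\langle G,s,T\rangle$ of $\ASAGAPtwooutonein$, I build the Kripke model on the state set $V^{\flat}$ from the preliminaries, with the transition relation $E^{\flat}$ (which is already total because of the loops added at the ``$\hat{}$''-nodes), and I fix a single atomic proposition, say $t$, true exactly at the target states $T\subseteq V_m$ (and at nothing else, or at a carefully chosen auxiliary set).

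The formula is then built slice by slice, alternating the roles of the levels. Writing $\varphi_m := t$ and, for $i$ from $m-1$ down to $0$, setting $\varphi_i := (\text{something})\EU\varphi_{i+1}$, I choose the left argument of each $\EU$ to be an atom (or $\true$) that is true precisely on the copy-nodes $\hat V_j$ that the simulating path must pass through, and false on the ``genuine'' successor nodes $V_j$; this forces any witnessing path for $\varphi_i$ at a $\forall$-gadget to leave the gadget only after it has passed through the point where $\varphi_{i+1}$ must already hold at the second successor as well. One must be careful that $\EU$ checks the left argument strictly before the witness point, so the atom assignments at the nodes $v_{u,1},\hat v_{u,1},v_{u,2},\hat v_{u,2}$ have to be tuned so that the only way to get $\varphi_{i+1}$ to hold ``all along'' this path fragment is for $\apath_G(z,T)$ to hold at both $z=v_{u,1}$ and $z=v_{u,2}$. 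The $\hat u$-loop edge hanging off every $\forall$-node (and every $V_0$-node) provides an escape route that makes the ``first successor only'' option impossible to abuse, and the loop at $\hat v_{u,2}$ lets a satisfied branch idle forever, keeping the path infinite as required by \CTL semantics.

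The heart of the correctness argument is then a double induction on the slice index, showing $\Model M, v \models \varphi_i \iff \apath_G(v,T)$ for every $v\in V_i$, with the $\exists$-slices handled by the obvious ``pick a good successor'' reading of $\EU$ and the $\forall$-slices handled by the gadget argument just sketched; the base case $i=m$ is just the definition of $t$. Finally I check that the construction is computable in logarithmic space — the model $G^{\flat}$ is a local, node-by-node rewriting of $G$, and the formula $\varphi_0$ has size linear in $m$ and is produced by a simple loop — and that $\Model M, s\models\varphi_0 \iff \apath_G(s,T)$, completing the reduction. The main obstacle I anticipate is getting the assignment on the four gadget nodes $v_{u,1},\hat v_{u,1},v_{u,2},\hat v_{u,2}$ exactly right so that the strict-less-than quantifier in the semantics of $\EU$ cannot be exploited to skip the second successor, and so that alternating parity of the slices lines up with which argument of $\EU$ is an atom versus a recursively built subformula; once that gadget is pinned down, the induction is routine.
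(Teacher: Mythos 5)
Your overall strategy matches the paper's: reduce from $\ASAGAPtwooutonein$ (the bounded-degree variant is indeed the right choice, since the $G^{\flat}$ gadget presupposes that every $\forall$-node has exactly two successors strung out as $v_{u,1},\hat v_{u,1},v_{u,2},\hat v_{u,2}$), build the Kripke model on $(V^{\flat},E^{\flat})$, and prove $\Model M,v\models\varphi_i\iff\apath_G(v,T)$ by induction on the slice. But the concrete formula template you commit to, $\varphi_i:=(\text{atom})\EU\varphi_{i+1}$, cannot work, and this is not a tuning issue with the atom assignment. By the semantics, $\psi\EU\varphi$ requires $\varphi$ at exactly \emph{one} witness point $\pi[k]$ and $\psi$ at all strictly earlier points. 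With the recursive subformula $\varphi_{i+1}$ sitting in the \emph{right} argument and only an atom on the left, the formula can never force $\varphi_{i+1}$ to hold at two distinct nodes of the path: a witnessing path through a $\forall$-gadget may simply choose $k$ so that $\pi[k]=v_{u,1}$ and ignore $v_{u,2}$ entirely. Your own intuition in the preceding paragraph — that the subformula which must ``survive along the path'' is the one that gets duplicated — is correct, but that surviving role is played by the \emph{left} argument of $\EU$, which contradicts the template you then write down.

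The paper's construction resolves exactly this: it uses two extra atoms per slice ($\hat s_{i+1}$ on the non-loop copies, $\hat e_{i+1}$ on the loop node $\hat v_{u,2}$) and sets
$\phi_i = s_i \EU \bigl((\hat s_{i+1}\EU\phi_{i+1})\EU\hat e_{i+1}\bigr)$,
so that the recursive part $\hat s_{i+1}\EU\phi_{i+1}$ occupies a left-argument position and must hold at every node of the prefix $v_{u,1},\hat v_{u,1},v_{u,2}$ ending at the $\hat e_{i+1}$-node; at $v_{u,1}$ and $v_{u,2}$ (where $\hat s_{i+1}$ is false) this collapses to $\phi_{i+1}$, which is the desired conjunction, while at $\hat v_{u,1}$ it is discharged by $\hat s_{i+1}$. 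The outer $s_i\EU(\cdots)$ then handles the existential step out of slice $i$, and an auxiliary claim (that $\phi_j$ is false on all slices above $j$) prevents the strict-prefix quantifier from being abused. So your proposal has the right reduction source, the right model, and the right inductive statement, but is missing the essential nesting idea — one $\EU$ per slice with atomic left argument does not suffice; you need the recursion nested inside the left argument of a second $\EU$ whose right argument is an atom marking the end of the gadget.
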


\begin{figure}
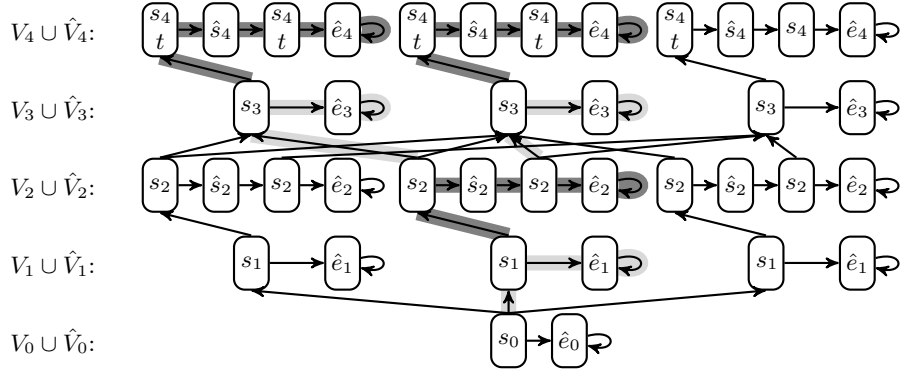

\EUfromGl
\caption{Example for the construction of $\KEU$ in the proof of Theorem~\ref{thm:EU}. The marked edges indicate the paths according to Claim~\ref{claim:EU-paths}.}
\label{fig:ex-EU}
\end{figure}

\begin{proof}
The upper bound $\P$ follows from \cite{clemsi86}.
For the lower bound---$\P$-hardness---we give a reduction from the $\P$-complete problem $\ASAGAPtwooutonein$.
Let $\langle G,s,T \rangle$ be an instance of $\ASAGAPtwooutonein$ with $G=(V,E)$ for $V=V_{\exists}\cup V_{\forall}$ with slices $V=V_0\cup\ldots\cup V_m$.
Let $G^{\flat}=(V^{\flat},E^{\flat})$ be the graph obtained from $G$ as described in Section \ref{subsubsec:Kripke}.
Using $G^{\flat}$, we construct a Kripke model $\KEU=(V^{\flat},E^{\flat},\xi)$ with assignment $\xi$ as follows (see Figure~\ref{fig:ex-EU} for an example).
\begin{enumerate}
\item $t$ is assigned to every node in $T$.
\item $s_i$ is assigned to every node in $V_i$ (for $i=0,1,\ldots,m$).
\item $\hat{s}_i$ is assigned to every node $v\in\hat{V}_i$ with $(v,v)\not\in E^{\flat}$ (for $i=0,1,\ldots,m$).
\item $\hat{e}_i$ is assigned to every node $v\in\hat{V}_i$ with $(v,v)\in E^{\flat}$ (for $i=0,1,\ldots,m$).
\end{enumerate}

The formulas $\phi_i$ are defined inductively for $i=m,m-1,\ldots,0$ as follows.
\begin{align*}
 \phi_i &:= 
\begin{cases}
 t, &\text{ if }i=m,\\
 s_i \EU ((\hat{s}_{i+1} \EU \phi_{i+1}) \EU \hat{e}_{i+1}),  & \text{ if $i<m$}.
\end{cases}
\end{align*}

The Kripke model $\KEU$ and the formulas $\phi_i$ are constructed in a way that simulates alternating graphs as follows.
Examples for the paths used in the following Claim are indicated by marked edges in Figure~\ref{fig:ex-EU}.

\begin{ownClaim}\label{claim:EU-paths}
\begin{enumerate}
\item Let $w\in V_i\cap V_{\exists}$ for some $i<m$.
Then $\KEU,w\models\phi_i$ if and only if
there exists a $\pi\in\Pi(w)$ with $\pi[2]\in V_{i+1}$
such that $\KEU,\pi[2]\models \phi_{i+1}$.
\item Let $w\in V_i\cap V_{\forall}$ for some $i<m$.
Then $\KEU,w\models\phi_i$ if and only if
there exists a $\pi\in\Pi(w)$ with $\pi[2],\pi[4]\in V_{i+1}$
such that $\KEU,\pi[2]\models \phi_{i+1}$ and $\KEU,\pi[4]\models \phi_{i+1}$.
\end{enumerate}
\end{ownClaim}

Now we only have to use the relation between $G$ and $G^{\flat}$.

\begin{ownClaim}\label{claim:EU-main}
For every $i\leq m$ and every $w\in V_i$ holds: $\KEU,w\models \phi_i$ if and only if $\apath_G(w,T)$.
\end{ownClaim}

The proofs of the above claims can be found in the Appendix.
With Claim~\ref{claim:EU-main} we get that $\langle G,s,T\rangle \in \ASAGAPtwooutonein$ if and only if $\KEU,s\models\phi_0$.
The $\CTLMC{\EU}$ instance $\langle \KEU,s, \phi_{0}\rangle$ can be computed in space logarithmic in the size of $G$.
Thus we have shown $\ASAGAPtwooutonein\leqlogm\CTLMC{\EU}$.\qed
\end{proof}

From Theorems \ref{thm:CTLMC} and \ref{thm:EU} we immediately get the 
complete characterization of the complexity of model checking for fragments with $\EU$---i.e., the model checking fingerprint of $\EU$.

\begin{theorem}\label{EU-fingerprint}
$\CTLMC{\EU,B}$ is $\P$-complete for every $B\subseteq\{\neg,\land,\lor,\oplus\}$.
\end{theorem}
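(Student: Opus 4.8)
The plan is to obtain this statement as an immediate corollary of Theorems~\ref{thm:CTLMC} and~\ref{thm:EU}, using the fact that the complexity of $\CTLMC{\cdot}$ is monotone in the set of available operators. First I would handle the upper bound: for every $B\subseteq\{\neg,\land,\lor,\oplus\}$ the operator set $\{\EU\}\cup B$ is a subset of $\{\EX,\ldots,\AR,\wedge,\vee,\neg,\oplus\}$, so every $(\{\EU\}\cup B)$-formula is in particular a well-formed \CTL formula. Hence $\CTLMC{\EU,B}$ is a restriction of $\CTLMC{\EX,\ldots,\AR,\wedge,\vee,\neg,\oplus}$ (the identity map is a logspace many-one reduction between the two), and the latter is in $\P$ by Theorem~\ref{thm:CTLMC}; therefore $\CTLMC{\EU,B}\in\P$.

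For the lower bound I would argue symmetrically from below: $\{\EU\}\subseteq\{\EU\}\cup B$ for every such $B$, so every $\{\EU\}$-formula is an $(\{\EU\}\cup B)$-formula, and the identity map witnesses $\CTLMC{\EU}\leqlogm\CTLMC{\EU,B}$. By Theorem~\ref{thm:EU}, $\CTLMC{\EU}$ is $\P$-hard, so $\CTLMC{\EU,B}$ is $\P$-hard as well. Combining the two bounds yields $\P$-completeness of $\CTLMC{\EU,B}$ for every $B\subseteq\{\neg,\land,\lor,\oplus\}$.

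I do not expect a genuine obstacle here: the entire mathematical content—that the Boolean operators are dispensable for $\P$-hardness—is already packaged in Theorem~\ref{thm:EU}. The only thing to be careful about is to phrase the two inclusions as genuine logspace many-one reductions (which they trivially are, being identity maps on instances), so that both the $\P$ membership and the $\P$-hardness transfer cleanly across all seven relevant Boolean clones, and in particular across every $B$ in the stated range.
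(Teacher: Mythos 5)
Your proposal is correct and matches the paper exactly: the paper derives this theorem as an immediate consequence of Theorems~\ref{thm:CTLMC} and~\ref{thm:EU}, with the upper bound inherited from full \CTL model checking and the lower bound from the $\P$-hardness of $\CTLMC{\EU}$ via the trivial inclusion of fragments. Your explicit remark that both inclusions are identity-map logspace reductions is just the careful phrasing of what the paper leaves implicit.
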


\section{The Remaining Existential Operators: Release $\ER$, Globally $\EG$, Next $\EX$, and Future $\EF$}\label{sec:remaining}
Let us first turn to the case of existential next $\EX$ as nothing has to be proven. Its model checking fingerprint actually is already known even though it is not always stated in the way we do it here.

\begin{theorem}\label{thm:ctlmc-EX}\label{EX-fingerprint}
Let $B\subseteq\{\neg,\land,\lor,\oplus\}$. Then $\CTLMC{\EX,B}$ is\vspace{-1.5ex}
\begin{enumerate}
\item $\P$-complete for $B\supseteq\{\neg\}$ or $B\supseteq\{\oplus\}$ \cite{sch02},
\item $\LOGCFL$-complete for $B=\{\land,\lor\}$ or $B=\{\land\}$ \cite{Beyer11}, and 
\item $\NL$-complete for $B\subseteq\{\lor\}$ (follows immediately from \cite[Theorem 3.3]{mc_survey_beatcs}).
\end{enumerate}
\end{theorem}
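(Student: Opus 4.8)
The statement is a compilation of known facts, so the plan is to quote the three sources and to supply short arguments for the few operator combinations they cover only implicitly. For the \emph{upper bounds}: membership in $\P$ for every $B$ is Theorem~\ref{thm:CTLMC}; membership in $\LOGCFL$ for $B=\{\land,\lor\}$ — and hence for its sub-problem $B=\{\land\}$ — is the upper bound already shown for $\CTLMC{\EX,\land,\lor}$ in~\cite{Beyer11}; and membership in $\NL$ for $B\subseteq\{\lor\}$ is \cite[Theorem~3.3]{mc_survey_beatcs}. The last case is easy to verify directly as well: a $\{\EX\}$-formula has the shape $\EX^{k}\alpha$ with $\alpha\in\PROP\cup\{\true\}$, so $\Model M,w\models\EX^{k}\alpha$ is just a reachability query with a step counter, and $\EX$ distributes over $\lor$, so an $\{\EX,\lor\}$-formula amounts to a disjunction of such queries.

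For the \emph{lower bounds}: the $\NL$-hardness for $B\subseteq\{\lor\}$ is contained in~\cite{mc_survey_beatcs} (the standard reduction from $\REACH$: pad the graph to a total one and ask $\EX^{n}$ for an $n$-node graph), and the $\LOGCFL$-hardness for $B=\{\land,\lor\}$ is in~\cite{Beyer11}. For the sharper claim that already $B=\{\land\}$ is $\LOGCFL$-hard I would reduce from the $\LOGCFL$-complete problem $\ASAGAPtwooutonein_{\log}$: given $\langle G,s,T\rangle$ with slices $V_0,\dots,V_m$, pass to the total graph $G^{\sharp}$, put an atom $t$ on every node of $T$, order the two successors of each $\forall$-node $u$ arbitrarily as $v_{u,1},v_{u,2}$, and put an atom $a$ on every node of the form $v_{u,1}$ and an atom $b$ on every node of the form $v_{u,2}$ — this is well defined because every successor of a $\forall$-node lies in $V_{\exists}-V_0$ and there has indegree $1$, hence is the first (respectively second) successor of exactly one $\forall$-node. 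Writing $\Model M$ for the resulting Kripke model, and recalling that even slices are $\exists$-slices and odd slices are $\forall$-slices, set
\[
 \phi_m:=t,\qquad
 \phi_i:=\begin{cases}
   \EX\,\phi_{i+1} & \text{$i<m$, $i$ even,}\\
   \EX(a\land\phi_{i+1})\land\EX(b\land\phi_{i+1}) & \text{$i<m$, $i$ odd.}
 \end{cases}
\]
Then $\Model M,s\models\phi_0$ iff $\apath_G(s,T)$; the whole instance is log-space constructible, and $|\phi_0|$ is at most $2^{m}\cdot O(1)$, which stays polynomial precisely because $m\le\log n$ — and this unavoidable doubling is exactly why $\{\EX,\land\}$ reaches only $\LOGCFL$ and not $\P$. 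For $B\supseteq\{\neg\}$ the $\P$-hardness is Schnoebelen's~\cite{sch02}, and it uses only $\EX$ and $\neg$: since $\AX\varphi\equiv\neg\EX\neg\varphi$, a depth-$m$ alternation $\EX\AX\EX\AX\cdots$ on a layered graph expresses alternating reachability and thus reduces a $\P$-complete alternating reachability problem such as $\ASAGAP$ — here no doubling is needed, because $\AX$ realises unbounded universal branching, unlike $\land$. Finally, for $B\supseteq\{\oplus\}$ I would reduce $\CTLMC{\EX,\neg}$ to $\CTLMC{\EX,\oplus}$: since $\true$ always belongs to the syntax, $\neg\varphi\equiv\varphi\oplus\true$, so replacing every negation preserves the truth value and produces an $\{\EX,\oplus\}$-formula.

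The only genuinely new — and still short — piece of work is the correctness invariant $\Model M,w\models\phi_i\iff\apath_G(w,T)$ for the $\{\EX,\land\}$-reduction: by downward induction on the slice index one checks that $\EX(a\land\cdot)$ singles out exactly the first successor of a $\forall$-node and $\EX(b\land\cdot)$ exactly the second, and that the $\exists/\forall$-parity of $\ASAGAPtwooutonein_{\log}$ matches the case split in $\phi_i$. This is the same flavour of argument as Claims~\ref{claim:EU-paths} and~\ref{claim:EU-main} in the proof of Theorem~\ref{thm:EU}, but considerably simpler, since here $\EX$ already \emph{is} the existential branching and $\land$ the bounded universal branching, so none of the auxiliary $G^{\flat}$-gadgetry is required. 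Everything else is bookkeeping over the seven Boolean clones and a matter of matching the exact operator sets to how the cited results are phrased.
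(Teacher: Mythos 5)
Your proposal is correct and matches the paper's treatment: the paper gives no proof of this theorem at all, simply attributing item (1) to \cite{sch02}, item (2) to \cite{Beyer11}, and item (3) to \cite{mc_survey_beatcs}, exactly as you do. The supplementary arguments you add (the $\neg\varphi\equiv\varphi\oplus\true$ reduction for the $\oplus$ case, the $\EX(a\land\cdot)\land\EX(b\land\cdot)$ reduction from $\ASAGAPtwooutonein_{\log}$ for $\LOGCFL$-hardness of $\{\EX,\land\}$, and the normal-form argument for $\{\EX,\lor\}\in\NL$) are all sound and go beyond what the paper records.
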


For the remainder of the results in this section we have to omit the proofs due to space constraints. However the details are all presented in the appendix. This section is structured as follows. We will state a model checking fingerprint theorem and then start to explain and discuss the results in order to give some intuition on the proof technique. Also we will mention some connections between the results, e.g., how the overall picture presents. Let us begin with the fingerprint of $\ER$.

\begin{theorem}\label{ER-fingerprint}
Let $B\subseteq\{\neg,\land,\lor,\oplus\}$.
Then $\CTLMC{\ER,B}$ is\vspace{-1.5ex}
\begin{enumerate}
\item $\P$-complete for $B\supseteq\{\lor\}$ or $B\supseteq\{\neg\}$ or $B\supseteq\{\oplus\}$,
\item $\LOGCFL$-hard for $B\supseteq\{\land\}$, and
\item $\LOGCFL$-complete for $B=\emptyset$.\label{itm:ER-empty}
\end{enumerate}
\end{theorem}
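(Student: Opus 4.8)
The plan is to establish the three items of Theorem~\ref{ER-fingerprint} separately, reusing as much of the $\EU$ machinery as possible. For item~(1), the key observation is that $\ER$ together with a Boolean operator from $\{\lor,\neg,\oplus\}$ should be strong enough to ``simulate'' the alternating-slice-graph construction from the proof of Theorem~\ref{thm:EU}. Recall the semantics $\Model M, w \models \psi\ER\varphi$ iff along some path, $\varphi$ holds everywhere until (and including the step after) $\psi$ first holds, or $\varphi$ holds globally; in particular $\EG\varphi \equiv \true\ER\varphi$ and, using a negation or xor to build a ``false-like'' guard on the left argument, one can also coerce $\ER$ toward $\EU$-style behaviour on suitably padded Kripke structures. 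I would adapt the $G^{\flat}$ transformation: the looping sink nodes $\hat v_{u,2}$ with self-loops are exactly the places where a global/release modality stabilises, so the nested-$\EU$ formula $\phi_i$ from Theorem~\ref{thm:EU} has a natural $\ER$-analogue where the release obligation is discharged at those loops. The case $B\supseteq\{\lor\}$ needs $\lor$ to combine the two $\forall$-successor branches (in the $\EU$ proof this combination was hidden inside the path structure of $G^{\flat}$); the cases $B\supseteq\{\neg\}$ and $B\supseteq\{\oplus\}$ follow because $\neg$ (resp.\ $\oplus$ with a constant, obtained since $\CTL$ formulas include $\true$) lets us recover $\EU$ from $\ER$ via the duality $\psi\AR\varphi\equiv\neg(\neg\psi\EU\neg\varphi)$ run in reverse, then invoke Theorem~\ref{thm:EU} and the fact (stated in the excerpt) that $\CTLMC{T}$ is $\P$-complete whenever $\EU\in T$.

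For item~(2), $\LOGCFL$-hardness of $\CTLMC{\ER,\land}$, I would reduce from $\ASAGAPtwooutonein_{\log}$, which is $\LOGCFL$-complete by the theorem in the preliminaries. The monotone fragment $\{\ER,\land\}$ lies inside the monotone $\CTL$ fragments studied by Beyer~\cite{Beyer11}, and since $\ER\in\{\AF,\EG,\AU,\EU,\AR,\ER\}$ — the ``hard'' singletons there — we already know $\CTLMC{\ER,\land,\lor}$ is $\P$-complete; the point of item~(2) is the weaker but still nontrivial lower bound with $\lor$ removed. On a graph of logarithmic depth, the nested formula $\phi_0$ has logarithmically many alternations of $\ER$ with $\hat s$-guards, and $\land$ suffices to combine the bounded number ($=2$) of successors of a $\forall$-node explicitly, so the reduction from Theorem~\ref{thm:EU} restricted to logarithmic depth, with $\lor$ replaced by an $\land$-based encoding that exploits $\ASAGAPtwooutonein$'s bounded degree, yields the $\LOGCFL$ lower bound. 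Care is needed that the translation stays in logarithmic space and that the formula depth is $O(\log n)$.

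For item~(3), the $\LOGCFL$-completeness of $\CTLMC{\ER}$ with no Boolean operators, the lower bound is immediate from item~(2) once one checks that on logarithmic-depth instances the construction uses no $\land$ at all — indeed with bounded outdegree the $\forall$-branching can be pushed into the Kripke structure exactly as in the $G^{\flat}$ construction of Theorem~\ref{thm:EU}, where no Boolean operator appears either. So $\CTLMC{\ER}$ is $\LOGCFL$-hard by restricting the $\ER$-analogue of the proof of Theorem~\ref{thm:EU} to $\ASAGAPtwooutonein_{\log}$. For the matching upper bound, I would give a direct $\LOGCFL$ algorithm: a pure-$\ER$ formula is a linear nesting $\psi_1\ER(\psi_2\ER(\cdots))$ where each $\psi_j$ is an atom or $\true$, so model checking amounts to a single guessed path through the Kripke model together with a bounded-depth (logarithmic, if we restrict, but in general we need the generic argument) verification of the release conditions; this is naturally a logspace nondeterministic computation augmented with a stack running in polynomial time — i.e.\ $\LOGCFL$ — using the characterisation of $\LOGCFL$ via $\AuxPDA$. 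The main obstacle I anticipate is the $\LOGCFL$ \emph{upper} bound for unbounded-depth pure-$\ER$ formulas: one must argue that the path-plus-release verification can be organised so that the stack height and running time stay within the $\LOGCFL$ budget, which likely requires observing that a pure-$\ER$ formula, having no conjunction or disjunction, forces the witnessing path to be essentially determined slice by slice, so that the recursion on subformulas is tail-like and a polynomial-time $\AuxPDA$ suffices. Verifying the space/time bookkeeping of that recursion, and confirming that the $\ER$-analogue of the $\phi_i$ construction genuinely avoids all Boolean operators, are the steps I expect to be most delicate.
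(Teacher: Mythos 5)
The central gap is in item~(1) for $B\supseteq\{\neg\}$ (and hence for $B\supseteq\{\oplus\}$, which you reduce to it): the duality argument you invoke does not work. Running $\psi\AR\varphi\equiv\neg(\neg\psi\EU\neg\varphi)$ ``in reverse'' gives $\psi\EU\varphi\equiv\neg(\neg\psi\AR\neg\varphi)$, i.e.\ it expresses $\EU$ via $\AR$, not via $\ER$. The dual of $\ER$ is $\AU$, so from $\{\ER,\neg\}$ you only obtain $\AU$ --- and by the very hierarchy this paper establishes, $\CTLMC{\AU}$ has the same mc-strength as $\CTLMC{\ER}$, namely $\LOGCFL$-completeness, so no $\P$-hardness follows from this simulation. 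The paper instead proves $\CTLMC{\ER,\neg}$ $\P$-hard by a direct reduction from $\ASAGAP$ over $G^{\sharp}$, assigning each atom $s_i$ exactly to slices $i$ and $i+1$ so that $\phi_{i+1}\ER s_i$ behaves like $\EX\phi_{i+1}$ and $\neg(\neg\phi_{i+1}\ER s_i)$ behaves like $\AX\phi_{i+1}$; alternating these two patterns simulates the $\exists/\forall$ alternation of the graph. Some construction of this kind is indispensable here, and your proposal does not contain one.

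There are two further genuine problems in item~(3). For the upper bound, your structural claim that a pure-$\ER$ formula is ``a linear nesting $\psi_1\ER(\psi_2\ER(\cdots))$ where each $\psi_j$ is an atom or $\true$'' is false: the left argument of each $\ER$ may itself be an arbitrary $\{\ER\}$-formula (e.g.\ $(a\ER b)\ER((c\ER d)\ER(e\ER f))$), so the formula is a tree, the verification genuinely branches (one must check $\alpha_1$ \emph{and} the remaining right form at the endpoint of the guessed path), and the recursion is not tail-like --- this is exactly why the auxiliary pushdown is needed. The paper's membership proof rests on two semantic lemmas about ``atomic right forms'' $\langle\alpha_1,\ldots,\alpha_m,\beta\rangle$ with $\beta$ an atom, reducing satisfaction to guessing a path of length at most $|W|+1$ on which $\beta$ holds everywhere and then recursing at its last node; without those lemmas the $\LOGCFL$ bound is not established. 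For the lower bound, you assert that the $G^{\flat}$ path-encoding of $\forall$-branching carries over to pure $\ER$ ``exactly as in'' the $\EU$ proof, but the paper needs $\lor$ for precisely that $G^{\flat}$-based construction (to exempt the intermediate copy node $\hat{v}_{u,1}$ from the release obligation); for Boolean-free $\ER$ it switches to $G^{\sharp}$ and the formula $(\phi_{i+1}\ER s^r_{i})\ER(\phi_{i+1}\ER s^l_{i})$ with atoms $s^l_i,s^r_i$ distinguishing the two successors of each $\forall$-node. Your sketch supplies a working substitute for neither device.
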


The $\P$-completeness of $\CTLMC{\ER,\land,\lor}$ is shown in \cite{Beyer11}.
We improve this result by showing $\P$-hardness already for $\CTLMC{\ER,\lor}$. The optimality of this hardness result is witnessed by the $\LOGCFL$-completeness of $\CTLMC{\ER}$. Also observe that this shows that $\ER$ is not as powerful as $\EU$. Our results are completed by the $\P$-hardness of $\CTLMC{\ER,\lnot}$. Concluding, this shows that $\ER$ is strictly simpler than $\EU$ (unless $\LOGCFL=\P$). 

Let us now consider the case of $\ER$ with $\neg$. If $s$ is an atom that is satisfied only by a node $w$ and all its successors in a Kripke model $K$, and no successor of $w$ satisfies $s$, then $K,w \models \alpha \ER s$ if and only if $K,w \models \EX \alpha$. One can use this idea to translate the $\P$-hardness proof of $\CTLMC{\EX,\neg}$ to a $\P$-hardness proof of $\CTLMC{\ER,\neg}$. 

We settled complete characterizations of the complexity of $\CTLMC{\ER,B}$ for fragments with $\ER$ as only \CTL operator for all $B\subseteq\{\neg,\land,\lor,\oplus\}$ except $B=\{\land\}$.
For $\CTLMC{\ER,\land}$ we have $\LOGCFL$-hardness and containment in $\P$ (follows from \cite{Beyer11}).
A result with matching upper and lower bounds yet remains open.\medskip

Turning to the case of existentially globally operator $\EG$, interestingly, this operator combines an existential and universal quantification in a single operator. This is worth noting as it proves  itself as powerful operator from complexity point of view. 

\begin{theorem}\label{EG-fingerprint}
Let $B\subseteq\{\neg,\land,\lor,\oplus\}$.
Then $\CTLMC{\EG,B}$ is\vspace{-1.5ex}
\begin{enumerate}
\item $\P$-complete for $B\supseteq\{\land,\lor\}$ or $B\supseteq\{\oplus\}$, and
\item $\NL$-complete for $B\subseteq\{\land\}$ or $B\subseteq\{\lor\}$ or $B\subseteq\{\neg\}$.
\end{enumerate}
\end{theorem}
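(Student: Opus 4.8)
The statement splits into an easy upper-bound part—everything is in $\P$ by Theorem~\ref{thm:CTLMC}, plus the three $\NL$ cases—and a lower-bound part, and I would argue it clone by clone, noting that up to Post-clone equivalence the listed cases are exhaustive: a functionally complete $B$ (such as $\{\lnot,\land\}$) falls under $B\supseteq\{\oplus\}$, and $\{\land,\lor\}$, $\{\oplus\}$, $\{\land\}$, $\{\lor\}$, $\{\neg\}$ together with the identity clone cover the seven clones relevant for model checking.

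\textbf{The $\NL$ cases.}
Membership in $\P$ is Theorem~\ref{thm:CTLMC}. For $\CTLMC{\EG,\land}$, $\CTLMC{\EG,\lor}$ and $\CTLMC{\EG,\neg}$ I would show that these fragments cannot build the alternation that makes model checking $\P$-hard. The tool is a small set of equivalences: $\EG\EG\varphi\equiv\EG\varphi$, $\EG(\varphi\land\EG\psi)\equiv\EG(\varphi\land\psi)$ (and its $\lor$-dual), and $\lnot\lnot\varphi\equiv\varphi$. Using them, every $\{\EG,\land\}$-formula rewrites to a conjunction of atoms and formulas $\EG(\text{conjunction of atoms})$; the $\{\EG,\lor\}$-case is treated analogously, pushing the evaluation down to a bounded number of reachability questions; and every $\{\EG,\neg\}$-formula—being a single string $O_1\cdots O_k\,p$ of $\EG$'s and $\neg$'s—reduces to one of finitely many shapes $\lnot^a\EG\lnot\EG\cdots\EG\lnot^b p$, which I would show collapses after a bounded number of $\EG$-layers (the relevant monotone fixpoint iteration on a finite total transition graph is eventually periodic with bounded pre-period). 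Evaluating a bounded number of nested $\EG$/$\AF$-layers amounts to a bounded number of nested reachability computations of the form ``can $v$ reach, staying inside a set, a cycle inside that set'', hence lies in $\NL$. The matching lower bound already holds for $\CTLMC{\EG}$: from an instance $\langle G,s,t\rangle$ of the $\NL$-complete problem $\REACH$ with $G$ acyclic, add a self-loop at $t$, redirect every other sink to a fresh state $d$ carrying a self-loop, and label every state by an atom $p$ except $d$; then $\EG p$ holds at $s$ iff $s$ reaches $t$ in $G$. This reduction is logspace, and since $\CTLMC{\EG}$ is contained in each of the three fragments, all three are $\NL$-hard.

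\textbf{The $\P$ cases and the main obstacle.}
For $B\supseteq\{\land,\lor\}$ the $\P$-completeness is already in \cite{Beyer11} ($\CTLMC{\EG,\land,\lor}$ is $\P$-complete), so the remaining task—and the main obstacle—is the $\P$-hardness of $\CTLMC{\EG,\oplus}$. I would reduce from $\ASAGAPtwooutonein$ along the blueprint of the proof of Theorem~\ref{thm:EU}. The fact to exploit is that $\EG$ is \emph{conjunctive along a path}: $K,w\models\EG\varphi$ asserts $\varphi$ at \emph{every} state of some infinite path. Hence, after transforming $G$ in the spirit of $G^{\flat}$—routing the two successors $v_{u,1},v_{u,2}$ of a $\forall$-node $u$ onto one path $v_{u,1},\hat v_{u,1},v_{u,2},\hat v_{u,2}$ ending in a loop, giving $\forall$- and start-nodes an extra loop-branch, and totalising as with $G^{\sharp}$—a single $\EG$ read along such a path can demand the recursion result at \emph{both} children, which is exactly universal branching, while existential branching is again just the choice of a successor, as in the $\EU$ proof. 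I would attach marker atoms that distinguish ``check'' states from ``filler'' states slice by slice and flag the loops, define formulas $\phi_m,\dots,\phi_0$ of the shape $\phi_i=\EG(\cdots\oplus\cdots\oplus\phi_{i+1})$, prove an analogue of Claim~\ref{claim:EU-paths} characterising exactly which infinite paths witness $\phi_i$ at $\exists$- and at $\forall$-nodes, then an analogue of Claim~\ref{claim:EU-main} stating $K,w\models\phi_i$ iff $\apath_G(w,T)$, and finally observe the construction is logspace, so $\ASAGAPtwooutonein\leqlogm\CTLMC{\EG,\oplus}$. The delicate point—where I expect the real work—is the Boolean layer: since $\varphi$ occurs only positively inside $\EG\varphi$ (a greatest fixpoint), $\EG$ alone cannot express the ``guarded'' subcheck that must hold at a marked state but vacuously everywhere else on the path, and $\oplus$ together with the marker atoms has to supply precisely this if-then control while, crucially, preventing a path from cheating by escaping into one of the added self-loops before the intended checks are made; getting the parity bookkeeping of the nested $\oplus$'s to behave exactly like those guards, for every shape of sub-path, is the crux. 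Once $\CTLMC{\EG,\oplus}$ is $\P$-hard, $\P$-completeness follows for every $B$ whose clone contains $\oplus$ (with Theorem~\ref{thm:CTLMC} for the upper bound), which together with the cases above yields the stated classification.
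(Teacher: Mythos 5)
Your overall decomposition and most of the individual steps match the paper's proof, but two of your steps contain genuine problems. First, the claimed ``$\lor$-dual'' of $\EG(\varphi\land\EG\psi)\equiv\EG(\varphi\land\psi)$ is false: $\EG(\varphi\lor\EG\psi)\not\equiv\EG(\varphi\lor\psi)$. Take states $a\to b$ with a self-loop on $b$, where $a$ satisfies only $q$ and $b$ satisfies only $p$; then $\EG(p\lor q)$ holds at $a$, but $\EG(p\lor\EG q)$ fails at $a$ since neither $p$ nor $\EG q$ holds there (only the direction $\EG(\varphi\lor\EG\psi)\Rightarrow\EG(\varphi\lor\psi)$ is valid, because $\EG\psi$ implies $\psi$). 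Consequently $\{\EG,\lor\}$-formulas cannot be flattened to a single $\EG$ over a disjunction of atoms, and ``a bounded number of reachability questions'' does not go through as stated. The paper instead keeps the nested normal form $\alpha\lor\bigvee_{\ell}\EG\,\beta_{\ell}$ with each $\beta_{\ell}$ again of that form, characterizes satisfaction of the outer $\EG$ by a finite path satisfying $\alpha$ up to a point where some $\EG\,\beta_q$ takes over, and evaluates this by a tail recursion whose depth is the nesting depth of the formula---still $\NL$, but by a different and necessary argument. Relatedly, for $\{\EG,\neg\}$ the collapse to boundedly many alternations is exactly the model-independent absorption law $\EG\,\AF\,\EG\,\alpha\equiv\AF\,\EG\,\alpha$ (and its dual); your appeal to eventual periodicity of a fixpoint iteration on the given finite model does not by itself give a uniform bound on the pre-period, so this lemma has to be proved directly (the paper does so by splicing a path witnessing $\EG\,\alpha$ onto a prefix witnessing $\AF\,\EG\,\alpha$).

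Second, for the one genuinely new hardness result of the theorem, the $\P$-hardness of $\CTLMC{\EG,\oplus}$, you describe the correct reduction source ($\ASAGAPtwooutonein$), the correct graph transformation ($G^{\flat}$), and the correct shape of formulas $\varphi_i=\EG(\cdots\oplus\varphi_{i+1})$, but you explicitly defer ``the crux'': how the $\oplus$-combination of marker atoms forces a witnessing path to visit both successors of a $\forall$-node and to certify $\varphi_{i+1}$ exactly at the two real successor states while being vacuously satisfied at the filler and loop states. That is where essentially all of the work lies. The paper's choice is $\varphi_i=\EG(s_i\oplus s_{i+2}\oplus\hat{s}_i\oplus\hat{s}_{i+1}\oplus\varphi_{i+1})$ with $s_i$ assigned to all of slice $i$ and $\hat{s}_i$ to the copy nodes, and the verification needs several auxiliary claims (no $\varphi_j$ with $j>i$ or $j\le i-2$ holds in slice $i$; copy nodes never satisfy $\varphi_i$; the loop nodes $\hat{v}_{u,2}$ of slice $i$ satisfy $\varphi_{i-1}$) before the parity bookkeeping works out along every possible infinite path. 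As written, your argument establishes the theorem only modulo this missing construction; everything else (the $\NL$-hardness reduction, the $\{\EG,\land\}$ normal form, the citation of \cite{Beyer11} for $\{\land,\lor\}$, and the clone-based exhaustiveness of the case distinction) is sound and agrees with the paper.
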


It was shown in \cite{Beyer11} that $\CTLMC{\EG,\land,\lor}$ is $\P$-complete. We prove that this result is optimal by showing that $\CTLMC{\EG,\land}$ and $\CTLMC{\EG,\lor}$ are both $\NL$-complete. Further we obtain the same characterization for $\CTLMC{\EG}$ and $\CTLMC{\EG,\lnot}$. The most intriguing result is the $\P$-completeness of the fragment $\CTLMC{\EG,\oplus}$. One can reduce from $\ASAGAPtwooutonein$ but it is quite demanding to explicitly argue on the chosen paths depending on the occurring exclusive-ors $\xor$. From the model construction one can easily see similarities to the one shown in Figure~\ref{fig:ex-EU} however one needs additional propositions labelled on the states to ensure having control on the paths. 

\begin{theorem}\label{EF-fingerprint}
Let $B\subseteq\{\neg,\land,\lor,\oplus\}$.
Then $\CTLMC{\EF,B}$ is\vspace{-1.5ex}
\begin{enumerate}
\item $\P$-complete for $B\supseteq\{\land,\oplus\}$, 
\item $\AC1$-hard for $B\supseteq\{\oplus\}$, and
\item $\LOGCFL$-complete for $\{\land\}\subseteq B\subseteq\{\land,\lor\}$, and
\item $\NL$-complete for $B\subseteq\{\lor\}$ or $B\subseteq\{\neg\}$.
\end{enumerate}
\end{theorem}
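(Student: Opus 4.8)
The plan is to establish four separate complexity bounds, matching the structure of the statement. For the upper bounds, I would first recall that $\CTLMC{\EX,\ldots,\AR,\wedge,\vee,\neg,\oplus}$ is in $\P$ (Theorem~\ref{thm:CTLMC}), which immediately gives membership in $\P$ for items~(1)--(3). The $\AC1$ upper bound needed for item~(3) in the $\{\land,\lor\}$ case and the $\NL$ upper bounds for item~(4) require an analysis of how far one can climb down the clone lattice: with only $\lor$ (or only $\neg$, which over a single $\EF$ collapses since $\neg\neg$ cancels and $\EF$ is monotone), a bottom-up labelling of the Kripke model can be carried out by an $\NL$ machine, essentially because checking $\EF\psi$ reduces to a reachability query $\REACH$ once the set of $\psi$-states is available, and $\lor$ does not create nesting difficulties; here I would invoke \cite[Theorem 3.3]{mc_survey_beatcs} as the analogous result already does for $\EX$. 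For item~(3) with $\{\land,\lor\}$, the $\LOGCFL$ upper bound should follow from the same machinery used in \cite{Beyer11} for monotone fragments.

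For the lower bounds I would proceed from strongest to weakest. The $\P$-hardness in item~(1) for $B\supseteq\{\land,\oplus\}$ is the heart of the theorem: since $\{\land,\oplus\}$ is a base for all Boolean functions, one can in principle write any Boolean combination, but the catch is that $\EF$ is a very weak temporal operator (it cannot count steps or enforce universal branching), so the reduction must encode the alternation of an $\ASAGAP$-type instance purely through the graph structure and the Boolean glue. I would reduce from $\ASAGAPtwooutonein$ (or its $\log$-depth variant for the weaker bounds), building a Kripke model from $G^{\flat}$ or $G^{\sharp}$ in the spirit of the $\EU$ construction (Theorem~\ref{thm:EU}), but replacing the nested-$\EU$ gadget by an $\EF$-with-$\xor$ gadget: the $\xor$ is used to extract ``the first reachable state carrying a distinguished label'', simulating the step-by-step descent through slices that $\EU$ gave for free. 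The formulas $\phi_i$ would again be defined inductively over slices, now of the shape roughly $\EF\bigl((\text{slice-marker}_{i+1}\land\phi_{i+1})\oplus(\text{something that cancels spurious witnesses})\bigr)$, and one proves by induction on $i$ that $\KEF,w\models\phi_i$ iff $\apath_G(w,T)$, via a paths-claim analogous to Claim~\ref{claim:EU-paths} describing exactly which infinite paths witness satisfaction.

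The $\AC1$-hardness in item~(2) for $B\supseteq\{\oplus\}$ alone (no $\land$) would come from the same construction applied to $\ASAGAP_{\log}$, which is $\AC1$-complete: with only $\xor$ available the Boolean layer is too weak to simulate unbounded alternation, but bounded (logarithmic) alternation still goes through, since the inductive depth of the $\phi_i$ is then $O(\log n)$. Symmetrically, the $\LOGCFL$-hardness in item~(3) for $B\supseteq\{\land\}$ is exactly \cite{Beyer11}'s result that $\CTLMC{\EF,\land}$ is $\LOGCFL$-complete (a reduction from $\ASAGAPtwooutonein_{\log}$), and the $\NL$-hardness in item~(4) is the trivial reduction from $\REACH$ using a single $\EF$. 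The main obstacle I expect is the $\P$-hardness of item~(1): making the $\xor$-gadget robustly extract the correct witness state despite $\EF$'s inability to bound the path length — one must ensure the parity trick is not defeated by paths that loop or overshoot, which is precisely the kind of delicate path-bookkeeping the paper flags as ``technically more involved'' for $\CTLMC{\EG,\oplus}$, and the $\EF$ case inherits the same difficulty.
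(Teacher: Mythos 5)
Your overall decomposition into four bounds and your choices of source problems are right, and items~(3) and the $\lor$-half of item~(4) follow essentially the paper's route, but there are genuine gaps in the two hard cases. First, for both $\P$-hardness with $\{\land,\oplus\}$ and $\AC1$-hardness with $\{\oplus\}$ alone, the missing idea is how an $\{\EF,\oplus\}$-formula enforces \emph{universal} branching at $\forall$-nodes: $\EF$ is purely existential, and your sketch never explains how the $\oplus$-gadget achieves this. The key device is a fresh atom $s$ satisfied in every state of the model, whence $s\oplus\EF(s\oplus\alpha)\equiv\neg\EF\neg\alpha\equiv\AG\alpha$; that is, $\oplus$ together with a globally true atom simulates negation and hence $\AG$. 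With that observation item~(1) needs no new reduction at all: $\{\EF,\land,\oplus\}$-formulas can express all $\{\EF,\AG,\land,\lor\}$-formulas, so $\P$-hardness is inherited directly from the known $\P$-completeness of $\CTLMC{\EF,\AG,\land,\lor}$ \cite{Beyer11}. Second, your plan for item~(2) --- ``the same construction applied to $\ASAGAP_{\log}$'' --- cannot work as stated, because the item-(1) construction you sketch uses $\land$, which is unavailable. The $\{\EF,\oplus\}$ reduction is a genuinely different and far more delicate object: each $\varphi_i$ is $\EF$ or simulated $\AG$ of a large XOR of layer-marker atoms and of \emph{all} previous formulas $\varphi_0,\ldots,\varphi_{i-1}$, with parity-defined index sets $\alpha(i)$ and auxiliary chains of nodes $a_i,b_i$ whose sole purpose is to pin down the truth values of the $\varphi_j$ off their intended layers. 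Your explanation of why one only obtains $\AC1$ rather than $\P$ is also off target: it is not that $\oplus$ is ``too weak to simulate unbounded alternation'' semantically, but that $|\varphi_{i+1}|\leq 2|\varphi_i|+1$ because each formula contains all of its predecessors, so the reduction stays polynomial-size (and logspace-computable) only when the depth is logarithmic.

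A smaller point on item~(4): for $B=\{\neg\}$ a generic ``bottom-up labelling'' is not obviously an $\NL$ algorithm, since the label sets cannot be stored in logarithmic space and iterating reachability queries through polynomially many alternations of $\EF$ and $\AG$ is not known to stay inside $\NL$. The proof instead needs the collapse $\EF\,\AG\,\EF\,\AG\,\alpha\equiv\EF\,\AG\,\alpha$ (together with idempotence of $\EF$ and $\AG$), so that every $\{\EF,\neg\}$-formula is equivalent to one whose temporal prefix has length at most three, after which a constant-depth composition of $\NL$ routines (using closure of $\NL$ under complement) suffices. Your argument for $B\subseteq\{\lor\}$ via $\EF(\alpha\lor\EF\beta)\equiv\EF(\alpha\lor\beta)$ and a single reachability query is fine.
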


In \cite{Beyer11} it is shown that $\CTLMC{\EF,\wedge,\vee}$ is $\LOGCFL$-complete. Since their hardness proof does not use $\vee$, it follows that $\CTLMC{\EF,\wedge}$ is $\LOGCFL$-complete, too.
Moreover, $\CTLMC{\EF,\AG,\wedge,\vee}$ is shown to be $\P$-complete in \cite{Beyer11}; we get $\P$-completeness for $\CTLMC{\EF,\wedge,\oplus}$. We classified almost all remaining cases.
We show that the cases $\CTLMC{\EF}$, $\CTLMC{\EF,\neg}$, and $\CTLMC{\EF,\vee}$ are all $\NL$-complete.
However the most interesting result is the $\AC1$-hardness of $\CTLMC{\EF,\oplus}$. There are only very few problems known for which $\AC1$ is the best shown lower bound. In fact, Cook~\cite{Cook85} asks for natural $\AC1$-complete problems, i.e., problems where the $\AC1$-completeness is not forced by some logarithmic bounds in the problem definition. %($\ASAGAP_{\log}$ is an example for a non-natural $\AC1$-complete problem.)
Chandra and Tompa~\cite{ChandraT90} show an $\AC1$-complete two-person-game that has $\AC1$ as a straightforward upper bound and continue to ask for ``less straightforward'' $\AC1$-complete problems.
One such problem is the model checking problem for intuitionistic logic with one atom \cite{MW13}.
The model checking problem for the $\{\EF,\oplus\}$-fragment is a very hot candidate.
Anyway, it seems to be a very challenging question to show whether this problem belongs to Cook's list.

%\input{EF}
%
%\input{EX}

%\begin{figure}
%\CTLResultate
%\caption{Overview of complexity results -- the fingerprints of the \CTL operators.}
%\label{fig:complexity-overview}
%\end{figure}

%--------------------------------------------------------

%\section{LTL}
%
%\input{LTL-G}
%
%\input{LTL-X}

\section{Conclusion}
\label{sec:conclusion}

In this paper we aimed to present a complete complexity classification of all 
fragments of \CTL with one \CTL operator and arbitrary Boolean functions.
An overview of the complexity results is given in Figure~\ref{fig:complexity-overview}.
We stated all our results for \CTL operators that start with the existential path quantifier $\E$.
But our classification easily generalizes to the remaining \CTL operators
starting with the universal path quantifier $\A$ through the well-known dualities.
%
% given in Table~\ref{tbl:duals}.
%
%\begin{table}
%\centering
%$
%\begin{array}{cc} \toprule
%\text{\CTL operators $T$} & \text{and their duals}\\   \midrule
%\EX & \AX\\
%\EF & \AG\\
%\EG & \AF\\
%\ER & \AU\\
%\EU & \AR\\ \bottomrule
%\end{array}
%\qquad
%\begin{array}{cc} \toprule
%\text{Boolean clones $B$} & \text{and their duals} \\   \midrule
%\neg & \neg \\
%\wedge & \vee \\
%\wedge,\vee & \wedge,\vee \\
%\oplus & \oplus \\
%\wedge,\oplus & \wedge,\oplus \\ \bottomrule
%\end{array}
%$
%\caption{CTL operators and clones, and their duals.} \label{tbl:duals}
%\end{table}
%
Simply said, if $\CTLMC{T,B}$ is complete (resp.~hard) for a complexity class $\mathcal C$,
then $\CTLMC{\dual{T},\dual{B}}$ is complete (resp.~hard) for the complement $\text{co-}\mathcal C$ of $\mathcal C$.
Thus, e.g., from the $\AC1$-hardness of $\CTLMC{\EF,\oplus}$ (Theorem~\ref{EF-fingerprint}) we immediately obtain $\AC1$-hardness of $\CTLMC{\AG,\oplus}$, and from $\LOGCFL$-complete\-ness of $\CTLMC{\EX,\land}$ (Theorem~\ref{thm:ctlmc-EX}) we obtain $\LOGCFL$-completeness of the corresponding $\CTLMC{\AX,\lor}$.
Our results can directly be rewritten to deal not only with Boolean operators but in a more generalized view with Boolean clones as, e.g., in the work of Bauland et~al.\ and Beyersdorff et~al.~\cite{BaulandLTL11,Beyer11}.

The only open cases for which we yet cannot  prove matching upper and lower bounds are $\CTLMC{\ER,\land}$ and $\CTLMC{\EF,\xor}$ and, of course, their duals $\CTLMC{\AU,\lor}$ and $\CTLMC{\AG,\xor}$.
Although we could not achieve an $\AC1$ upper bound for $\CTLMC{\EF,\xor}$, we are convinced that such a result seems closer than proving $\P$-hardness (or some stronger hardness result than $\AC1$). 

Our classifications can be applied to compare the expressiveness of single \CTL operators with respect to the complexity of the induced model checking problems.

\begin{definition}
Let $S$ and $T$ be a set of \CTL operators. We say that $T$ is \emph{mc-stronger} than $S$ (abbreviated as $S\morepowerfull T$), if for all sets $B$ of Boolean functions holds $\CTLMC{S,B} \leqlogm \CTLMC{T,B}$.
\end{definition}

The reflexive and transitive relation $\morepowerfull$ for mc-strength compares what we informally called the model checking fingerprints of \CTL operators. 
Our fingerprint theorems (Theorems~\ref{EU-fingerprint}--\ref{EF-fingerprint}) yield the hierarchy of mc-strength of \CTL operators shown in Figure~\ref{fig:OperatorHierarchie}.
The notion of mc-strength generalizes the notion of expressiveness \cite{laroussinie95} of \CTL operators. For example, since $\EG\alpha\equiv 0 \ER \alpha$,
$\ER$ is more expressive than $\EG$. With our notion we obtain also $\EG\morepowerfull \ER$.
But our notion yields more information about differences between several operators. For example, $\EX$ and $\EU$ have incomparable expressiveness, but we obtain $\EX \morepowerfull \EU$.

A strength-relation like $\morepowerfull$ can also be defined with respect to the satisfiability problem---call it \emph{sat-strength}.
Whereas for model checking the set of \CTL operators is partitioned into seven sets with different mc-strength (see Figure~\ref{fig:OperatorHierarchie}), from~\cite{mmtv09,meier11} it follows that the comparison by sat-strength yields only the following three partitions with increasing strength:
$\{\AF,\EG\}$, $\{\EX,\AX,\EF,\AG\}$, and $\{\AU,\EU,\ER,\AR\}$.
The three notions expressiveness, sat-strength, and mc-strength intuitively compare as follows.
Expressiveness relies on \emph{equivalence} of formulas, sat-strength relies on \emph{equisatisfiability} of formulas, and mc-strength on \emph{equisatisfaction} of model checking instances.

Further work should solve the exact complexity of $\CTLMC{\EF,\oplus}$, which seems to be a very challenging problem.
Moreover, one should study the mc-strength of other temporal logics or of pairs of \CTL operators.

\bibliographystyle{plain}
\bibliography{mc-jena}

\section{Appendix}

\subsection{$\EU$}

\textbf{Proofs for Theorem \ref{thm:EU}:}
\emph{$\CTLMC{\EU}$ is $\P$-complete.}
\vspace*{2ex}

The basic semantical property of $\EU$ that we will use is
\begin{multline}\label{EUprop}
K,w \models \alpha \EU \beta \text{ if and only if } \\
  (i)~K,w \models \beta \text{ or } 
  (ii)~ K,w \models \alpha \text{ and } K,v\models \alpha\EU \beta \text{ for a successor $v$ of $w$.} 
\end{multline}

\begin{ownClaim}\label{claim:eu-upper}
For all $i<m$, all nodes $w\in V^{\flat}_i$, and all $j>i$  holds: $\KEU,w \nmodels \phi_{j}$.
\end{ownClaim}

Proof. Let $i<m$ and $w\in V^{\flat}_i$.
We proceed by induction on $j=m,m-1,\ldots,i+1$.
The base case is clear since $t\not\in\xi(w)$ and thus $\KEU,w \nmodels t (=\phi_{m})$.
For $j<m$, we have $\KEU,w\nmodels \phi_{j+1}$ as inductive hypothesis.
Assume $\KEU,w\models s_j \EU ((\hat{s}_{j+1} \EU \phi_{j+1}) \EU \hat{e}_{j+1}) ~(=\phi_j)$.
%Since $i<j$  we have $s_j, \hat{s}_{j+1}, \hat{e}_{j+1}\not\in\xi(w)$.
The next steps use (\ref{EUprop}).
From $s_j\not\in\xi(w)$ follows $\KEU,w\models (\hat{s}_{j+1} \EU \phi_{j+1}) \EU \hat{e}_{j+1}$.
From $\hat{e}_{j+1}\not\in\xi(w)$ then follows $\KEU,w\models \hat{s}_{j+1} \EU \phi_{j+1}$,
and from $\hat{s}_{j+1}\not\in\xi(w)$ we conclude $\KEU,w\models \phi_{j+1}$.
This contradicts the inductive hypothesis.
Thus $\KEU,w\nmodels\phi_j$.
\claimqed

The Kripke model $\KEU$ and the formulas $\phi_i$ are constructed
in a way that simulates alternating graphs as follows.
\vspace*{2ex}

\noindent
\textbf{Claim \ref{claim:EU-paths}:}
\vspace{-4ex}

\emph{
\begin{enumerate}
\item Let $w\in V_i\cap V_{\exists}$ for some $i<m$.
Then $\KEU,w\models\phi_i$ if and only if
there exists a $\pi\in\Pi(w)$ with $\pi[2]\in V_{i+1}$
such that $\KEU,\pi[2]\models \phi_{i+1}$.
\item Let $w\in V_i\cap V_{\forall}$ for some $i<m$.
Then $\KEU,w\models\phi_i$ if and only if
there exists a $\pi\in\Pi(w)$ with $\pi[2],\pi[4]\in V_{i+1}$
such that $\KEU,\pi[2]\models \phi_{i+1}$ and $\KEU,\pi[4]\models \phi_{i+1}$.
\end{enumerate}
}
\vspace*{2ex}

Proof of (1).
For the proof direction from left to right, assume $\KEU,w\models\phi_i$.
Since $s_i\in\xi(w)$ and $s_i\not\in\xi(v)$ for all successors $v$ of $w$, 
it follows from (\ref{EUprop}) that $\KEU,v\models (\hat{s}_{i+1} \EU \phi_{i+1}) \EU \hat{e}_{i+1}$
for some successor $v$ of $w$.
For $v'\in V^{\flat}_i$ holds $\hat{s}_{i+1}, \hat{e}_{i+1}\not\in \xi(v)$ and $\KEU,v'\nmodels \phi_{i+1}$ (Claim \ref{claim:eu-upper}),
and using (\ref{EUprop}) we get $\KEU,v'\nmodels (\hat{s}_{i+1} \EU \phi_{i+1}) \EU \hat{e}_{i+1}$.
Thus there is a successor $v\in V_{i+1}$ of $w$ with $\KEU,v\models(\hat{s}_{i+1} \EU \phi_{i+1}) \EU \hat{e}_{i+1}$.
Since $\hat{s}_{i+1}, \hat{e}_{i+1}\not\in\xi(v)$, this means $\KEU,v\models\phi_{i+1}$.

For the other proof direction assume $\KEU,\pi[1]\models s_i$ and $\KEU,\pi[2]\models \phi_{i+1}$
for some $\pi\in\Pi(w)$ with $\pi[2]\in V_{i+1}$.
Using (\ref{EUprop}) it follows that $\KEU,\pi[2]\models \hat{s}_{i+1} \EU \phi_{i+1}$.
Moreover, $\pi[2]$ has a successor $u\in \hat{V}_{i+1}$ with $\KEU,u\models \hat{e}_{i+1}$.
Thus from (\ref{EUprop}) follows $\KEU,\pi[2]\models (\hat{s}_{i+1} \EU \phi_{i+1}) \EU \hat{e}_{i+1}$.
Since $\pi[2]$ is a successor of $\pi[1](=w)$ and  $s_i\in\xi(w)$, we get $\KEU,w\models s_i \EU ((\hat{s}_{i+1} \EU \phi_{i+1}) \EU \hat{e}_{i+1})$.

Proof of (2).
This can be shown using similar arguments as above.
\claimqed
\vspace*{2ex}

\noindent
\textbf{Claim \ref{claim:EU-main}:}
\emph{For every $i\leq m$ and every $w\in V_i$ holds:
$\KEU,w\models \phi_i$ if and only if $\apath_G(w,T)$.
}
\vspace*{1ex}

Proof.
The proof proceeds by induction on $i$.
The base case for nodes in slice $i=m$ is straightforward. 

For the inductive step, we consider $i<m$ and $w\in V_i$.

First, consider $w\in V_{\exists}$.
By Claim \ref{claim:EU-paths}, we have that 
$\KEU,w\models \phi_i$ if and only if $w$ has a successor $v\in V_{i+1}$ with $\KEU,v\models \phi_{i+1}$.
Since $v$ is also a successor of $w$ in $G$,
using the inductive hypothesis, the latter is equivalent to $\apath_G(v,T)$ for a successor $v$ of $w$ in $G$.
Since $w\in V_{\exists}$, this means $\apath_G(w,T)$.

Next, consider $w\in V_{\forall}$.
By Claim \ref{claim:EU-paths}, we have that 
$\KEU,w\models \phi_i$ if and only if there exists a path $\pi\in\Pi(w)$ with $\pi[2],\pi[4]\in V_{i+1}$
such that $\KEU,\pi[2]\models \phi_{i+1}$ and $\KEU,\pi[4]\models\phi_{i+1}$.
Since $\pi[2]$ and $\pi[4]$ are all successors of $w$ in $G$,
using the inductive hypothesis, the latter is equivalent to
$\apath_G(v,T)$ for all successors $v$ of $w$ in $G$.
Since $w\in V_{\forall}$, this means $\apath_G(w,T)$.
\claimqed

\subsection{$\ER$}
\begin{figure}
\ERveefromGl
\caption{Kripke model $\KER$ obtained from the $\ASAGAPtwooutonein$ instance in Figure~\ref{fig:Bsp-ASAGAPveeonein}.}
\label{fig:KER}
\end{figure}

\begin{theorem}
$\CTLMC{\ER,\lor}$ is $\P$-complete. 
\end{theorem}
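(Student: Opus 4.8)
The upper bound $\CTLMC{\ER,\lor}\in\P$ is immediate from Theorem~\ref{thm:CTLMC}. For $\P$-hardness the plan is to reduce from $\ASAGAPtwooutonein$, closely following the proof of Theorem~\ref{thm:EU}. Given an instance $\langle G,s,T\rangle$ I would build $G^{\flat}$ exactly as in Section~\ref{subsubsec:Kripke}, and take for $\KER=(V^{\flat},E^{\flat},\xi)$ essentially the model $\KEU$: the assignment $\xi$ keeps $t$ on $T$, $s_i$ on $V_i$, $\hat s_i$ on the chain nodes $\hat v_{u,1}$ and $\hat e_i$ on the self-loop nodes, but adds one new proposition $q_i$ placed on every ``second successor'' node $v_{u,2}$ (recall that each slice $V_i\subseteq V_\exists-V_0$ is partitioned into pairs $\{v_{u,1},v_{u,2}\}$). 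The proposition $q_i$ is what lets a formula tell the two successors of a $\forall$-node apart, and it is the only reason I need $\lor$ in addition to $\ER$; the propositions $\hat e_i$ will actually be dead weight.

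Mirroring the nested $\EU$-formula of Theorem~\ref{thm:EU}, I would set $\phi_m:=t$ and, for $i<m$,
\[
\phi_i:=
\begin{cases}
\phi_{i+1}\ER(s_i\lor s_{i+1}) & \text{if $i$ is even (an existential layer),}\\
\bigl(q_{i+1}\ER\phi_{i+1}\bigr)\ER\bigl(\phi_{i+1}\lor s_i\lor\hat s_{i+1}\lor q_{i+1}\bigr) & \text{if $i$ is odd (a universal layer),}
\end{cases}
\]
which uses only $\ER$ and $\lor$. The property to exploit is $K,w\models\alpha\ER\beta$ iff $K,w\models\beta$ and ($K,w\models\alpha$ or $K,v\models\alpha\ER\beta$ for some successor $v$ of $w$). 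On an existential layer, $s_i\lor s_{i+1}$ is false on slice $i+2$, on the self-loop nodes and on the chain nodes, so the $\ER$ can only move from $w\in V_i$ into $V_{i+1}$ and must release there exactly when $\phi_{i+1}$ holds; hence $\phi_i$ behaves like ``some successor in $V_{i+1}$ satisfies $\phi_{i+1}$''. On a universal layer the guard $\phi_{i+1}\lor s_i\lor\hat s_{i+1}\lor q_{i+1}$ is satisfied precisely along the forced chain $u\to v_{u,1}\to\hat v_{u,1}\to v_{u,2}$ --- and at $v_{u,1}$ only if $\phi_{i+1}$ holds there --- while the release atom $q_{i+1}\ER\phi_{i+1}$, which just asserts ``$\phi_{i+1}$ holds at a $q_{i+1}$-node'' with $\ER$ substituting for the unavailable conjunction, becomes true only at $v_{u,2}$ and only if $\phi_{i+1}$ holds there. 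Thus $\phi_i$ at $u$ forces $\phi_{i+1}$ at both successors of $u$, which is exactly universal branching.

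Formally I would prove, by downward induction on $i$, a confinement lemma --- $\phi_i$ is false at every node outside the slices $V^{\flat}_i,V^{\flat}_{i+1}$, and on an even layer also false on $\hat V_i\cup\hat V_{i+1}$ --- together with the analogue of Claim~\ref{claim:EU-main}: for every $w\in V_i$, $\KER,w\models\phi_i$ iff $\apath_G(w,T)$. The inductive step just unfolds the $\ER$-property along the finitely many relevant $G^{\flat}$-successors, in the same spirit as Claim~\ref{claim:EU-paths} is used in the proof of Theorem~\ref{thm:EU}; the two boundary cases, $V_0$ (no $\forall$-predecessor) and $V_m$ (where $\phi_m=t$), are handled directly. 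Instantiating the analogue of Claim~\ref{claim:EU-main} at $i=0$ gives $\langle G,s,T\rangle\in\ASAGAPtwooutonein$ iff $\KER,s\models\phi_0$; since $\langle\KER,s,\phi_0\rangle$ is computable in logarithmic space, this shows $\ASAGAPtwooutonein\leqlogm\CTLMC{\ER,\lor}$ and hence $\P$-completeness.

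The step I expect to be the real work --- and the reason, as the paper notes, this proof is technically more involved than that of Theorem~\ref{thm:EU} --- is controlling the ``weak'' side of $\ER$. Unlike $\EU$, the formula $\alpha\ER\beta$ is also satisfied by a path that stays forever in the $\beta$-region, and (exactly as already happens for the $\EU$-formulas) $\phi_{i+1}$ is not confined to its own slice but ``leaks'' one slice forward. Either phenomenon could in principle let a subformula be satisfied by slipping into a self-loop node, or into slice $i+2$, instead of honestly walking the chain. The arguments that $q_{i+1}\ER\phi_{i+1}$ really equals ``$q_{i+1}\wedge\phi_{i+1}$'' and that, on a universal layer, the only witness for $\phi_i$ is the intended walk both rest on the confinement lemma, so the crux is to carry the confinement lemma and the main claim through the same induction.
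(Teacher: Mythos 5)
Your proposal is correct and follows essentially the same route as the paper: a logspace reduction from $\ASAGAPtwooutonein$ via $G^{\flat}$, with nested $\ER$-formulas whose guards confine witnessing paths to adjacent slices, an extra atom marking the second successor of each $\forall$-node, and the ``$q\ER\psi$ at a $q$-node equals $\psi$'' trick standing in for conjunction. The paper's encoding differs only cosmetically (it doubles the slice atoms so the existential guard is a single atom, and places the second-successor marker as the release of the outer $\ER$ rather than inside it), and the confinement lemma you flag as the real work is exactly the auxiliary claim the paper proves.
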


\begin{proof}
Containment in $\P$ follows from \cite{clemsi86}.
In order to show $\P$-hardness, we logspace reduce from $\ASAGAPtwooutonein$.
Let $\langle G,s,T\rangle$ be an instance of $\ASAGAPtwooutonein$, where $G=(V,E)$ and $V=V_\exists\cup V_{\forall}$
consists of slices $V_0,V_1,\ldots,V_m$ with $s\in V_0$ and $T\subseteq V_m$.
Let $G^{\flat}=(V^{\flat},E^{\flat})$ be the graph obtained from $G$ as described in Section~\ref{subsubsec:Kripke}.
In order to obtain the Kripke model $\KER=(V^{\flat},E^{\flat},\xi)$, it remains to define the assignment $\xi$ of atoms to sets of nodes of $G^{\flat}$.
\begin{enumerate}
\item $t$ is assigned to every node in $T$.
\item $s_0$ is assigned to every node in $V_0$.
\item $s_{i}$ and $s_{i-1}$ are assigned to every node in $V_i$ for $i>0$.
       
\item $t_{i-1}$ is assigned to every node $v\in V_{\exists}\cap V_i$ ($i>0$) that is the second successor $v=v_{u,2}$
      of a node $u\in V_{\forall}\cap V_{i-1}$.
       
\item $\hat{s}_{i-1}$ is assigned to every $v\in \hat{V}_i$ ($i>0$) such that $v=\hat{v}_{u,1}$
      is the copy of the first successor of a node $u\in V_{\forall}\cap V_{i-1}$.
\end{enumerate}
Nothing is assigned to nodes $\hat{v}_{u,2}$ and to nodes in $\hat{V}_{\forall}$.
Notice that all infinite paths in $G_{\flat}$ must eventually loop in a state $\hat{v}_{u,2}$ that satisfies no atom at all.
See Figure~\ref{fig:KER} for an example.

The formulas $\phi_i$ are defined inductively as follows, for $i=m,m-1,\ldots,0$.
\begin{align*}
 \phi_i &:= 
\begin{cases}
 t, &\text{ if }i=m,\\
 \phi_{i+1}\ER s_{i},  & \text{ if $i<m$ and even (slice with $\exists$-nodes)},\\
 t_{i} \ER ((\phi_{i+1}\ER s_{i}) \lor \hat{s}_i),  & \text{ if $i<m$ and odd (slice with $\forall$-nodes)}.
\end{cases}
\end{align*}

The following claim states that $\apath_G(x,T)$
corresponds to the satisfaction of formulas in the corresponding nodes in the constructed Kripke model.

\begin{ownClaim}\label{claim:ER-or-main}
For every $l\leq m$ and every $v\in V_l$ holds: 
$\apath_G(v,T)$ if and only if $\KER,v\models \phi_l$.
\end{ownClaim}

\begin{proof}
 The proof proceeds by induction on $l$.
The base case for  slice $l=m$ is straightforward.
For the inductive step, consider $l<m$.
First consider even $l$ (slice of $\exists$-nodes),
take a node $v\in V_l\cap V_{\exists}$,
and assume 
\begin{align}
\KER,v \models \phi_l, ~~~~\text{i.e.~} \KER,v \models \phi_{l+1}\ER s_{l} . \label{ER0}
\end{align}
Remind that $\phi_{l+1}=t_{l+1} \ER ((\phi_{l+2}\ER s_{l+1}) \lor \hat{s}_{l+1})$.
Since $s_{l+1}, \hat{s}_{l+1}\not\in\xi(v)$,
it follows that $\KER,v \nmodels \phi_{l+1}$.
Since $s_{l}\in\xi(v)$, it follows that (\ref{ER0}) is equivalent to
\begin{align}
\text{there exists a successor $z$ of $v$ with ~} \KER,z & \models \phi_{l+1} \ER s_l . \label{ER2}
\end{align}
The successor $\hat{v}\in\hat{V}_l$ of $v$ does not satisfy $s_l$.
Thus (\ref{ER2}) is equivalent to
\begin{align}
\text{there exists a successor $z\in V_{l+1}$ of $v$ with ~} \KER,z & \models \phi_{l+1} \ER s_l . \label{ER3}
\end{align}
All successors $z\in V_{l+1}$ of $v$ satisfy $s_l$
and do not have a successor that satisfies $s_l$.
Thus (\ref{ER3}) is equivalent to
\begin{align}
\text{there exists a successor $z\in V_{l+1}$ of $v$ with ~} \KER,z & \models \phi_{l+1} . \label{ER4}
\end{align}
By the inductive hypothesis and the construction of the Kripke model,
this means that $v$ has a successor $z$ in $G$ with $\apath_G(z,T)$.
Since $v\in V_{\exists}$, this is equivalent to $\apath_G(v,T)$.
 
Now consider odd $l$  (slice of $\forall$-nodes),
take a node $u\in V_l\cap V_{\forall}$,
and assume 
\begin{align}
\KER,u \models \phi_l, ~~~\text{i.e.~} \KER,u \models t_{l} \ER ((\phi_{l+1}\ER s_{l}) \vee \hat{s}_l) . \label{ER4-5}
\end{align}
There is no infinite path in $\Pi(u)$ that satisfies $s_l$ or $\hat{s}_l$ in every node.
Thus, 
(\ref{ER4-5}) can only be witnessed by an infinite path that passes through a node that satisfies $t_{l}$.
Every such path in $\Pi(u)$ has the finite prefix
$u,v_{u,1}, \hat{v}_{u,1},v_{u,2}$.
Thus (\ref{ER4-5}) is equivalent to
\begin{align}
\text{every node in the path  $u,v_{u,1}, \hat{v}_{u,1},v_{u,2}$
satisfies $(\phi_{l+1}\ER s_{l}) \lor \hat{s}_l$.} \label{ER5}
\end{align}
For $\hat{v}_{u,1}$ this holds, since $\hat{s}_l\in\xi(\hat{v}_{u,1})$.

$\KER,u\models s_l$, but $\KER,u\nmodels \phi_{l+1}$ because $\KER,u\nmodels s_{l+1}$.
Thus $\KER,u\models (\phi_{l+1}\ER s_{l}) \lor \hat{s}_l$ if and only if 
   $\KER,v_{u,1} \models \phi_{l+1}\ER s_l$ and $\KER,v_{u,2} \models \phi_{l+1}\ER s_l$.
Since $\KER,v_{u,i}\models s_l$ and no successor $\hat{v}_{u,i}$ of $v_{u,i}$ satisfies $s_l$,
it follows that $\KER,v_{u,i} \models \phi_{l+1}\ER s_l$ is equivalent to having $\KER,v_{u,i}\models \phi_{l+1}$ (for $i=1,2$).
This yields that (\ref{ER5}) is equivalent to
\begin{align}
\text{$\KER,v_{u,1} \models \phi_{l+1}$ and $\KER,v_{u,2} \models \phi_{l+1}$.} \label{ER6}
\end{align}
By the inductive hypothesis and the construction of the Kripke model, 
this means that $\apath_G(v,T)$ holds for all successors $v$ of $u$ in $G$.
The latter is equivalent to $\apath_G(u,T)$. 
\claimqed
\end{proof}

With Claim~\ref{claim:ER-or-main}
we get that $\langle G,s,T\rangle \in \ASAGAPtwooutonein$ if and only if $\KER,s\models\phi_0$.
The $\CTLMC{\ER,\lor}$ instance $\langle \KER,s, \phi_{0}\rangle$
can be computed in space logarithmic in the size of $G$.
Thus $\ASAGAPtwooutonein$ logspace reduces to $\CTLMC{\ER,\lor}$.\qed
\end{proof}

%========================================================================================================
\vspace*{3ex}

\begin{theorem}
$\CTLMC{\ER,\neg}$ is $\P$-complete. 
\end{theorem}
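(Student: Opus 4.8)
The containment in $\P$ is again inherited from Theorem~\ref{thm:CTLMC} (Clarke, Emerson, and Sistla~\cite{clemsi86}). For $\P$-hardness the plan is to translate the reduction witnessing $\P$-hardness of $\CTLMC{\EX,\neg}$~\cite{sch02}: that argument encodes alternating reachability on a layered graph by a formula that alternates $\EX$ on the $\exists$-slices with $\AX=\neg\EX\neg$ on the $\forall$-slices, and the idea is to mimic it using $\ER$ in place of $\EX$. So I would once more reduce from the $\P$-complete problem $\ASAGAPtwooutonein$ and re-use the graph surgery from Section~\ref{subsubsec:Kripke}, exactly as in the proofs of Theorem~\ref{thm:EU} and of $\CTLMC{\ER,\lor}$ above. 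The key local fact, in the spirit of the paragraph preceding this theorem, is that in a suitably labelled model $\alpha\ER s$ evaluated at the ``right'' node behaves like $\EX\alpha$, hence $\neg(\neg\alpha\ER s)$ behaves like $\AX\alpha$.

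Concretely, from an instance $\langle G,s,T\rangle$ with slices $V_0,\dots,V_m$ I would build a Kripke model on (a variant of) $G^{\flat}$, put a bottom atom $t$ on $T$, and introduce \emph{slice atoms} $s_i$ true on $V_i\cup V_{i+1}$, so that the semantics of $\ER$ forces $\alpha\ER s_i$, evaluated at a slice-$i$ node, to unfold into ``$\alpha$ already holds here, or $\alpha$ holds at some slice-$(i+1)$ successor''. Setting $\phi_m:=t$ and, for $i<m$,
\[
\phi_i:=\begin{cases}\phi_{i+1}\ER s_i,&i\text{ even (an }\exists\text{-slice)},\\\neg\bigl(\neg\phi_{i+1}\ER s_i\bigr),&i\text{ odd (a }\forall\text{-slice)},\end{cases}
\]
the intention is that, once the spurious ``holds here'' disjunct is neutralised, $\phi_i\equiv\EX\phi_{i+1}$ on $\exists$-slices and $\phi_i\equiv\neg\EX\neg\phi_{i+1}=\AX\phi_{i+1}$ on $\forall$-slices. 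One then proves a ``purity'' claim in the style of Claim~\ref{claim:eu-upper} together with the main claim ``$\KER,v\models\phi_i$ iff $\apath_G(v,T)$'' by downward induction on $i$, with the same $\exists$/$\forall$ case split as in Claim~\ref{claim:ER-or-main}. Logspace computability of $\langle\KER,s,\phi_0\rangle$ is immediate, which gives $\ASAGAPtwooutonein\leqlogm\CTLMC{\ER,\neg}$.

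The step I expect to be the real obstacle is pinning down the labelling (and the node gadget around the $\forall$-nodes) so that the spurious disjunct of $\alpha\ER s_i$ is cancelled \emph{consistently in both contexts}. On $\exists$-slices one wants $\phi_{i+1}$ false on the earlier slices, so the ``$\alpha$ holds here'' disjunct vanishes --- the straightforward analogue of Claim~\ref{claim:eu-upper}. But on $\forall$-slices the formula $\neg(\neg\phi_{i+1}\ER s_i)$ unfolds at the current node to $\phi_{i+1}\wedge\AX\phi_{i+1}$, whose extra conjunct $\phi_{i+1}$ must instead be forced \emph{true} there; and the naive slice-atom scheme makes $\forall$-type formulas true rather than false on the slices below them, so the two demands collide. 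This is exactly the place where the $\CTLMC{\ER,\lor}$ proof could afford the helper disjunct $\hat{s}_i$ to patch the intermediate positions, a luxury $\{\ER,\neg\}$ lacks. I would expect to resolve it by dedicating a fresh atom (and, if needed, a short copy-chain $u,v_{u,1},\hat{v}_{u,1},v_{u,2}$ à la $G^{\flat}$) to the $\forall$-gadget, so that the conjunction ``$\phi_{i+1}$ at $v_{u,1}$ and at $v_{u,2}$'' is read off an \emph{outer} $\ER$ while the inner $\ER$ still evaluates to $\phi_{i+1}$ at each genuine successor, all the while avoiding accidental ``$s_i$ forever'' witnessing paths and spurious slice-skips. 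Verifying that one such choice makes both the $\exists$- and the $\forall$-unfoldings come out exactly right is the technical heart of the proof.
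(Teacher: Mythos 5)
Your proposal is, up to inessential choices (you reduce from $\ASAGAPtwooutonein$ via $G^{\flat}$, the paper reduces from $\ASAGAP$ via the simpler completion $G^{\sharp}$), exactly the paper's construction: the atom $s_i$ is placed on $V_i\cup V_{i+1}$, and $\phi_i$ is $\phi_{i+1}\ER s_i$ on $\exists$-slices and $\neg(\neg\phi_{i+1}\ER s_i)$ on $\forall$-slices. But as a proof it is incomplete. You correctly observe that when $s_i$ holds at the current node and at all of its successors but nowhere later, $\alpha\ER s_i$ unfolds to $\alpha\vee\EX\alpha$ rather than to $\EX\alpha$, so the $\forall$-case formula unfolds to $\phi_{i+1}\wedge\AX\phi_{i+1}$; you then only gesture at ``a fresh atom and a copy-chain'' without exhibiting a labelling that actually cancels the spurious position-one disjunct and conjunct in both cases simultaneously. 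That cancellation is the entire content of the hardness proof, so what you have is a proof plan whose hardest step is left open, not a proof.

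The collision you flag is genuine, and it is worth recording that the paper's own write-up does not visibly escape it. In the $\forall$-case the paper notes $\KERn,v\models\neg\phi_{i+1}$ and concludes that $\KERn,v\nmodels\neg\phi_{i+1}\ER s_i$ holds iff every successor satisfies $\phi_{i+1}$; but $\neg\phi_{i+1}$ holding at $v$ itself already releases $s_i$ from position $2$ onwards, so $\neg\phi_{i+1}\ER s_i$ is witnessed by every path and $\phi_i$ comes out false at every $\forall$-node irrespective of $\apath_G(v,T)$. Dually, in the $\exists$-case the subformula $\phi_{i+1}=\neg(\neg\phi_{i+2}\ER s_{i+1})$ is true by default at $v\in V_i$ (the inner $\ER$ needs $s_{i+1}$ at $v$ and fails), so $\phi_{i+1}\ER s_i$ is released at position one and $\phi_i$ is true at every $\exists$-node. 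Once $\neg$ enters, half of the $\phi_j$ are true by default wherever their atoms are absent, and the purity argument in the style of Claim~\ref{claim:eu-upper} no longer transfers. You have therefore located the technical heart of the matter correctly; but neither your proposal nor the argument you are being compared against, as written, supplies the gadget that resolves it, and a correct proof of this theorem must do so explicitly.
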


\begin{proof}
Containment in $\P$ follows from \cite{clemsi86}.
In order to show $\P$-hardness,
we give a reduction from $\ASAGAP$.
Let $\langle G,s,T \rangle$ be an instance of $\ASAGAP$,
where $G=(V,E)$ for $V=V_{\exists}\cup V_{\forall}$
with slices $V_0,V_1,\ldots,V_m$.
Let $G^{\sharp}=(V^{\sharp},E^{\sharp})$ be the graph obtained from $G$ 
as described in Section~\ref{subsubsec:Kripke}.
In order to define the Kripke model $\KERn=(V^{\sharp},E^{\sharp},\xi)$,
we must give a definition of the assignment function $\xi$.
\begin{itemize}
\item $t$ is assigned to all nodes in $T$.
\item $s_i$ and $s_{i-1}$ are assigned to all nodes in $V_i$ for $i=0,1,\ldots,m$.
\end{itemize}

The formulas $\phi_i$ are defined inductively for $i=m,m-1,\ldots,0$ as follows.

\begin{align*}
 \phi_i &:= 
\begin{cases}
 t, & \text{ if }i=m,\\
 \phi_{i+1}\ER s_{i},  & \text{ if $i<m$ is even (slice of $\exists$-nodes)}, \\
 \neg(\neg \phi_{i+1} \ER s_{i}),  & \text{ if $i<m$ is odd (slice of $\forall$-nodes)}.
\end{cases}
\end{align*}

Notice that in the new node $e$, no atom is satisfied,
and therefore no $\phi_i$ is satified.

\begin{ownClaim}\label{claim:ERn-main}
For all $i\leq m$ and all $v\in V_i$ holds: 
$\KERn,v \models \phi_{i}$ if and only if $\apath_G(v,T)$.
\end{ownClaim}

The induction base $i=m$ is straightforward.
For the induction step we consider $i<m$ and $v\in V_i$.
We first consider even $i$.
Since $\KERn,v\nmodels s_{i+1}$ and on all paths $\pi\in\Pi(v)$,
$s_i$ is satisfied only in $\pi[1]=v$ and $\pi[2]$,
it follows that $\KERn,v\models \phi_{i+1}\ER s_{i}$ is equivalent to 
$\KERn,w\models \phi_{i+1}$ for some successor $w$ of $v$.
By the inductive hypothesis we obtain this to be equivalent to $\apath_G(v,T)$.

Next we consider odd $i$.
Assume $\KERn,v \models \neg(\neg \phi_{i+1} \ER s_{i})$,
i.e. $\KERn,v \nmodels \neg \phi_{i+1} \ER s_{i}$.
Since $\KERn,v \models \neg \phi_{i+1}$ 
and for all $\pi\in\Pi(v)$ holds that $s_i$ is satisfied only in $\pi[1]=v$ and $\pi[2]$,
it follows that $\KERn,v \nmodels \neg \phi_{i+1} \ER s_{i}$
is equivalent to $\KERn,w \nmodels \neg \phi_{i+1}$ for all successors $w$ of $v$.
The latter means that $\KERn,w \models \phi_{i+1}$ for all successors $w$ of $v$.
Using the inductive hypothesis,
we obtain $\apath_G(v,T)$.
\claimqed

The mapping from $\ASAGAP$-instances $\langle G,s,T \rangle$ to
$\CTLMC{\ER,\lnot}$-instances $\langle \KERn,s,\phi_0\rangle$ can be computed in logarithmic space.
With Claim \ref{claim:ERn-main} this yields that $\ASAGAP$ logspace reduces to $\CTLMC{\ER,\lnot}$.
\end{proof}

%=======================================================================================================

\begin{figure}

\centering

\ERistLOGCFLhart

\caption{Kripke model $\KMER$ obtained from the $\ASAGAP$ instance in Figure~\ref{fig:Bsp-ASAGAPveeonein}.}
\label{fig:ex-ER}
\end{figure}

\begin{theorem}\label{thm:ctlmc-ER-LOGCFL-hard}
$\CTLMC{\ER}$ is $\LOGCFL$-hard.
\end{theorem}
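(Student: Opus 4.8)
The plan is to give a logspace reduction from $\ASAGAPtwooutonein_{\log}$, which is $\LOGCFL$-complete by the theorem in the preliminaries, to $\CTLMC{\ER}$ (no Boolean operators). The strategy mirrors the $\P$-hardness proof of $\CTLMC{\ER,\lor}$, but now we must avoid $\lor$ entirely, and since the source problem has only logarithmic depth, we can afford one fresh atom per slice rather than a bounded-cycling scheme. First I would take an instance $\langle G,s,T\rangle$ of $\ASAGAPtwooutonein_{\log}$, form the graph $G^{\flat}=(V^{\flat},E^{\flat})$ as in Section~\ref{subsubsec:Kripke}, and build a Kripke model $\KMER=(V^{\flat},E^{\flat},\xi)$. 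The assignment $\xi$ will be similar to that of $\KER$: atom $t$ on the nodes of $T$; atoms $s_i$ (and $s_{i-1}$) on the nodes of slice $V_i$ to mark slice membership; an atom $t_i$ on the second successor $v_{u,2}$ of each universal node $u\in V_i$; and, crucially, instead of using $\lor$ to combine $(\phi_{i+1}\ER s_i)$ with $\hat s_i$ on the node $\hat v_{u,1}$, I would simply put the atom $s_i$ itself on $\hat v_{u,1}$ (and perhaps on $\hat v_{u,2}$), so that the same outer $\ER$ formula $t_i\ER(\phi_{i+1}\ER s_i)$ "survives" along the whole prefix $u,v_{u,1},\hat v_{u,1},v_{u,2}$ without needing a disjunctive escape hatch. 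The key point is that $\ER$ releases its right argument, so to keep $\psi\ER s_i$ true along a prefix one only needs $s_i$ true at each node until $\psi$ finally becomes true, and by choosing which nodes carry $s_i$ we can force the path to visit exactly $v_{u,1}$ and $v_{u,2}$.

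The formulas would be defined inductively as before: $\phi_m := t$; for even $i<m$ (existential slices), $\phi_i := \phi_{i+1}\ER s_i$; for odd $i<m$ (universal slices), $\phi_i := t_i\ER(\phi_{i+1}\ER s_i)$, where $t_i$ is the atom marking the second successor so that a witnessing path is forced to proceed past $v_{u,1}$. The central claim to prove, by induction on $i$ downward from $m$, is: for every $v\in V_i$, $\KMER,v\models\phi_i$ if and only if $\apath_G(v,T)$. The existential case is exactly as in the $\CTLMC{\ER,\lor}$ proof: $s_i$ holds at $v$ and at all its successors in $V_{i+1}$ but at no successor of those, so $\phi_{i+1}\ER s_i$ at $v$ reduces to "$\phi_{i+1}$ holds at some successor $z\in V_{i+1}$", which by the induction hypothesis is "$\apath_G(z,T)$ for some successor $z$", i.e. $\apath_G(v,T)$. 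The universal case is where the new atom placement does the work: at a universal node $u$, every infinite path loops forever in a zero-atom state $\hat v_{u,2}$, so $t_i\ER(\cdots)$ must be witnessed by a path hitting $t_i$, hence by the unique prefix $u,v_{u,1},\hat v_{u,1},v_{u,2}$ that reaches $t_i$ at $v_{u,2}$; along that prefix $\phi_{i+1}\ER s_i$ must hold at each of $u,v_{u,1},\hat v_{u,1},v_{u,2}$, and since $\phi_{i+1}$ cannot hold at $u$ or at $\hat v_{u,1}$ (wrong slice markers) this forces $\phi_{i+1}$ to hold at both $v_{u,1}$ and $v_{u,2}$, i.e. at both successors of $u$ in $G$, which by the induction hypothesis is $\apath_G(u,T)$.

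Finally I would observe that the whole construction is computable in logarithmic space (the graph $G^{\flat}$ and the slice-indexed atoms are straightforward, and the nested formula $\phi_0$ has size linear in $m$, hence $O(\log n)$, so it can be produced left-to-right in logspace), which gives $\ASAGAPtwooutonein_{\log}\leqlogm\CTLMC{\ER}$ and hence $\LOGCFL$-hardness. The main obstacle I anticipate is getting the atom placement on the copy-nodes $\hat V_i$ exactly right so that, simultaneously: (a) the $\ER$ "release" propagates correctly through the forced prefix at universal nodes without leaking to unintended paths, and (b) the existential-slice argument still collapses $\phi_{i+1}\ER s_i$ to a single-step successor check — in particular one must verify that no infinite path can satisfy $s_i$ everywhere (so that the "release forever" disjunct of the $\ER$ semantics is never available), which is exactly why the paths in $G^{\flat}$ are arranged to sink into an atom-free loop. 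Matching this against an $\AC1$ or $\LOGCFL$ upper bound is not needed here, since $\CTLMC{\ER}$ already lies in $\LOGCFL$ by the companion upper-bound argument (it is the $B=\emptyset$ case of Theorem~\ref{ER-fingerprint}).
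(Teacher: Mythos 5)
There is a genuine gap in the universal-node gadget. Consider a universal node $u\in V_i$ with the forced gadget path $u,v_{u,1},\hat v_{u,1},v_{u,2},\hat v_{u,2},\hat v_{u,2},\ldots$ in $G^{\flat}$ and your formula $\phi_i=t_i\ER(\phi_{i+1}\ER s_i)$. Since $t_i$ first appears at $v_{u,2}$, the inner formula $\phi_{i+1}\ER s_i$ must hold at each of $u,v_{u,1},\hat v_{u,1},v_{u,2}$. By the semantics of $\ER$, $\psi\ER\varphi$ at a state always requires $\varphi$ at that state itself (there is no $j<1$), so you are forced to put $s_i$ on $\hat v_{u,1}$ — otherwise $\phi_{i+1}\ER s_i$ fails at $\hat v_{u,1}$ and $\phi_i$ is never satisfiable at $u$. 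But once $s_i$ holds on all of $v_{u,1},\hat v_{u,1},v_{u,2}$, the evaluation of $\phi_{i+1}\ER s_i$ at $v_{u,1}$ along its unique continuation can be released by $\phi_{i+1}$ holding at $v_{u,2}$ (position $3$, before $s_i$ first fails at $\hat v_{u,2}$); it does \emph{not} force $\phi_{i+1}$ at $v_{u,1}$. So if $\apath_G(v_{u,2},T)$ holds but $\apath_G(v_{u,1},T)$ does not, your reduction wrongly accepts. A single shared release-atom cannot distinguish ``released at the first successor'' from ``released at the second''; this is exactly the role the disjunct $\hat s_i$ plays in the $\CTLMC{\ER,\lor}$ proof, and it cannot simply be absorbed into $s_i$.

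There is also a structural warning sign you should have caught: your formula $\phi_0$ has size $O(m)$ and nothing in your construction uses the logarithmic-depth bound, so the identical reduction applied to the unrestricted, $\P$-complete $\ASAGAPtwooutonein$ would prove $\P$-hardness of $\CTLMC{\ER}$ — contradicting its membership in $\LOGCFL$ unless $\P=\LOGCFL$. The paper's construction avoids both problems: it works on $G^{\sharp}$ (no path gadget), marks the two successors of $u$ with \emph{distinct} atoms $s^l_i$ and $s^r_i$, and uses the nested formula $(\phi_{i+1}\ER s^r_i)\ER(\phi_{i+1}\ER s^l_i)$, so that each inner $\ER$ is released only at its own designated successor. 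This duplicates $\phi_{i+1}$ at every universal level, giving $|\phi_0|=O(2^m)$ — which is polynomial precisely because $m\le\log n$, and which is why the argument yields $\LOGCFL$-hardness and not more.
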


\begin{proof}
We $\leqlogm$-reduce from the $\LOGCFL$-complete $\ASAGAPtwooutonein_{\log}$.
Let $\langle G,s,T \rangle$ be an instance of $\ASAGAPtwooutonein_{\log}$,
where $G=(V,E)$ with $V=V_{\exists}\cup V_{\forall}$
and slices $V_0,\ldots,V_m$ for $m\leq \log|V|$.
Let $G^{\sharp}=(V^{\sharp},E^{\sharp})$ be the graph obtained from $G$ 
as described in Section~\ref{subsubsec:Kripke}.
In order to define the Kripke model $\KMER=(V^{\sharp},E^{\sharp},\xi)$ (see Figure~\ref{fig:ex-ER} for an example),
we need to specify the assignment function $\xi$.
\begin{enumerate}
\item $t$ is assigned to all nodes in $T$.
\item $s_i$ is assigned to every node in $V_{\exists}\cap V_i$.
\item $s_{i-1}$, $s^l_i$, and $s^r_i$ are assigned to every node in $V_{\forall}\cap V_i$.
\item For $u\in V_{\forall}\cap V_i$,
      the two successors $v_l$ and $v_r$ of $u$ have $u$ as only predecessor.
      Then $s^l_i$ is assigned to $v_l$ and $s^r_i$ is assigned to $v_r$.
      
      Notice that $V_{\exists}\cap V_i$ is partitioned into two sets:
      one to which $s_i^l$ is assigned and the other to which $s_i^r$ is assigned.
\end{enumerate}

The formulas $\phi_i$ are defined inductively for $i=m,m-1,\ldots,0$ as follows.

\begin{align*}
 \phi_i &:= 
\begin{cases}
 t, &\text{ if }i=m,\\
 \phi_{i+1} \ER s_{i},  & \text{ if $i<m$ is even (slice of $\exists$-nodes)}, \\
 (\phi_{i+1} \ER s^r_{i}) \ER (\phi_{i+1} \ER s^l_{i}),  & \text{ if $i<m$ is odd (slice of $\forall$-nodes)}.
\end{cases}
\end{align*}

\begin{ownClaim}\label{claim:ER-main}
For every $i\leq m$ and every $v\in V_i$ holds: 
$\KMER,v \models \phi_{i}$ if and only if $\apath_G(v,T)$.
\end{ownClaim}

The proof of the claim can be found in the Appendix.
With Claim~\ref{claim:ER-main}
we get that $\langle G,s,T \rangle \in \ASAGAPtwooutonein_{\log}$ if and only if 
  $\langle \KMER,s,\phi_0\rangle\in\CTLMC{\ER}$.
Since the transformation can be computed in logarithmic space,
it follows that $\ASAGAPtwooutonein_{\log}$ logspace reduces to $\CTLMC{\ER}$.
\end{proof}

\begin{theorem}
$\CTLMC{\ER}$ is $\LOGCFL$-complete. 
\end{theorem}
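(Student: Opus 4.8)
The hardness direction is exactly Theorem~\ref{thm:ctlmc-ER-LOGCFL-hard}, so the plan is to establish the matching upper bound $\CTLMC{\ER}\in\LOGCFL$. I would prove this directly, by giving a recursive model-checking procedure and arguing that it is executed by an alternating logarithmic-space machine whose computation trees have polynomial size (equivalently, by a logspace-bounded machine with a pushdown store running in polynomial time); either characterization of $\LOGCFL$ then does the job.

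The heart of the matter is a \emph{forward monotonicity} property of the $\ER$ operator. For an $\ER$-formula $\psi$ define its innermost right descendant $a(\psi)$ by $a(\psi)=\psi$ if $\psi$ is an atom or $\true$, and $a(\psi)=a(\beta)$ if $\psi=\alpha\ER\beta$. From the fixpoint unfolding $\alpha\ER\beta\equiv\beta\wedge(\alpha\vee\EX(\alpha\ER\beta))$ one reads off that $K,v\models\psi$ implies $K,v\models a(\psi)$, and then a short backward induction along a path gives, for every finite path $x_1,\dots,x_k$ of $K$: \emph{all of $x_1,\dots,x_k$ satisfy $\psi$ if and only if all of them satisfy $a(\psi)$ and $x_k$ satisfies $\psi$}. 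The analogous statement for infinite paths also holds: \emph{from $v$ there is an infinite path all of whose states satisfy $\psi$ iff there is one all of whose states satisfy $a(\psi)$}. Informally: along a path the only ``universal'' demand made by $\alpha\ER\beta$ is the local atom $a(\psi)$; everything else is existential, namely the choice of the path and of the single state at which $\alpha$ is discharged.

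Given this, the procedure $\mathrm{MC}(\psi,v)$ testing $K,v\models\psi$ reads: if $\psi\in\{\true,p\}$ it is immediate; if $\psi=\alpha\ER\beta$, put $a:=a(\psi)$ and existentially choose one of two cases. In the ``infinite'' case, accept iff there is an infinite path from $v$ staying inside the set of states satisfying $a$ (an $\NL$ subroutine; no recursion). In the ``finite'' case, existentially guess a state $z$, verify that $z$ is reachable from $v$ by a path inside the states satisfying $a$ (again $\NL$), and then universally branch into the recursive calls $\mathrm{MC}(\alpha,z)$ and $\mathrm{MC}(\beta,z)$. Correctness would follow from the semantics of $\ER$ (unfold the fixpoint; shorten a witnessing path to length at most $|W|$) together with the monotonicity lemma, which is precisely what lets one replace ``every state on the path satisfies $\beta$'' by ``every state on the path satisfies $a$, and the last state satisfies $\beta$''.

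For the resources, the work tape stores only a pointer into $\psi$, the current state, and two counters bounded by $|W|$ for the reachability checks; in the pushdown implementation each recursive descent into a left argument pushes a state and a return address, and the universal branch is sequentialised (run $\mathrm{MC}(\alpha,z)$, then $\mathrm{MC}(\beta,z)$). Writing $T(\psi)$ for the size of an accepting computation tree (equivalently the running time along an accepting computation), each $\ER$-node performs one $\NL$-reachability check and at most two recursive calls, so $T(\alpha\ER\beta)\le O(|W|)+T(\alpha)+T(\beta)$ while $T(\true)=T(p)=O(1)$, whence $T(\psi)=O(|W|\cdot|\psi|)$, polynomial in the input size; hence $\CTLMC{\ER}\in\LOGCFL$, which with Theorem~\ref{thm:ctlmc-ER-LOGCFL-hard} gives $\LOGCFL$-completeness. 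I expect the monotonicity lemma to be the main obstacle: it is the reason every nested $\ER$ contributes only a single recursive check at a path endpoint instead of one per path state, which keeps the computation tree polynomial even though an $\ER$-formula may have nesting depth linear in its size — and it is exactly the property $\lor$ destroys, explaining why $\CTLMC{\ER,\lor}$ already jumps to $\P$-completeness (with a disjunction a single copy of a subformula in a right argument simulates a universal branching, whereas $\ER$ alone needs two copies and so reaches only logarithmic effective alternation depth).
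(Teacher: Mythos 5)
Your proposal is correct and follows essentially the same route as the paper: your innermost right descendant $a(\psi)$ is exactly the terminal atom of the paper's ``atomic right form,'' your finite- and infinite-path monotonicity statements are the paper's two auxiliary claims about right forms, and your recursive procedure (guess an $a$-path, then recurse on the left argument and the right subformula at the endpoint, with the length-$|W|+1$/cycle case handling infinite witnesses) is the paper's $\LOGCFL$ algorithm in different notation. No gaps.
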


\begin{proof}
From Theorem~\ref{thm:ctlmc-ER-LOGCFL-hard} we have $\LOGCFL$-hardness, hence only membership must be shown.

A \emph{right form} of an $\{\ER\}$-formula $\psi$ is
a sequence $\langle \alpha_1,\ldots,\alpha_m,\beta \rangle$ of $\{\ER\}$-formulas such that
$\psi = \alpha_1 \ER ( \alpha_2 \ER (\alpha_3 \ER(\cdots(\alpha_m\ER\beta))\cdots)))$.
For example, $$\psi = ( a \ER b ) \ER ( (c \ER d) \ER ( e \ER f))$$
has, amongst others, the forms %\TODO{Sollte man eventuell hier alle Formen angeben?}
\begin{itemize}
\item $\langle a \ER b, (c \ER d) \ER ( e \ER f)\rangle$
\item $\langle a \ER b, c \ER d, e \ER f\rangle$
\item $\langle a \ER b, c \ER d, e, f\rangle$.
\end{itemize}
The third right form with $\beta=f$ is called \emph{atomic right form},
because $f$ is an atom.

\begin{ownClaim}\label{claim:er-infinite-case}
Let $\pi$ be a path through a Kripke model $K$.
The following statements are equivalent.
\begin{enumerate}
\item $\forall i\geq 1: K,\pi[i] \models \beta$
\item $\forall i\geq 1: K,\pi[i] \models \langle \alpha_1, \ldots ,\alpha_m,\beta \rangle$
\end{enumerate}
\end{ownClaim}

The proof of the Claim proceeds by induction on $m$.
The base case $m=0$ is clear, because $\beta=\langle\beta\rangle$.
For the inductive step $m>0$, the following holds.
\begin{align*}
& \forall i\geq 1: K,\pi[i] \models \langle \alpha_1,\alpha_2,\ldots,\alpha_m, \beta\rangle \\
& \Leftrightarrow \forall i\geq 1: K,\pi[i] \models \langle \alpha_2,\ldots,\alpha_m, \beta\rangle & \text{~~~(semantics of $\ER$)} \\
& \Leftrightarrow \forall i\geq 1: K,\pi[i] \models \beta & \text{~~(by the inductive hypothesis)}
\end{align*}
\claimqed

\begin{ownClaim}\label{claim:er-finite-case}
Let $\pi$ be a path through a Kripke model $K$, and let $k$ be an integer.
The following statements are equivalent.
\begin{enumerate}
\item $\forall i\leq k: K, \pi[i] \models \langle \alpha_1,\ldots,\alpha_m,\beta\rangle$
\item $K,\pi[k] \models \langle \alpha_1,\ldots,\alpha_m,\beta\rangle$ and 
        $\forall i\leq k: K,\pi[i] \models \beta$.
\end{enumerate}
\end{ownClaim}

The proof of the Claim proceeds by induction on $m$.
The base case $m=0$ is clear, because $\beta=\langle\beta\rangle$.

For the inductive step $m>0$, we consider both proof directions separately.

``$\Rightarrow$'': 
\begin{align*}
& \forall i\leq k: K,\pi[i] \models \langle \alpha_1,\ldots,\alpha_m,\beta\rangle \\
\Rightarrow &  \forall i\leq k ~~\exists \pi'\in\Pi(\pi[i]) \\
   & ~~~~~(1) \forall j\geq 1 : K,\pi'[j] \models \langle \alpha_2,\ldots,\alpha_m,\beta\rangle \text{~~or} \\
   & ~~~~~(2) \exists l\geq 1: K,\pi'[l] \models \alpha_1 ~\&~ \forall q\leq l: K,\pi'[q] \models  \langle \alpha_2,\ldots,\alpha_m,\beta\rangle \\
   & \text{~~(semantics of $\ER$)} \\
\Rightarrow & \forall i\leq k ~~\exists \pi'\in\Pi(\pi[i]) \\
   & ~~~~~(1) \forall j\geq 1 : K,\pi'[j] \models \beta \text{~~or~~~~(Claim)} \\
   & ~~~~~(2) \exists l\geq 1: K,\pi'[l] \models \alpha_1 ~\&~ 
                \text{~(ind. hypoth.)~} \\
   & ~~~~~~~~~       \forall q\leq l: K,\pi'[q] \models \beta ~\&~ K,\pi'[l] \models  \langle \alpha_2,\ldots,\alpha_m,\beta\rangle \\
\Rightarrow & \forall i\leq k : K,\pi[i] \models \beta \text{~~(since $\pi'[1]=\pi[i]$)} 
\end{align*}

``$\Leftarrow$'': 
\begin{align*}
& K,\pi[k] \models \langle \alpha_1,\ldots,\alpha_m,\beta\rangle \text{~and~} \forall i\leq k: K,\pi[i] \models \beta \\[1ex]
\Rightarrow & \forall i\leq k: K,\pi[i] \models \beta \text{~and~}  \exists \pi'\in\Pi(\pi[k]): \\
&  ~~~ (1) \forall j\geq 1: K,\pi'[j] \models \langle \alpha_2,\ldots,\alpha_m,\beta\rangle \text{~~or} \\
&  ~~~ (2) \exists j\geq 1: K,\pi'[j] \models \alpha_1 ~\&~ \forall q\leq j: K,\pi'[q] \models \langle \alpha_2,\ldots,\alpha_m,\beta\rangle \\[1ex]
\Rightarrow &  \exists \rho\in\Pi(w)  ~~(\text{where~}\rho=\pi[1]\cdots\pi[k](=\pi'[1])\pi'[2]\cdots) : \\
& ~~~~(1) \forall j\geq 1: K,\rho[j] \models \beta ~~(\text{using the above Claim}) ~~\text{or} \\
& ~~~~(2) \exists k'\geq k: K,\rho[k'] \models \alpha_1 ~\&~ 
               \forall i\leq k': K,\rho[i] \models \langle \alpha_2,\ldots,\alpha_m,\beta\rangle ~\text{(ind. hyp.)}~ \\[1ex]
\Rightarrow & \forall i\leq k: K,\pi[i] \models \langle \alpha_1,\ldots,\alpha_m,\beta\rangle
\end{align*}

\claimqed

The atomic right form of an $\{\ER\}$-formula is unique.

\begin{ownClaim}\label{claim:ER-in-LOGCFL-main}
Let $\varphi$ be an $\{\ER\}$-formula with atomic right form $(\alpha_1,\ldots,\alpha_m,\beta)$ and $m\geq 1$,
$K=(W,R,\xi)$ be a Kripke model, and $w\in W$.
Then $K,w \models \varphi$ if and only if
there exists a finite path $\pi$ through $(W,R)$ starting in $w$ with length $|\pi|\leq |W|+1$ such that
\begin{enumerate}
\item $K,\pi[i]\models \beta$ for all $i=1,2,\ldots,|\pi|$, and
\item\begin{enumerate}
\item $|\pi|=|W|+1$, or
\item 
\begin{itemize}
\item $K,\pi[|\pi|] \models \alpha_1$, and
\item $K,\pi[|\pi|] \models \alpha_2 \ER ( \alpha_3 \ER ( \cdots \ER (\alpha_m \ER \beta) \cdots ))$ \hfill 
                   (i.e. the formula with atomic right form $\langle\alpha_2,\ldots,\alpha_m,\beta\rangle)$.
\end{itemize}
\end{enumerate}
\end{enumerate}
\end{ownClaim}

$K,w\models \langle \alpha_1,\ldots,\alpha_m,\beta\rangle$ is defined as
\begin{align}
& \exists \pi\in\Pi(w) ~ \forall i\geq 1 : K,\pi[i]\models \langle \alpha_2,\ldots,\alpha_m,\beta\rangle ~~\text{or} \label{er-alg1} \\
& \exists \pi\in\Pi(w) ~ \exists k\geq 1 : K,\pi[k]\models \alpha_1 ~\&~ \forall j\leq k: K,\pi[j] \models \langle\alpha_2,\ldots,\alpha_m,\beta\rangle \label{er-alg2}
\end{align}
By Claim~\ref{claim:er-infinite-case} we get that (\ref{er-alg1}) is equivalent to the following.
\begin{align}
& \exists \pi\in\Pi(w) ~ \forall i\geq 1 : K,\pi[i]\models \beta \label{er-alg3} 
\end{align}
Since $\beta$ is an atom, 
(\ref{er-alg3}) is equivalent to
\begin{align*}
\text{there exists a finite path $\pi$ starting in $w$ with length $|\pi|=|W|+1$} \\
\text{such that $K,\pi[i]\models \beta$ for all $i=1,2,\ldots,|\pi|$.}
\end{align*}
This covers the first half (i.e. 2.a) of the claim.

Now consider (\ref{er-alg2}).
Using Claim~\ref{claim:er-finite-case} we get that (\ref{er-alg2}) is equivalent to
\begin{align}
& \exists \pi\in\Pi(w) ~ \exists k\geq 1 : \pi[k]\models \alpha_1 ~\&~ 
    \forall j\leq k: \pi[j] \models \beta ~\&~ \pi[k] \models \langle\alpha_2,\ldots,\alpha_m,\beta\rangle \label{er-alg4}
\end{align}
It is clear that if such a $k$ exists, then $k$ can be chosen to be $<|W|+1$.
This covers the second half (i.e. 2.b) of the claim.
\claimqed

Algorithm~\ref{algo:ER} implements this algorithm according to Claim~\ref{claim:ER-in-LOGCFL-main}.
It is easily seen to work in logarithmic space.
The stack is used for the recursive calls.
Since essentially every subformula causes one recursive call,
the algorithm runs in polynomial time.
Thus it is an $\LOGCFL$ algorithm.
\renewcommand*\lstlistingname{Algorithm}
\begin{lstlisting}[caption={LOGCFL machine that decides $\CTLMC{\ER}$},label=algo:ER,float, abovecaptionskip=-\medskipamount,mathescape]
Procedure: check
Input:     Kripke structure $K=(W,R,\xi)$, initial state $w_0\in W$, 
           formula $\phi$ with only $\ER$ operators.
Output:    true iff $K,w_0\models\phi$.

let $\langle \alpha_1,\ldots,\alpha_m,\beta\rangle$ be the atomic right form of $\phi$
if $\beta\not\in\xi(w_0)$ the return false
guess $\ell\leq|W|+1$
$i:=1$
$s:=w_0$
while $i\leq\ell$ do
  $s:= \text{ guessed successor of } s$
  if $\beta\not\in\xi(s)$ then return false
  $i:=i+1$
if $\ell=|W|+1$ or $m=0$ then return true
else return check$(K,s,\alpha_1)$ & check$(K,s,\langle \alpha_2,\ldots,\alpha_m,\beta\rangle)$
\end{lstlisting}
\end{proof}

\subsection{$\EG$}

\begin{theorem}
$\CTLMC{\EG,\oplus}$ is $\P$-complete. 
\end{theorem}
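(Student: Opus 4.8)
The membership of $\CTLMC{\EG,\oplus}$ in $\P$ follows from Theorem~\ref{thm:CTLMC}, so the work is the $\P$-hardness. The plan is to reduce from $\ASAGAPtwooutonein$, mirroring the architecture of the $\EU$-proof (Theorem~\ref{thm:EU}) and the $\ER,\lor$-proof: start from an instance $\langle G,s,T\rangle$, form the auxiliary graph $G^{\flat}$ of Section~\ref{subsubsec:Kripke}, build a Kripke model $\KEG=(V^{\flat},E^{\flat},\xi)$ on it, and define a family of formulas $\phi_i$ indexed by slices so that $\KEG,w\models\phi_i\iff\apath_G(w,T)$ for $w\in V_i$. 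The final instance is $\langle\KEG,s,\phi_0\rangle$, computable in logarithmic space because $G^{\flat}$ and all atom labels depend locally on $G$, and each $\phi_i$ is a bounded-size formula built from $\phi_{i+1}$ by a fixed template.

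\textbf{The formula template.} The key design problem is simulating $\EU$ (and hence the alternation of $\ASAGAP$) using only $\EG$ and $\oplus$. The idea is that $\EG\psi$ holds at $w$ iff some infinite path from $w$ keeps $\psi$ true forever; since in $G^{\flat}$ every infinite path eventually enters a self-loop, $\EG\psi$ at an $\exists$-node is controlled by whether $\psi$ holds along the short path into a loop. One combines this with $\oplus$ to turn "stay true" into "reach the right successor": with suitable fresh propositions marking which slice a node is in and which branch of a $\forall$-node's fork it lies on, a formula like $\EG(\text{marker}_i \oplus \psi)$ can be forced to pick out exactly the desired successors, because $\oplus$ flips the truth value precisely at the slice boundary where $\text{marker}_i$ changes. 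The assignment $\xi$ will therefore carry, besides the goal atom $t$ and slice atoms $s_i$, additional atoms labelling the hat-copies and the two fork branches (as in the $\EU$ and $\ER$ constructions, where the $\hat V_i$ copies and loop nodes carry $\hat s_i,\hat e_i$), chosen so that along the intended finite prefixes the relevant parity computations come out right. As in the earlier proofs, one then proves two claims: a "paths" claim (Claim analogous to Claim~\ref{claim:EU-paths}) saying $\KEG,w\models\phi_i$ at an $\exists$-node iff some successor $v\in V_{i+1}$ satisfies $\phi_{i+1}$, and at a $\forall$-node iff both fork-successors satisfy $\phi_{i+1}$; and an "upper" claim (analogous to Claim~\ref{claim:eu-upper}) that $\phi_j$ fails at every node in slices $<j$, needed so that the $\EG$-paths cannot cheat by revisiting wrong slices. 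The main claim $\KEG,w\models\phi_i\iff\apath_G(w,T)$ then follows by downward induction on $i$, with base case $i=m$ given by $\phi_m=t$ labelling exactly $T$.

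\textbf{Main obstacle.} The technical heart, and the step I expect to be hardest, is pinning down exactly which propositions to label on which nodes so that the parity toggling induced by $\oplus$ behaves monotonically along the intended path and destructively along every unintended path. With $\EU$ one has an explicit "until" to fence off detours; with $\EG$ alone the only lever is "this property holds on every state of some infinite path", so all the branching logic has to be smuggled into the Boolean combination under a single $\EG$, and a stray $\oplus$ can silently satisfy a formula on a bad path. Getting the labelling right — in particular ensuring that an $\EG$ witness path for $\phi_i$ is forced to traverse the prefix $u,v_{u,1},\hat v_{u,1},v_{u,2}$ at a $\forall$-node and cannot loop early, exactly as in the $\ER,\lor$ argument around~\eqref{ER5} — is the delicate combinatorial bookkeeping. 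Everything else (the $G^{\flat}$ transformation, logspace computability, the inductive skeleton) is routine once the template and labelling are fixed; the paper's own remark that ``one needs additional propositions labelled on the states to ensure having control on the paths'' confirms this is where the difficulty concentrates.
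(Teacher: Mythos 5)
Your skeleton coincides with the paper's: reduce from $\ASAGAPtwooutonein$, pass to $G^{\flat}$, define $\varphi_m=t$ and $\varphi_i$ by a fixed template from $\varphi_{i+1}$, and prove $\varphi_i$ holds at $w\in V_i$ iff $\apath_G(w,T)$ by downward induction, supported by auxiliary claims about where the $\varphi_j$ fail. But the proposal stops exactly where the proof begins. The entire content of the hardness argument is the choice of labelling and of the Boolean combination under the $\EG$, and you leave both unspecified, offering only the hope that ``a formula like $\EG(\text{marker}_i\oplus\psi)$ can be forced to pick out exactly the desired successors.'' The paper's construction assigns $s_i$ to \emph{every} node of slice $V^{\flat}_i$ (originals and hat copies alike), $\hat{s}_i$ to the nodes of $\hat{V}_i$, $t$ to $T$, and sets $\varphi_i=\EG\bigl(s_i\oplus s_{i+2}\oplus\hat{s}_i\oplus\hat{s}_{i+1}\oplus\varphi_{i+1}\bigr)$. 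Each summand does a specific job: $s_i$ makes the argument of $\EG$ true at the start node; $\hat{s}_{i+1}$ keeps it true at the hat nodes $\hat{v}_{u,1},\hat{v}_{u,2}$ interleaved between the two fork successors of a $\forall$-node, so that the infinite witness $u,v_{u,1},\hat{v}_{u,1},v_{u,2},\hat{v}_{u,2},\hat{v}_{u,2},\ldots$ satisfies it everywhere iff \emph{both} $v_{u,1}$ and $v_{u,2}$ satisfy $\varphi_{i+1}$; $\hat{s}_i$ cancels against $s_i$ on the copy $\hat{u}$, which is what blocks the witness from escaping immediately into the loop at $\hat{u}$; and $s_{i+2}$ is what makes $\varphi_i$ fail at every node of slice $i+2$ and deeper, which is what prevents the witness from descending more than one slice.

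Those last two control problems are precisely the ones your sketch does not confront, and a generic marker-plus-$\oplus$ template does not resolve them: without the double labelling of hat copies, $\EG$ is trivially witnessed by looping at $\hat{u}$, and without the $s_{i+2}$ term the path can run arbitrarily deep. Relatedly, your ``upper'' claim is stated as ``$\phi_j$ fails at every node in slices $<j$''---the analogue of Claim~\ref{claim:eu-upper}---but in a forward-directed slice graph that is the harmless direction; the fencing the $\EG$-proof actually needs is the opposite and harder statement that $\varphi_j$ fails at every node in slices $\geq j+2$, proved in the paper by a separate double induction. Finally, a ``paths claim'' in the form of Claim~\ref{claim:EU-paths} cannot simply be transplanted: for $\EU$ the witness may stop once $\phi_{i+1}$ is reached, whereas for $\EG$ one must verify the $\oplus$-combination on an entire ultimately periodic path, including all intermediate hat nodes and the terminal self-loop. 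So while your plan points in the right direction and matches the paper's architecture, the decisive construction is missing and the $\P$-hardness is not established by what you have written.
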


\begin{figure}

\EGoplusfromGl

\caption{Kripke model $\KEGp$ obtained from the $\ASAGAPtwooutonein$ instance in Figure~\ref{fig:Bsp-ASAGAPveeonein}.}
\label{fig:ex-EGoplus}

\end{figure}

\begin{proof}
The upper bound $\P$ follows from \cite{clemsi86}.
For the lower bound---$\P$-hardness---we give a reduction from $\ASAGAPtwooutonein$.
Let $\langle G,s,T \rangle$ be an instance of $\ASAGAPtwooutonein$
with $G=(V,E)$ for $V=V_{\exists}\cup V_{\forall}$ with slices $V=V_0\cup\ldots\cup V_m$.
Let $G^{\flat}=(V^{\flat},E^{\flat})$ be the graph obtained from $G$
as described in Section \ref{subsubsec:Kripke}.
Using $G^{\flat}$, we construct a Kripke model $\KEGp=(V^{\flat},E^{\flat},\xi)$ with assignment $\xi$ as follows 
(see Figure~\ref{fig:ex-EGoplus} for an example).
\begin{enumerate}
\item $s_i$ is assigned to all nodes in  $V^{\flat}_i$.

\item $\hat{s_i}$ is assigned to all nodes $\hat{V}_i$.
      
\item $t$ is assigned to all nodes in $T$.
\end{enumerate}

The formulas $\varphi_i$ ($i=m,m-1,\ldots,0$) are inductively defined as follows.

$$
\varphi_i = 
\begin{cases}
t, & \text{ if } i=m \\
\EG(s_i \oplus s_{i+2} \oplus \hat{s}_i \oplus \hat{s}_{i+1} \oplus \varphi_{i+1}), & \text{ if } i<m 
\end{cases}
$$

We have the following easy-to-see properties of the model $\KEGp$ and the formulas $\varphi_i$.
\begin{ownClaim}\label{claim:EG-easy}
\begin{enumerate}
\item\label{easy1}
 For all $i\leq m$, all nodes $w\in V^{\flat}_i$, and all $j>i$  holds $\KEGp,w \nmodels \varphi_{j}$.

\item\label{easy2} 
 For all $i\leq m$ and all nodes $z\in \hat{V}_i$ holds $\KEGp,z \nmodels \varphi_i$.

\item\label{easy3} 
 For all $i\leq m$ and all $u\in \hat{V}_{i}$ with $(u,u)\in E^{\flat}$ holds $\KEGp,u \models \varphi_{i-1}$.

\end{enumerate}
\end{ownClaim}

We sketch the proof.
For \ref{easy1}: $\KEGp,w \nmodels \varphi_{j}$ since no atoms that appear in $\varphi_j$ are assigned to node $w$ in slice $i<j$.

For \ref{easy2}:
By the definition of $\xi$ we have $z\in \xi(s_i)$ and $z\in \xi(\hat{s}_i)$,
and $z\not\in \xi(s_{i+2})$ and $z\not\in \xi(\hat{s}_{i+1})$.
With case \ref{easy1} we also have $\KEGp,z\nmodels\varphi_{i+1}$.
Thus $\KEGp,z\nmodels s_i \oplus s_{i+2} \oplus \hat{s}_i \oplus \hat{s}_{i+1} \oplus \varphi_{i+1}$,
and consequently $\KEGp,z\nmodels \varphi_i$.

For \ref{easy3}:
 For $\varphi_{i-1} = \EG(s_{i-1} \oplus s_{i+1} \oplus \hat{s}_{i-1} \oplus \hat{s}_{i} \oplus \varphi_{i})$,
 we have that $\hat{s}_i\in \xi(u)$ and $s_{i-1}, s_{i+1}, \hat{s}_{i-1}\not\in\xi(u)$.
 From \ref{easy2} we get $\KEGp,u\nmodels \varphi_{i}$.
 Thus, $\KEGp,u\models s_{i-1} \oplus s_{i+1} \oplus \hat{s}_{i-1} \oplus \hat{s}_{i} \oplus \varphi_{i}$.
 Since all infinite paths $\pi\in\Pi(u)$ only loop through $u$ (e.g. $\pi[k]=u$ for all $k\geq 1$),
 it follows that $\KEGp,u\models \varphi_{i-1}$.
\claimqed

\begin{ownClaim}\label{claim:upper}
For all $i\leq m$, all nodes $w\in V^{\flat}_i$, and all $j\leq i-2$  holds: $\KEGp,w \nmodels \varphi_{j}$.
\end{ownClaim}

The proof is by induction on $i$. 

\begin{itemize}
\item Base case $i=m$. Consider $w\in V^{\flat}_m$.
Notice that every infinite path $\pi\in\Pi(w)$ eventually loops in a node  $u_w\in\hat{V}_m$ with $(u_w,u_w)\in E^{\flat}$.
Since $K,v \models \EG\alpha$ if and only if $K,v \models \alpha$ and $K,v'\models \EG\alpha$ for some successor $v'$ of $v$,
it suffices to show that $\KEGp, u_w \nmodels \varphi_{j}$.
We proceed by induction on $j$, where $j=m-2$ is the base case.
Since $\KEGp, u_w \models \varphi_{m-1}$ (Claim \ref{claim:EG-easy}(\ref{easy3})),
it follows that $\KEGp, u_w \nmodels s_{m-2} \oplus s_{m} \oplus \hat{s}_{m-2} \oplus \hat{s}_{m-1} \oplus \varphi_{m-1}$,
and thus $\KEGp, u_w \nmodels \varphi_{m-2}$.
(Generally, a formula $\EG\alpha$ is satisfied in a node $u$
if and only if $u\models\alpha$ and $v\models\EG\alpha$ for some successor $v$ of $u$.)

For $j<m-2$, we have the inductive hypothesis $\KEGp, u_w \nmodels \varphi_{j+1}$.
Since $\KEGp, u_w \nmodels s_j \oplus s_{j+2} \oplus \hat{s}_j \oplus \hat{s}_{j+1}$,
it follows that $\KEGp, u_w \nmodels s_j \oplus s_{j+2} \oplus \hat{s}_j \oplus \hat{s}_{j+1} \oplus \varphi_{j+1}$
and thus $\KEGp, u_w \nmodels \varphi_j$.

\item Inductive step $i<m$. Consider node $w\in V^{\flat}_i$.
Again we proceed by induction on $j$.
\begin{itemize}
\item Base case $j=i-2$ for $\varphi_j = \EG(s_j \oplus s_{j+2} \oplus \hat{s}_j \oplus \hat{s}_{j+1} \oplus \varphi_{j+1})$.
Since no states in slices $\geq i$ satisfy $s_j$, $\hat{s}_j$, and $\hat{s}_{j+1}$, and $\KEGp,w\models s_{j+2}(=s_i)$,
it follows that $\KEGp,w\models \varphi_{j}$ iff $\KEGp,w\models \EG(s_{j+2}\oplus\varphi_{j+1})$.
By inductive hypothesis we have $\KEGp,v\nmodels\varphi_{j+1}$ for all $v\in V^{\flat}_{i+1}$.
These nodes $v$ do not satisfy $s_{j+2}$.
Therefore $\KEGp,w\models \EG(s_{j+2}\oplus\varphi_{j+1})$ only holds,
if it is witnessed by a path that stays in slice $V^{\flat}_i$.
This path eventually loops in a node $u_w\in V_{i}$ with $(u_w,u_w)\in E^{\flat}$.
By Claim \ref{claim:EG-easy}(\ref{easy3}) we know $\KEGp, u_w\models\varphi_{j+1}$ (since $j+1=i-1$).
Consider $\varphi_{j} = \EG(s_{j} \oplus s_{j+2} \oplus \hat{s}_{j} \oplus \hat{s}_{j+1} \oplus \varphi_{j+1})$.
We have that $s_{j+2}$ and $\varphi_{j+1}$ are the only ``parts'' of $\varphi_j$ that are satisfied in $u_w$.
Thus $\KEGp,u_w\nmodels s_{j} \oplus s_{j+2} \oplus \hat{s}_{j} \oplus \hat{s}_{j+1} \oplus \varphi_{j+1}$,
and therefore $\KEGp,u_w\nmodels \varphi_{j}$.

Since every path from $w$ that stays in slice $i$ ends in such a node $u_w$,
we get that $\KEGp,w\nmodels \varphi_{j}$.
\item Inductive step $j<i-2$. 
Consider $\varphi_{j} = \EG(s_{j} \oplus s_{j+2} \oplus \hat{s}_{j} \oplus \hat{s}_{j+1} \oplus \varphi_{j+1})$.
By inductive hypothesis we know $\KEGp,w\nmodels \varphi_{j+1}$.
Moreover, $s_j, s_{j+2}, \hat{s}_j, \hat{s}_{j+1}\not\in\xi(w)$.
Therefore $\KEGp,w\nmodels \varphi_j$. \claimqed 
\end{itemize}
\end{itemize}

\begin{ownClaim}\label{claim:EG-oplus-main}
For every $i\leq m$ and every $w\in V_i$ holds:
$\KEGp,w\models \varphi_i$ if and only if $\apath_G(w,T)$.
\end{ownClaim}

The proof proceeds by induction on $i$.
The base case $i=m$ is straightforward.

For the inductive step $i<m$, consider $w\in V_i$.

$\KEGp,w\models \EG(s_{i} \oplus s_{i+2} \oplus \hat{s}_{i} \oplus \hat{s}_{i+1} \oplus \varphi_{i+1}) (=\varphi_i)$
if and only if there exists an infinite path $\pi\in\Pi(w)$ such that
$\KEGp,\pi[j] \models s_{i} \oplus s_{i+2} \oplus \hat{s}_{i} \oplus \hat{s}_{i+1} \oplus \varphi_{i+1}$ $(=:\alpha_i)$ for all $j$.
Notice that this is equivalent to $\KEGp,\pi[j] \models \varphi_i$ for all $j$.
 
Assume that such a $\pi$ exists.
Since $\KEGp,w\models s_i$ and $\KEGp,w\nmodels s_{i+2}, \hat{s}_i, \hat{s}_{i+1}, \varphi_{i+1}$, it holds that $\pi[1]\models\alpha_i$.
For the ``right neighbour'' $v\in \hat{V}_i$ of $w$
holds $\KEGp,v\nmodels\varphi_i$ (Claim \ref{claim:EG-easy}(\ref{easy2})).
This means that $\pi[2]\in V_{i+1}$.
Then $s_{i+1}\in\xi(\pi[2])$ and $s_i, s_{i+2}, \hat{s}_i, \hat{s}_{i+1}\not\in\xi(\pi[2])$.
Therefore, $\KEGp,\pi[2]\models \alpha_i$ if and only if $\KEGp,\pi[2]\models \varphi_{i+1}$.

Since no node in layer $V^{\flat}_{i+2}$ satisfies $\varphi_i$ (Claim \ref{claim:upper}),
we conclude that $\pi[3],$ $\pi[4],\ldots$ must be in slice $V^{\flat}_{i+1}$.
If $(\pi[3],\pi[3])\in E^{\flat}$, we are done as $\KEGp,\pi[3]\models \varphi_i$ by Claim \ref{claim:EG-easy}(\ref{easy3}).
Otherwise,
$\pi[3]$ is a node in $\hat{V}_{i+1}$.
By Claim \ref{claim:EG-easy}(\ref{easy2}) we have $\KEGp,\pi[3]\nmodels\varphi_{i+1}$.
Since $s_i, s_{i+2}, \hat{s_i} \not\in\xi(\pi[3])$ and $\hat{s}_{i+1}\in\xi(\pi[3])$,
we get $\KEGp,\pi[3]\models \alpha_i$.

Now, $\pi[q]$ for even $q\geq 4$ can be dealt like $\pi[2]$,
and $\pi[r]$ for odd $r\geq 5$ can be dealt like $\pi[3]$.
Let $\pi[1],\pi[2],\ldots,\pi[5]$ be the finite prefix of $\pi$ that ends in the node through which $\pi$ eventually loops.
We have seen that $\KEGp,w\models \varphi_i$ if and only if $\KEGp,\pi[2]\models \varphi_{i+1}$ and $\KEGp,\pi[4]\models\varphi_{i+1}$.
By the inductive hypothesis this is equivalent to $\apath_G(\pi[2],T)$ and $\apath_G(\pi[4],T)$.
Since $\pi[2]$ and $\pi[4]$ are all the successors of $\pi[1]=w$ in $G$,
the latter is equivalent to $\apath_G(w,T)$.
\claimqed

With Claim~\ref{claim:EG-oplus-main}
we get that $\langle G,s,T\rangle \in \ASAGAPtwooutonein$ if and only if $\KEGp,s\models\varphi_0$.
The $\CTLMC{\EG,\oplus}$ instance $\langle \KEGp,s, \varphi_{0}\rangle$
can be computed in space logarithmic in the size of $G$.
Thus $\ASAGAPtwooutonein$ logspace reduces to $\CTLMC{\EG,\oplus}$.
\end{proof}

%============================================================================================================

\begin{lemma}\label{lemma:EG-NL-hard}
$\CTLMC{\EG}$ is $\NL$-hard.
\end{lemma}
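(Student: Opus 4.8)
The plan is to prove $\NL$-hardness of $\CTLMC{\EG}$ by a logspace many-one reduction from the $\NL$-complete problem $\REACH$. Given a directed graph $H=(N,A)$ with designated vertices $s$ and $t$, I will build a Kripke model whose states are essentially the vertices of $H$, together with the source loop structure needed to make the transition relation total, and use a nested chain of $\EG$ operators to detect a path of bounded length from $s$ to $t$.

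First I would preprocess the $\REACH$ instance. Since a path from $s$ to $t$ in an $n$-vertex graph exists iff one of length at most $n-1$ exists, I can work with a layered (time-expanded) version of $H$: create layers $N_0,N_1,\ldots,N_{n-1}$, each a copy of $N$, with an edge from $(v,j)$ to $(w,j+1)$ whenever $vAw$ or $v=w$ (so "staying put" is allowed, which lets a short path be padded to length exactly $n-1$). Then there is a path $s\to t$ in $H$ iff $(t,n-1)$ is reachable from $(s,0)$ along these layer-respecting edges. To make the relation total I add a self-loop at the last layer (or route everything in layer $n-1$ to a sink with a self-loop). The target vertex $(t,n-1)$ gets a fresh atom $p$; no other state gets any atom.

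Next I would mirror the inductive-formula device used throughout the paper: define $\phi_{n-1}:=p$ and $\phi_j:=\EG\,\phi_{j+1}$ for $j<n-1$ — or, if a single $\EG$ does not propagate "reachability at the next layer" correctly because $\EG$ quantifies over a whole infinite suffix rather than a single step, I would instead encode the step explicitly. The cleanest way: note that in a graph where from every state in layer $j$ all outgoing edges go to layer $j+1$ (except the terminal self-loop), we have $K,(v,j)\models\EG\psi$ iff $K,(v,j)\models\psi$ and some successor $(w,j+1)$ satisfies $\EG\psi$; iterating, $K,(v,j)\models\underbrace{\EG\cdots\EG}_{n-1-j}p$ iff there is a layer-respecting path from $(v,j)$ through states all of which satisfy the appropriate tail formula, which — since only $(t,n-1)$ carries $p$ and the chain forces the walk to land there — is exactly reachability of $(t,n-1)$. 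Then $\langle K,(s,0),\phi_0\rangle$ with $\phi_0=\underbrace{\EG\cdots\EG}_{n-1}p$ is the desired $\CTLMC{\EG}$ instance, and it is computable in logarithmic space since the layered graph has polynomial size and the nested formula has linear length.

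The main obstacle I anticipate is making sure $\EG$ — which asserts the existence of an infinite path on which a property holds \emph{globally}, not merely at one future step — actually captures the single-layer "exists a successor" semantics I want. The layered construction is what tames this: because all non-terminal edges strictly increase the layer index, an infinite witnessing path for $\EG\phi_{j+1}$ must pass through layer $j+1$, layer $j+2$, and so on, and the only way the (innermost) atom $p$ can be satisfied globally on such a path is for the path to reach $(t,n-1)$ and then loop; the nesting depth $n-1-j$ forces exactly this. I would need to verify carefully (by induction on $j$ downward, as in Claim~\ref{claim:EU-main}) that $K,(v,j)\models\phi_j$ iff $(t,n-1)$ is reachable from $(v,j)$ in the layered graph, and that the terminal self-loop does not create spurious witnesses for states that cannot reach $(t,n-1)$ — which holds because those states reach a layer-$(n-1)$ copy of some $v\ne t$, which carries no atom, so $p$ fails there. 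The upper bound $\NL$ (hence completeness, combined with Theorem~\ref{EG-fingerprint}'s later membership argument) is immediate since $\CTLMC{\EG}\subseteq\P$ and the standard model-checking algorithm for a single $\EG$ reduces to cycle detection, which is in $\NL$ — but for this lemma only hardness is claimed, so that remark can be deferred.
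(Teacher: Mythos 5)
Your overall strategy---reducing from $\REACH$ via a layered (time-expanded) copy of the input graph in which every infinite path is funnelled into a self-looping last layer---is exactly the paper's approach, and your addition of ``staying put'' edges $(v,j)\to(v,j+1)$ is a sensible padding device. However, your labelling and formula are broken. You assign the atom $p$ \emph{only} to the target state $(t,n-1)$ and then use the nested formula $\phi_0=\EG\cdots\EG\,p$. Since $\EG\EG\psi\equiv\EG\psi$ (an equivalence the paper itself uses), $\phi_0$ is equivalent to $\EG\,p$, and $K,w\models\EG\,p$ requires $p$ to hold at \emph{every} state of some infinite path starting at $w$ --- in particular at $w$ itself. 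As $p\notin\xi((s,0))$ whenever $(s,0)\neq(t,n-1)$, your formula is false at the start state regardless of reachability; already $\phi_{n-2}=\EG\,p$ fails at every layer-$(n-2)$ state. The unfolding you quote, ``$\EG\psi$ holds at $(v,j)$ iff $\psi$ holds at $(v,j)$ \emph{and} some successor satisfies $\EG\psi$,'' makes this explicit: the nesting does not defer the requirement to a later layer, and your claim that the path is merely ``forced to land'' at $(t,n-1)$ confuses $\EG$ with $\EF$.

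The fix is to invert the labelling, which is what the paper does: use a \emph{single} $\EG\,a$ and assign $a$ to every state \emph{except} the last-layer copies $(u,n-1)$ with $u\neq t$ (i.e., $a$ holds at all $(u,i)$ with $i<n-1$ and at $(t,n-1)$). Because all non-terminal edges strictly increase the layer index, every infinite path must eventually loop inside the last layer, so $a$ holds globally on some path from $(s,0)$ iff that path terminates in the loop at $(t,n-1)$, iff $t$ is reachable from $s$. With that change (and your padding edges, which guarantee that any $s$-$t$ path of length $<n-1$ yields a layer-respecting walk of length exactly $n-1$), the reduction goes through in logarithmic space.
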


\begin{proof}
We give a logspace reduction from the $\NL$-complete  graph accessibility problem. 
Let $(G,s,t)$ be the given $\GAP$ instance with $G=(V,E)$.
Let $V'=\{(u,i)\mid u\in V, 1\leq i\leq |V|\}$ be a set consisting of $|V|$ copies of every node in $|V|$,
and $E'$ be a set of edges on $V'$ similar to $E$,
such that an edge $(u,v)\in E$ leads to edges from the $i$th copy of $u$ to the $(i+1)$st of $v$,
plus reflexive edges for all $|V|$th copies,
i.e., $E'=\{((u,i),(v,i+1))\mid (u,v)\in E, 1\leq i < |V|\} \cup \{((u,|V|),(u,|V|))\mid u\in V\}$.
The assignment $\xi$ assigns $a$ to all nodes $(u,i)\in V'$ with $i<|V|$ or $u=t$.
Let $M=(V',E',\xi)$ be a Kripke model.
It is clear that $G$ has an $s\text-t$-path if and only if $M,(s,1)\models\EG~a$.
\end{proof}

\begin{lemma}\label{lem:EG-AF-in-NL}
$\CTLMC{\EG}$ and $\CTLMC{\AF}$ are in $\NL$.
\end{lemma}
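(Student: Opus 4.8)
The plan is to exploit that the temporal operators $\EG$ and $\AF$ are \emph{idempotent}, so that a formula built from a single one of them collapses to a trivial normal form, after which model checking becomes a reachability question on the Kripke graph.

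First I would record the equivalences $\EG\EG\varphi\equiv\EG\varphi$ and $\AF\AF\varphi\equiv\AF\varphi$, together with $\EG\true\equiv\true\equiv\AF\true$ (all immediate from the semantics and the totality of $R$; the only points worth noting are $\varphi\models\AF\varphi$, which handles the nontrivial direction for $\AF$, and the ``tail of the witnessing path'' argument, which handles it for $\EG$). Consequently every $\{\EG\}$-formula has the shape $\EG^k\sigma$ with $k\ge 0$ and $\sigma\in\{\true\}\cup\PROP$, and is equivalent to $\true$ if $\sigma=\true$, to the atom $\sigma$ if $k=0$, and to $\EG\sigma$ otherwise; the analogous statement holds for $\{\AF\}$-formulas with $\EG$ replaced by $\AF$. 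A logspace machine reads off $\sigma$ and whether $k\ge 1$ by a single scan of the formula. Hence it suffices to decide, in $\NL$, instances of the form $\Model M,w\models\EG p$ and $\Model M,w\models\AF p$ for a single atom $p$.

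The technical core is the observation that, for a total graph $(W,R)$, a vertex subset $U\subseteq W$, and a state $w$, the predicate ``there is a path $\pi\in\Pi(w)$ with $\pi[i]\in U$ for all $i\ge 1$'' lies in $\NL$: since $(W,R)$ is finite, such a path exists iff $w\in U$ and some state $v\in U$ is reachable from $w$ by a path inside $U$ and lies on a cycle of length $\ge 1$ inside $U$. An $\NL$ machine checks $w\in U$, then guesses a path out of $w$ (verifying membership in $U$ step by step), at a nondeterministically chosen moment stores the current state $v$, continues guessing inside $U$ for at least one further step, and accepts as soon as the current state equals the stored $v$; a counter bounded by $|W|$ on each of the two phases guarantees halting, so the whole computation uses $O(\log|W|)$ space. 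Applying this with $U=\{v\mid p\in\xi(v)\}$ decides $\Model M,w\models\EG p$, giving $\CTLMC{\EG}\in\NL$. For $\AF$ I would use that $\Model M,w\models\AF p$ fails exactly when there is an infinite path from $w$ that never meets $p$, i.e.\ an infinite path inside $U=\{v\mid p\notin\xi(v)\}$; by the previous sentence this is an $\NL$ predicate, so $\Model M,w\models\AF p$ is a $\text{co-}\NL$ predicate, and $\CTLMC{\AF}\in\text{co-}\NL=\NL$ by Immerman--Szelepcs\'enyi.

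All steps are elementary; the only genuinely load-bearing insight is the idempotence of $\EG$ and $\AF$, which is what drops the complexity from $\P$ (the general case of Theorem~\ref{thm:CTLMC}) down to $\NL$. The one place requiring a little care is the cycle-reachability algorithm, where the two nondeterministic ``when to start/stop'' choices together with the accompanying counters and the one stored vertex must be arranged so that everything still fits in logarithmic space.
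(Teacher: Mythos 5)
Your proposal is correct and follows essentially the same route as the paper: collapse the formula via idempotence of $\EG$ (resp.\ $\AF$) to a single application on an atom, decide $\EG p$ by nondeterministically finding a lasso (a path to a state on a cycle, all inside the set of $p$-states) in logarithmic space, and obtain $\AF$ via the duality $\AF p\equiv\lnot\EG\lnot p$ together with closure of $\NL$ under complement. The only cosmetic difference is that you additionally spell out the $\true$ edge case and phrase the lasso condition as a reachability-to-a-cycle predicate, whereas the paper phrases it as guessing the two finite path segments directly; these are the same argument.
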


\begin{proof}
First note that $\EG\cdots\EG p \equiv \EG p$.

The algorithm for $\CTLMC{\EG}$ gets input $\langle (W,R,\xi),w_0,\EG^k p\rangle$.
If $k=0$ (i.e.~the formula to check equals $p$),
it checks whether $w_0\in\xi(p)$ and decides accordingly.
If $k>0$, the algorithm must verify whether $(W,R)$ has an infinite paths starting in $w_0$
on which $p$ is satisfied in every point.
The existence of such a path is equivalent to the existence
of two paths $w_0=v_1,v_2,\ldots,v_m$ and $v_m=u_1,u_2,\ldots,u_q=v_m$
for some $m,q\leq |W|$ such that $p$ is satisfied by all $v_i$ and $u_i$.
Both paths together form an ultimately periodic infinite path that is searched for.
The algorithm first guesses $v_m$ and $q$,
and then stepwise guesses the paths and verifies that $p$ is satisfied always.
This is clearly an $\NL$-algorithm.

Since $K,w_0\models \AF p$ iff $K,w_0\nmodels \EG \neg p$,
the above $\CTLMC{\EG}$-algorithm can be used to decide $\overline{\CTLMC{\AF}}$.
Since $\NL$ is closed under complement, $\CTLMC{\AF}$ is in $\NL$, too.
\end{proof}

\begin{lemma}\label{lemma:EG-and-NL}
$\CTLMC{\EG,\wedge}$ is in $\NL$.
\end{lemma}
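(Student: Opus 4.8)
The plan is to first bring every $\{\EG,\wedge\}$-formula into a normal form in which no $\EG$-operator occurs nested inside another one, and then to evaluate the normal form with a constant number of restricted reachability queries, each solvable in $\NL$.

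The key observation is the (general $\CTL$) equivalence $\EG(\alpha\wedge\EG\beta)\equiv\EG(\alpha\wedge\beta)$, valid for arbitrary formulas $\alpha,\beta$. From left to right this holds because $\EG\beta$ implies $\beta$, so any path witnessing the left-hand side already satisfies $\beta$ at every node. From right to left, a path $\pi$ all of whose nodes satisfy $\alpha\wedge\beta$ also witnesses $\EG(\alpha\wedge\EG\beta)$: at each $\pi[i]$ the suffix $\pi[i],\pi[i+1],\ldots$ is an infinite path on which $\beta$ holds throughout, so $K,\pi[i]\models\EG\beta$. Rewriting $\EG(\alpha\wedge\EG\beta)$ to $\EG(\alpha\wedge\beta)$ strictly decreases the number of $\EG$-symbols, so repeated application terminates and yields, for every $\{\EG,\wedge\}$-formula $\psi$, the equivalence $\EG\psi\equiv\EG(\bigwedge_{p\in A(\psi)}p)$, where $A(\psi)$ is the set of atoms occurring in $\psi$ (obtained by simply scanning $\psi$). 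Hence every $\{\EG,\wedge\}$-formula $\varphi$ is equivalent to a conjunction of atoms together with formulas of the form $\EG(\bigwedge_{p\in A}p)$ with $A$ a set of atoms, and this decomposition is read off $\varphi$ in logarithmic space.

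It then remains to check a single such conjunct at a state $w_0$ in $\NL$. Checking an atom-conjunct $p$ amounts to testing $p\in\xi(w_0)$. For a conjunct $\EG(\bigwedge_{p\in A}p)$, let $S$ be the set of states of the Kripke model that satisfy every atom in $A$; then $K,w_0\models\EG(\bigwedge_{p\in A}p)$ iff there is an infinite path from $w_0$ that never leaves $S$, which --- the graph being finite --- is equivalent to: $w_0\in S$ and, inside the subgraph induced by $S$, $w_0$ reaches some vertex lying on a cycle of length $\geq 1$. This is decided by the usual nondeterministic logspace algorithm: guess such a vertex $v$, and verify an $S$-path from $w_0$ to $v$ and an $S$-path of length at least one from $v$ to $v$, checking $S$-membership (i.e.\ the atoms in $A$) at every visited vertex. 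The algorithm for $\CTLMC{\EG,\wedge}$ now checks the atom-conjuncts of $\varphi$ directly and runs the above sub-procedure for each $\EG$-conjunct, reusing its work tape between conjuncts; since $\varphi$ has only polynomially many conjuncts and a conjunction of polynomially many $\NL$-predicates is again in $\NL$ (run the verifiers one after another), the whole procedure runs in nondeterministic logarithmic space.

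The one step that carries the weight is the collapse identity $\EG(\alpha\wedge\EG\beta)\equiv\EG(\alpha\wedge\beta)$. Without it, the obvious recursive evaluation of $\varphi$ would recurse to a depth equal to the $\EG$-nesting depth of $\varphi$, which may be linear in $|\varphi|$, so the space used would be $\Theta(\text{depth}\cdot\log n)$ --- too much for $\NL$. After the normalization, only bounded-depth reachability is left and the rest is routine.
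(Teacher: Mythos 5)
Your proof is correct and follows essentially the same route as the paper: the collapse equivalence $\EG(\alpha\wedge\EG\beta)\equiv\EG(\alpha\wedge\beta)$ (together with its degenerate case $\EG\EG\alpha\equiv\EG\alpha$) reduces every $\{\EG,\wedge\}$-formula to a conjunction of atoms and formulas $\EG\beta_\ell$ with $\beta_\ell$ a conjunction of atoms, after which each $\EG$-conjunct is decided by guessing a lasso inside the set of states satisfying $\beta_\ell$. Your lasso-search description and the observation that the conjunction of polynomially many $\NL$-tests stays in $\NL$ match the paper's argument.
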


\begin{proof}
We first notice that $\EG(\alpha\wedge\EG\,\beta)\equiv\EG(\alpha\wedge\beta)$.
(1) If $K,w\models \EG(\alpha\wedge\EG\,\beta)$,
then there exists a path starting in $w$ on which everywhere $\alpha\wedge\beta$ is satisfied.
(2) If $K,w\models \EG(\alpha\wedge\beta)$,
then there exists a path $\pi$ starting in $w$ on which everywhere $\alpha\wedge\beta$ is satisfied.
Then $K,\pi[m]\models \EG\beta$ (witnessed by $\pi^m$) for every $m$.
Therefore  $K,w\models \EG(\alpha\wedge\EG\beta)$.

Due to $\EG\EG\alpha\equiv\EG\alpha$ and the above equivalence,
every $\{\EG,\wedge\}$-formula can be transformed 
to an equivalent formula of the form $\alpha \wedge \bigwedge_{\ell=1,2,\ldots,k} \EG~\beta_{\ell}$,
where $\alpha$ and all $\beta_{\ell}$ are conjunctions of atoms.
The satisfaction $K,s\models\EG~\beta_{\ell}$
can be checked nondeterministally within logspace by guessing the relevant prefix of a looping infinite path
that satisfies $\beta_{\ell}$ in every node.
Doing this for all $\EG$-subformulas yields an $\NL$-algorithm for $\CTLMC{\EG,\wedge}$.
\end{proof}

\begin{lemma}\label{EG-vee-NL}
$\CTLMC{\EG,\vee}$ is in $\NL$.
\end{lemma}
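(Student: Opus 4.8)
The plan is to follow the pattern of Lemmas~\ref{lem:EG-AF-in-NL} and~\ref{lemma:EG-and-NL}: first normalize $\{\EG,\vee\}$-formulas, then give a single nondeterministic logspace, polynomial-time procedure verifying satisfaction. As preprocessing I would, in logarithmic space, flatten all disjunctions and repeatedly replace $\EG\EG\gamma$ by $\EG\gamma$ (using $\EG\EG\gamma\equiv\EG\gamma$), after which every subformula is either an atom, or a disjunction whose disjuncts are atoms or $\EG$-formulas, or an $\EG$-formula $\EG\gamma$ whose argument $\gamma$ is an atom or such a disjunction.

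The heart of the proof is a structural characterization of when $K,w\models\EG\psi$ for $\psi = A\vee\bigvee_{l}\EG\theta_l$, where $A$ is a disjunction of atoms (both $A$ and the $\EG$-part allowed to be empty as degenerate cases). The claim: $K,w\models\EG\psi$ iff either (i)~$K,w\models\EG A$, i.e.\ there is an infinite path from $w$ all of whose nodes satisfy some atom of $A$; or (ii)~there is a finite path $w=v_1,\dots,v_j$ in $K$ with $K,v_i\models A$ for all $i<j$ and $K,v_j\models\EG\theta_l$ for some $l$. For ``if'', case~(i) is immediate since $A$ implies $\psi$; in case~(ii) one prepends $v_1,\dots,v_{j-1}$ to an infinite $\theta_l$-witness $v_j=u_1,u_2,\dots$, noting that each $u_k$ again satisfies $\EG\theta_l$ (witnessed by its own suffix) and hence $\psi$, so the concatenation witnesses $\EG\psi$. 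For ``only if'', take an $\EG\psi$-witness $\pi$: if every $\pi[i]\models A$ we are in case~(i); otherwise let $j$ be minimal with $\pi[j]\not\models A$; then $\pi[j]\models\psi$ forces $\pi[j]\models\EG\theta_l$ for some $l$, and minimality gives $\pi[i]\models A$ for all $i<j$.

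Iterating this characterization unfolds any check $K,w\models\varphi$ into the search for a single path through $K$: walk through $A_0$-nodes, at some node commit to a disjunct $\EG\theta_1$ and switch the ``current subformula'' to $\theta_1$, walk through $A_1$-nodes, commit to $\EG\theta_2$, and so on, until finally at some node one verifies $\EG A_r$ by guessing an ultimately periodic continuation staying in $A_r$-nodes. Every current subformula arising this way is a subformula of the normalized $\varphi$ and strictly shrinks at each commit, so at most $|\varphi|$ commits occur; each walking segment can be shortened to $\le|W|$ steps and the final loop to $\le 2|W|$ steps. Crucially the search is \emph{forward-only}---after a commit the previous subformula is never needed again---so no recursion stack is required: the machine stores only the current state of $K$, a pointer into $\varphi$, a step counter, and, during the last phase, the guessed loop anchor. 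Checking ``the current node satisfies $A$'' just reads atom labels, and the degenerate cases ($\varphi$ an atom, $A$ empty, or an $\EG$ whose argument is again an $\EG$) are disposed of directly. This gives the $\NL$ upper bound; together with Lemma~\ref{lemma:EG-NL-hard} it completes the $\NL$-completeness asserted in Theorem~\ref{EG-fingerprint}.

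The step I expect to be the main obstacle is pinning down the structural characterization precisely---in particular the observation in the ``if'' direction that every node $u_k$ of a $\theta_l$-witness re-satisfies $\EG\theta_l$ and hence $\psi$, and the minimality argument in the ``only if'' direction---together with the bookkeeping needed to argue that the unfolded search really is forward-only and thus implementable without a stack, which is exactly what keeps the algorithm in $\NL$ rather than merely $\P$.
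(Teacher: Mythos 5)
Your proposal is correct and follows essentially the same route as the paper: the same normal form $\alpha\vee\bigvee_{\ell}\EG\,\beta_{\ell}$ with recursively structured $\beta_{\ell}$, the same two-case structural characterization of $K,w\models\EG\psi$ (an all-$A$ ultimately periodic path, or an $A$-prefix ending in a node satisfying some $\EG\theta_l$, proved via suffix-witnesses and a minimality argument), and the same observation that the resulting unfolding is a tail recursion of depth bounded by the formula, hence implementable in $\NL$ without a stack.
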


\begin{proof}
Every $\{\EG,\vee\}$-formula can be transformed into an equivalent
formula of the form $\alpha \vee \bigvee_{\ell=1,2,\ldots,k} \EG~\beta_{\ell}$ $(\ast)$,
where $\alpha$ is a disjunction of atoms and every $\beta_{\ell}$
is a formula of the form $(\ast)$ (for $k=0$, such a formula is a disjunction of atoms).\medskip

\begin{ownClaim}
Let $K=(W,R,\xi)$ be a Kripke model,
$\alpha$ be a disjunction of atoms, and $\beta_{\ell}$ be formulas of the form $(\ast)$.
Then $K,s\models  \EG(\alpha \vee \bigvee_{\ell=1,2,\ldots,k} \EG~\beta_{\ell})$
if and only if
\begin{enumerate}
\item\label{fall1} there is a path $v_1,\ldots,v_m$ through $K$ starting in $s$ and of length $m=|W|+1$
      such that $K,v_i\models\alpha$ for $i=1,2,\ldots,m$, or
\item\label{fall2} there is a path $v_1,\ldots,v_m$ through $K$ starting in $s$ and of length $1\leq m\leq|W|+1$
      such that $K,v_i\models\alpha$ for $i=1,2,\ldots,m-1$ and $K,v_m\models \EG~\beta_{q}$ for some $q$.
\end{enumerate}
\end{ownClaim}
\begin{proof}
The implication from left to right is straightforward.
Consider the other proof direction.
If \ref{fall1} happens, then $R$ contains an edge from $v_m$ to some predecessor on the path.
Using this loop we get an infinite path that satisfies $\alpha$ on every of its nodes.
If \ref{fall2} happens, then let $u_1,u_2,\ldots$ be the infinite path starting with $v_m=u_1$
such that $K,u_i\models\beta_q$ for all $i\geq 1$.
Then $K,u_i\models\EG\,\beta_q$ for all $i\geq 1$.
Consequently, on every node of the infinite path $v_1(=s),\ldots,v_m(=u_1),u_2,\ldots$
the formula $\alpha \vee \bigvee_{\ell=1,2,\ldots,k} \EG~\beta_{\ell}$ is satisfied.
Therefore $K,s\models \EG(\alpha \vee \bigvee_{\ell=1,2,\ldots,k} \EG~\beta_{\ell})$.
\claimqed 
\end{proof}

Using this claim,
an $\NL$-algorithm can proceed as follows.
On input $K,s,\alpha \vee \bigvee_{\ell=1,2,\ldots,k} \EG~\beta_{\ell}$,
it accepts if $K,s\models\alpha$.
Otherwise, it guesses an $i$ and goes to check $K,s\models\EG~\beta_{i}$
where $\beta_i=\alpha' \vee \bigvee_{\ell=1,2,\ldots,k} \EG~\beta'_{\ell})$.
For this, 
it guesses which of the two cases of the Claim has to be fulfilled.
Case 1 can be verified straightforwardly.
For case 2, it guesses the relevant $m$ and $q$,
guesses $v_i$ and checks that $K,v_i\models \alpha'$ for $i=1,2,\ldots,m-1$ 
and eventually recursively checks whether $K,v_m\models \EG~\beta'_{j}$ for some $j$.
Since this is a tail recursion whose depth is bounded by the depth of the input formula,
it can be performed nondeterministically within logspace. 
\end{proof}

\begin{lemma}\label{lemma:EG-neg-NL}
$\CTLMC{\EG,\neg}$ is in $\NL$.
\end{lemma}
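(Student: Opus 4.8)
The plan is to exploit that the fragment $\{\EG,\lnot\}$ has no binary connective: every $\{\EG,\lnot\}$-formula is therefore a chain $O_1O_2\cdots O_k\,x$ with each $O_i\in\{\EG,\lnot\}$ and $x$ an atom or $\true$. First I would normalise such a chain, in logarithmic space, to an equivalent formula of bounded nesting depth, and then give an $\NL$ algorithm for the few normal forms that can arise. Throughout, $\AF\beta$ abbreviates $\lnot\EG\lnot\beta$.

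For the normalisation I would use the equivalences $\lnot\lnot\alpha\equiv\alpha$, $\EG\EG\alpha\equiv\EG\alpha$ and the key collapse $\EG\AF\EG\alpha\equiv\AF\EG\alpha$, the last being the heart of the proof. To prove it, set $Q=\{v\mid\Model M,v\models\EG\alpha\}$ and $R=\{v\mid\Model M,v\models\AF\EG\alpha\}$; clearly $Q\subseteq R$. From every node of $Q$ there is an infinite path that stays inside $Q$, since every suffix of a path along which $\alpha$ holds everywhere is again such a path. Moreover, whenever $u\in R\setminus Q$ and $u\to u'$, then $u'\in R$: otherwise some infinite path from $u'$ would avoid $Q$, and prefixing it with $u\notin Q$ would give an infinite path from $u$ avoiding $Q$, contradicting $u\in R$. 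Together these two facts show that from every node of $R$ there is an infinite path that never leaves $R$, i.e.\ $\AF\EG\alpha$ implies $\EG\AF\EG\alpha$; the converse is immediate. Iterating the three rules, one checks that the ten formulas $x,\lnot x,\EG x,\EG\lnot x,\AF x,\AF\lnot x,\EG\AF x,\EG\AF\lnot x,\AF\EG x,\AF\EG\lnot x$ over a fixed atom $x$, together with $\true$ and $\lnot\true$, form a set that is closed under prefixing by $\lnot$ and by $\EG$; hence every $\{\EG,\lnot\}$-formula is equivalent to one of them, and the normal form of a given chain is obtained by scanning the chain from its innermost operator outwards while updating a state in this finite set --- a finite-state transduction, computable in logarithmic space.

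It remains to model check the normal forms in $\NL$. Literals (and $\true$, $\lnot\true$) are trivial. For $\EG\ell$ with $\ell$ a literal one guesses an ultimately periodic path all of whose nodes satisfy $\ell$, exactly as in the proof of Lemma~\ref{lem:EG-AF-in-NL}, and $\AF\ell=\lnot\EG\lnot\ell$ then follows since $\NL$ is closed under complementation. For the depth-two forms $\EG\AF\ell$ and $\AF\EG\ell$ --- the remaining four normal forms arise from these by a single negation, which by the rewrite rules only swaps $\EG\leftrightarrow\AF$ and replaces $\ell$ by $\lnot\ell$ --- observe that $\Model M,w\models\EG\AF\ell$ means that some infinite path from $w$ has all its nodes in the set of $\AF\ell$-nodes, while $\Model M,w\models\AF\EG\ell$ means that no infinite path from $w$ stays outside the set of $\EG\ell$-nodes. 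By the previous case, membership in the relevant node set is an $\NL$ predicate, so on each node of the guessed path the outer search can invoke an $\NL$ subroutine to test it; since $\NL$ is closed under complementation this composition is again in $\NL$. This yields $\CTLMC{\EG,\lnot}\in\NL$.

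I expect the main obstacle to be the collapse $\EG\AF\EG\alpha\equiv\AF\EG\alpha$ together with the bookkeeping that the ten-element set of normal forms is genuinely closed under $\EG$ and $\lnot$, so that the normalising automaton is well defined and the list exhaustive. Once that is settled, the finite-state normalisation and the $\NL$ algorithms for the bounded-depth normal forms are routine, the latter relying only on $\NL$ being closed under complementation.
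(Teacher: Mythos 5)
Your proposal is correct and follows essentially the same route as the paper's proof: the same key collapse $\EG\AF\EG\alpha\equiv\AF\EG\alpha$ (your set-based argument with $Q$ and $R$ is just a repackaging of the paper's path-concatenation), the same finite list of normal forms of temporal depth at most two, and the same $\NL$ algorithm that guesses an ultimately periodic path while invoking an $\NL$ subroutine (using closure of $\NL$ under complement) at each node.
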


\begin{proof}
The following equivalences hold for $\EG$ and its dual $\AF$.
\begin{enumerate}
\item $\EG~\EG~ \alpha \equiv \EG~\alpha$
\item $\AF~\AF\alpha \equiv \AF~\alpha$
\item\label{EGAFEG} $\EG~\AF~\EG~\alpha \equiv ~\AF~\EG~\alpha$
\item\label{AFEGAF} $\AF~\EG~\AF~\alpha \equiv \EG~\AF~\alpha$
\end{enumerate}

Proof of \ref{EGAFEG}:
If $K,w\models\EG~\AF~\EG~\alpha$, then clearly $K,w\models\AF~\EG~\alpha$.
For the other direction, assume $K,w\models\AF~\EG~\alpha$.
Take some $\pi\in\Pi(w)$.
Then $K,\pi[k] \models \EG\alpha$ for some ``smallest'' $k$
with $K,\pi[i] \nmodels \EG\alpha$ for $i=1,2,\ldots,k-1$.
Since $K,\pi[1]\models \AF~\EG~\alpha$,
it follows that $K,\pi[i] \models \AF~\EG~\alpha$ for $i=1,2,\ldots,k-1$.
Moreover, let $\rho$ be a path that witnesses $K,\pi[k]\models \EG \alpha$.
Then $\rho^j$ witnesses $K,\rho[j] \models \EG\,\alpha$ for all $j\geq 1$,
and from $K,\rho[j] \models \EG\,\alpha$ follows $K,\rho[j] \models \AF\,\EG\,\alpha$.
Concluding we have for the infinite path $\lambda=(\pi[1](=w),\pi[2],\ldots,\pi[k-1],\rho[1](=\pi[k]),\rho[2],\ldots)$
that $K,\lambda[i] \models \AF\,\EG\,\alpha$ for all $i$,
what means that $\lambda$ is a witness for $K,w\models \EG\,\AF\,\EG\,\alpha$.

The proof of \ref{AFEGAF} follows from \ref{EGAFEG} by the duality of $\AF$ and $\EG$. 
%\textbf{Nur der Vollst\"andigkeit halber:} Proof of \ref{AFEGAF}:
%If $w\models\AF~\EG~\AF~\alpha$, then
%$w\nmodels \neg \AF~\EG~\AF~\alpha$, hence $w\nmodels \EG~\AF~\EG~\neg\alpha$.
%By \ref{EGAFEG} this means $w\nmodels \AF~\EG~\neg\alpha$, i.e. $w\nmodels \neg \EG~\AF~\alpha$
%and we can conclude $w\models \EG~\AF~\alpha$.

These equivalences yield that every $\{\EG,\neg\}$-formula with atom $p$
is equivalent to $\EG~\AF~p$  or $\AF~\EG~p$, 
or to $\EG p$ or $\AF p$, or to $p$,
or to one of these formulas where $p$ is replaced by $\neg p$.
For a given $\{\EG,\neg\}$-formula it can be checked in logarithmic space
to which of these cases the formula belongs.

We first describe an algorithm for the $\EG~\AF~p$ case.

The algorithm gets input $\langle (W,R,\xi),w_0,\EG\AF p\rangle$.
It must verify whether $(W,R)$ has an infinite paths starting in $w_0$
on which $\AF p$ is satisfied in every point.
The existence of such a path is equivalent to the existence
of two paths $w_0=v_1,v_2,\ldots,v_m$ and $v_m=u_1,u_2,\ldots,u_q=v_m$
for some $m,q\leq |W|$ such that $\AF p$ is satisfied by all $v_i$ and $u_i$.
Both paths together form an ultimately periodic infinite path that is searched for.
The algorithm first guesses $v_m$ and $q$,
and then stepwise guesses the paths and verifies that $\AF p$ is satisfied always.
This is done by guessing the next $v_i$ (resp.~$u_i$),
and then starting the (slightly modified) $\NL$-algorithm for $\CTLMC{\AF}$
with input $\langle (W,R,\xi), v_i, \AF p \rangle$.
If it reaches an accepting configuration, then the next $v_i$ (resp.~$u_i$) is guessed etc.

This also yields an $\NL$-algorithm.

The algorithms for the other cases are constructed in the same way.
Since $\NL$ is closed under complement,
all algorithms are $\NL$-algorithms.
\end{proof}

\subsection{$\EF$}

\begin{theorem}\label{lemma:EF-or-is-NL-complete}
$\CTLMC{\EF}$ and $\CTLMC{\EF,\vee}$ are $\NL$-complete. 
\end{theorem}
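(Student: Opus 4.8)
The plan is to prove $\NL$-hardness and membership in $\NL$ separately, and since $\CTLMC{\EF}$ is a subproblem of $\CTLMC{\EF,\vee}$, it suffices to show $\NL$-hardness for $\CTLMC{\EF}$ and membership in $\NL$ for $\CTLMC{\EF,\vee}$. For the hardness direction I would reduce from the graph accessibility problem $\GAP$ (i.e.\ $\REACH$), which is the canonical $\NL$-complete problem. Given an instance $(G,s,t)$ of $\GAP$ with $G=(V,E)$, I would first make the transition relation total, e.g.\ by adding a self-loop on $t$ (and, if necessary, on sink nodes), then build a Kripke model $M=(V,R,\xi)$ where $R$ is the (totalized) edge relation and $\xi$ assigns the atom $p$ only to $t$. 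Then $M,s \models \EF\, p$ if and only if there is a finite path from $s$ to some state satisfying $p$, which is exactly reachability of $t$ from $s$. The whole construction is clearly logspace computable, giving $\GAP \leqlogm \CTLMC{\EF}$, hence $\NL$-hardness of both problems.

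For membership in $\NL$, the key observation is the idempotence-type collapse $\EF\,\EF\,\alpha \equiv \EF\,\alpha$, together with $\EF(\alpha\vee\beta)\equiv \EF\alpha\vee\EF\beta$. Using these, every $\{\EF,\vee\}$-formula can be rewritten into a disjunction over a set of ``threads'', each of the shape $\EF(a_1 \vee \cdots \vee \EF(b_1 \vee \cdots))$ with atoms at the leaves; more carefully, one normalizes to a formula of the form $\alpha \vee \bigvee_{\ell} \EF\,\gamma_\ell$ where $\alpha$ is a disjunction of atoms and each $\gamma_\ell$ recursively has this same form. Satisfaction $M,w_0 \models \EF\,\gamma$ then amounts to: there is a finite path from $w_0$ to some state $v$ (of length bounded by $|W|$, since longer paths can be shortcut) such that $M,v$ satisfies the ``outer layer'' of $\gamma$ — either $v$ satisfies one of the atoms of $\gamma$'s top-level disjunction, or recursively $M,v\models\EF\,\gamma'$ for one of the nested subformulas. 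An $\NL$ algorithm guesses which top-level disjunct to satisfy, and for an $\EF\,\gamma_\ell$ disjunct guesses a path step-by-step to a target state, then recurses into $\gamma_\ell$; this is tail recursion whose depth is bounded by the nesting depth of $\EF$ in the input formula, so the whole computation uses only logarithmic space (beyond the usual pointers into $M$). Since $\NL$ is closed under complement this is not even needed here.

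The main obstacle is making the normalization/recursion argument precise: one has to verify that the rewriting using $\EF\,\EF\,\alpha\equiv\EF\,\alpha$ and distributivity of $\EF$ over $\vee$ is logspace-computable (or at least that the algorithm can work on the formula on the fly without materializing an exponentially larger normal form) and that the path-length bound of $|W|$ is correct at every level of recursion. The cleanest route is probably to state and prove a Claim, analogous to the Claim in Lemma~\ref{EG-vee-NL}, characterizing $M,w_0\models\EF(\alpha \vee \bigvee_\ell \EF\,\gamma_\ell)$ as: either some reachable state satisfies a top-level atom of $\alpha$, or some reachable state $v$ (via a path of length $\le |W|$) satisfies $\EF\,\gamma_q$ for some $q$ — and then observe that the recursion tail-recurses into strictly smaller subformulas, so an $\NL$ machine can implement it directly. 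Everything else is routine and parallels the corresponding arguments already given for $\EG$.
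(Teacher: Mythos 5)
Your proof is correct, and the hardness half coincides with the paper's: both reduce from $\GAP$ by totalizing the edge relation (the paper takes the reflexive closure of $E$, you add self-loops only where needed — either works) and labelling $t$ with an atom $p$ so that reachability becomes $\EF\,p$. For membership your route is slightly more roundabout than the paper's. You normalize to the nested form $\alpha\vee\bigvee_\ell\EF\,\gamma_\ell$ with each $\gamma_\ell$ recursively of the same shape, and then run a depth-bounded tail recursion as in the $\{\EG,\vee\}$ case; you flag the correctness and logspace-computability of this normalization as the main obstacle. But the two equivalences you already state, $\EF\,\EF\,\alpha\equiv\EF\,\alpha$ and $\EF(\alpha\vee\beta)\equiv\EF\,\alpha\vee\EF\,\beta$, together give the full collapse $\EF(\alpha\vee\EF\,\beta)\equiv\EF(\alpha\vee\beta)$, which is exactly what the paper uses: every $\{\EF,\vee\}$-formula flattens to $\alpha\vee\EF\,\beta$ where $\alpha$ is the disjunction of the atoms occurring under no $\EF$ and $\beta$ the disjunction of the remaining atoms. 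This normal form is trivially logspace-computable, causes no blow-up, and reduces the algorithm to a single reachability guess with no recursion — so the obstacle you identify dissolves. (This is also the point where $\EF$ genuinely differs from $\EG$: the analogous collapse fails for $\EG$ with $\vee$, which is why the recursion is needed there but not here.)
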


\begin{proof}
It suffices to show $\NL$-hardness of  $\CTLMC{\EF}$
and containment in $\NL$ of $\CTLMC{\EF,\vee}$.

$\NL$-hardness of  $\CTLMC{\EF}$ follows by a reduction from the directed graph accessability problem as follows.
Let $\langle (V,E),s,t \rangle$ be an instance of the graph accessability problem---i.e. we
want to decide whether graph $(V,E)$ has an $s$-$t$-path.
Let $\hat{E}$ be the reflexive closure of $E$.
Then $(V,\hat{E})$ is a total graph, and it has an $s$-$t$-path if and only if $(V,E)$ has some.
Define the assignment $\xi$ as $\xi(t)=\{p\}$ and $\xi(w)=\emptyset$ for $w\not=t$.
Then $(V,E)$ has an $s$-$t$-path if and only if $(V,\hat{E},\xi),s \models \EF~p$.

For $\CTLMC{\EF,\vee}\in\NL$,
note that $\EF(\alpha\vee\EF\beta)\equiv \EF(\alpha\vee\beta)$
and $\EF\alpha\vee\EF\beta \equiv \EF(\alpha\vee\beta)$.
Thus, every $\{\EF,\vee\}$-formula can be transformed into an equivalent
formula of the form $\alpha \vee \EF~\beta$,
where $\alpha$ and $\beta$ are disjunctions of atoms.
This transformation can be done in logarithmic space.
The $\NL$ algorithm on input $\langle K,w_0,\phi \rangle$
verifies whether $K,w_0\models \alpha$ or guesses a reachable $v$ 
and verifies $K,v\models\beta$.
\end{proof}

\begin{theorem}\label{thm:EFneg}
$\CTLMC{\EF,\neg}$ is $\NL$-complete. 
\end{theorem}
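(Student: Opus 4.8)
The plan is to get $\NL$-hardness for free and to spend the work on membership, following the pattern of Lemma~\ref{lemma:EG-neg-NL}. Every $\{\EF\}$-formula is also an $\{\EF,\neg\}$-formula, so $\NL$-hardness of $\CTLMC{\EF,\neg}$ is inherited directly from the $\NL$-hardness of $\CTLMC{\EF}$ in Theorem~\ref{lemma:EF-or-is-NL-complete}. It remains to place $\CTLMC{\EF,\neg}$ in $\NL$. Since $\{\EF,\neg\}$ has no binary connective, every $\{\EF,\neg\}$-formula has the shape $O_1O_2\cdots O_k\,p$ with $O_i\in\{\EF,\neg\}$ and a single atom $p$. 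Writing $\AG\alpha$ as shorthand for $\neg\EF\neg\alpha$ and pushing negations inward by $\neg\neg\alpha\equiv\alpha$ and $\neg\EF\alpha\equiv\AG\neg\alpha$, every such formula is equivalent to $Q\,\ell$, where $Q$ is a word over $\{\EF,\AG\}$ and $\ell\in\{p,\neg p\}$; this rewriting is log-space computable.

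The first key step is to collapse $Q$ to constant length. I would prove the equivalences
\[
  \EF\EF\alpha\equiv\EF\alpha, \quad \AG\AG\alpha\equiv\AG\alpha, \quad
  \AG\EF\AG\EF\alpha\equiv\AG\EF\alpha, \quad \EF\AG\EF\AG\alpha\equiv\EF\AG\alpha .
\]
The first two are standard. The third rests on the observation that $\AG\EF\alpha$ is preserved along $R$-reachability (if it holds at $w$, it holds at every state reachable from $w$): this gives ``$\Leftarrow$'' by taking each required witness to be the state itself, and ``$\Rightarrow$'' by splicing a witness chain for $\AG\EF\AG\EF\alpha$ into one for $\AG\EF\alpha$. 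The fourth is the dual of the third, obtained via $\AG\alpha\equiv\neg\EF\neg\alpha$. Applying $\EF\EF\!\to\!\EF$ and $\AG\AG\!\to\!\AG$ makes $Q$ strictly alternating, and since any strictly alternating word of length $\geq 4$ starts with one of the two length-$4$ patterns above, the last two rules bring $Q$ down to one of $\varepsilon,\EF,\AG,\EF\AG,\AG\EF,\EF\AG\EF,\AG\EF\AG$. Thus every $\{\EF,\neg\}$-formula is equivalent, log-space computably, to one of fourteen \emph{normal forms} $Q\,\ell$ with $Q$ among these seven words and $\ell\in\{p,\neg p\}$.

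It then suffices to model-check each normal form in $\NL$, which I would do by a bounded-depth composition argument: by induction on the height $d$ of a $\{\EF,\AG\}$-tower $\psi$, the model-checking instances of both $\psi$ and its negation lie in $\NL$. For the step, an instance of $\EF\psi'$ is checked by nondeterministically walking from the initial state to some reachable state $v$ and then running the $\NL$ procedure for the shorter tower $\psi'$ at $v$; an instance of $\AG\psi'=\neg\EF\neg\psi'$ is then handled by closure of $\NL$ under complement, noting that $\neg\psi'$ is again a shorter-or-equal-height tower (flip $\EF\leftrightarrow\AG$ and flip the literal). For normal forms $d\leq 3$, so the recursion depth is constant and the overall algorithm is genuinely in $\NL$ (not merely $\LOGCFL$). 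The $\NL$ algorithm for $\CTLMC{\EF,\neg}$ then, on input $\langle\Model M,w_0,\phi\rangle$, first computes in log space the normal form $Q\,\ell$ of $\phi$ and runs the corresponding procedure; correctness is immediate from the equivalences above.

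I expect the main obstacle to be exactly the equivalence $\AG\EF\AG\EF\alpha\equiv\AG\EF\alpha$ (and hence its dual): it is what forces $Q$ to bounded length and thereby keeps the composition depth — and the whole algorithm — inside $\NL$; without it the recursion depth would be governed by the unbounded $\EF$-nesting of the input formula. The remaining ingredients — negation-pushing, the word-reduction bookkeeping, and the composition algorithm — are routine.
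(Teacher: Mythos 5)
Your proposal is correct and follows essentially the same route as the paper: hardness inherited from $\CTLMC{\EF}$, rewriting to a literal preceded by an $\EF/\AG$ word, collapsing that word to length at most three via idempotence and the four-operator absorption equivalence, and then a constant-depth composition of $\NL$ procedures using closure of $\NL$ under complement. The only cosmetic difference is that you prove $\AG\EF\AG\EF\alpha\equiv\AG\EF\alpha$ directly and obtain $\EF\AG\EF\AG\alpha\equiv\EF\AG\alpha$ by duality, whereas the paper argues in the opposite order.
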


\begin{proof}
$\NL$-hardness follows from that of $\CTLMC{\EF}$ (Theorem~\ref{lemma:EF-or-is-NL-complete}).

Every $\{\EF,\neg\}$-formula can be rewritten as 
a formula with $\EF$s and $\AG$s followed by a literal $p$ or $\neg p$.
It is clear that $\EF \EF \alpha \equiv \EF \alpha$ and $\AG \AG \alpha \equiv \AG \alpha$.
Thus every such formula can be rewritten as one having a prefix of alternating $\EF$s and $\AG$s.
With the equivalences of the following claim we can reduce this prefix to length $\leq 3$.

\begin{ownClaim}\label{claim:EFAG}
Let $K$ be a Kripke model, $w$ be a node of $K$, and $\alpha$ be a $\CTL$-formula.
\begin{enumerate}
\item\label{EFEF} $K,w\models \EF\EF\alpha$ if and only if $K,w\models\EF\alpha$.
\item\label{AGAG} $K,w\models \AG \AG \alpha$ if and only if $K,w\models \AG \alpha$.
\item\label{EFAG} $K,w\models \EF \AG \EF \AG  \alpha$ if and only if $K,w\models \EF \AG  \alpha$.
\end{enumerate}
\end{ownClaim}

(\ref{EFEF}) and (\ref{AGAG}) are straightforward.
For (\ref{EFEF}), notice that $K,w\models\alpha$ implies $K,w\models\EF \alpha$.
For (\ref{EFAG}), we consider both proof directions separately.
\vspace*{1ex}

$
\begin{array}{@{}rrlr}
\multicolumn{4}{l}{K,w\models \EF \AG \EF \AG  \alpha} \\
 & \Rightarrow &
\exists \pi\in \Pi(w) ~ \exists k\geq 1 ~ \forall \rho\in\Pi(\pi[k]) ~\\
&&~~\forall j\geq 1 :  K,\rho[j] \models \EF \AG  \alpha 
& \text{(semantics \ldots)} \\
& \Rightarrow & \exists \pi\in \Pi(w) ~ \exists k\geq 1 ~ \forall \rho\in\Pi(\pi[k]) :  K,\rho[1] \models \EF \AG  \alpha 
& \text{(take $j=1$)} \\
& \Rightarrow & \exists \pi\in \Pi(w) ~\exists k\geq 1:  K,\pi[k] \models \EF \AG  \alpha & \text{($\rho[1]=\pi[k]$)} \\
& \Rightarrow & K,w \models \EF \EF \AG  \alpha & \text{(semantics of $\EF$)} \\
& \Rightarrow & K,w \models \EF \AG  \alpha     & \text{(part (\ref{EFEF}))}
\end{array}
$
\vspace*{2ex}

For the other direction,
we use (\ref{AGAG}) and the fact that $K,w\models\beta$ implies $K,w\models\EF \beta$.
\vspace*{1ex}

$
\begin{array}{@{}rrlr}
\multicolumn{4}{l}{K,w \models \EF \AG  \alpha} \\
& \Rightarrow &
K,w \models \EF \AG \AG \alpha \\
& \Rightarrow &
K,w \models \EF \AG \EF \AG \alpha 
\end{array}
$
\vspace*{-2ex}

\claimqed
\vspace*{1ex}

By Claim \ref{claim:EFAG} follows
that every formula in the $\{\EF,\neg\}$-fragment has an equivalent formula
in the $\{\EF,\AG\}$-fragment with atomic negation,
whith a prefix of at most three temporal operators.
Since $\CTLMC{\EF}$ and $\CTLMC{\AG}$ are in $\NL$
(follows from Theorem~\ref{lemma:EF-or-is-NL-complete} and the closure of $\NL$ under complement), similar as in the proof of Lemma~\ref{lemma:EG-neg-NL},
an $\NL$-algorithm can be composed that combines the $\CTLMC{\EF}$ and $\CTLMC{\AG}$ algorithms
in order to evaluate the bounded number of alternations of temporal operators.
\end{proof}

\begin{theorem}
$\CTLMC{\EF,\oplus}$ is $\AC1$-hard. 
\end{theorem}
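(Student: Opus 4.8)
The plan is to prove $\AC1$-hardness by a logspace many-one reduction from $\ASAGAP_{\log}$, the alternating slice graph accessibility problem of logarithmic depth, which is $\AC1$-complete. Note that one cannot reduce from the bounded-degree variant $\ASAGAPtwooutonein_{\log}$ used in the earlier hardness proofs of this section, since that problem is only $\LOGCFL$-complete: it is precisely the unbounded out-degree of the $\forall$-nodes that is needed to reach $\AC1$. This unboundedness, however, causes no trouble here, because $\EF$ will be used — in alternation with itself — to simulate $\AG$, and $\AG$ quantifies over \emph{all} reachable states at once. What prevents the whole thing from collapsing into $\NL$ (as $\CTLMC{\EF,\neg}$ does by Theorem~\ref{thm:EFneg} via the prefix collapse $\EF\AG\EF\AG\alpha\equiv\EF\AG\alpha$ of Claim~\ref{claim:EFAG}) is that the exclusive-ors sitting between the $\EF$'s carry slice information from one level to the next, exactly as in the $\P$-hardness proof of $\CTLMC{\EG,\oplus}$.

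Given an instance $\langle G,s,T\rangle$ of $\ASAGAP_{\log}$ with slices $V_0,\ldots,V_m$ ($m\le\log|V|$, $m$ even, $s\in V_0$, $T\subseteq V_m$), I would totalise $G$ by the $\sharp$-construction (add a sink $e$ with a self-loop and edges from $V_m\cup\{e\}$ to $e$). The assignment $\xi$ gives a slice marker $s_i$ to every node of $V_i$, a fresh marker to $e$, the atom $t$ to every node of $T$, and one global atom $g$ to \emph{every} state, so that $g$ realises the constant $\true$ missing from the affine fragment and $\neg\alpha$ can be written as $\alpha\oplus g$ inside $\{\EF,\oplus\}$. Then set $\varphi_m:=t$ and, for $i<m$,
\[
\varphi_i := \EF(\mu_i\oplus\varphi_{i+1})\ \text{if $V_i$ is an $\exists$-slice},\qquad
\varphi_i := g\oplus \EF(\mu_i\oplus\varphi_{i+1})\ \text{if $V_i$ is a $\forall$-slice},
\]
where $\mu_i$ is a fixed exclusive-or of a few of the slice markers $s_i,s_{i+1},s_{i+2},\ldots$ (and possibly the marker of $e$), chosen so that the inner formula $\mu_i\oplus\varphi_{i+1}$ is \emph{constant} on every slice $V_j$ with $j\ge i+2$ and on the sink (so that the $\EF$, resp.\ the $\neg\EF\neg=\AG$ at $\forall$-levels, is not constrained by anything beyond slice $V_{i+1}$), equals $\varphi_{i+1}$ on slice $V_{i+1}$, and has on the slices $V_{\le i}$ whatever parity is needed for level $i-1$ to work.

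With the markers fixed I would prove, by induction on the slices and paralleling the $\EG,\oplus$ argument, the chain of claims: (i) $\varphi_j$ is inert off its own slice, i.e.\ not satisfied in any node of $V_i$ with $i\ne j$; (ii) for an $\exists$-node $w\in V_i$, $K,w\models\varphi_i$ iff some direct successor $v\in V_{i+1}$ satisfies $\varphi_{i+1}$ — since all edges go one slice forward, the only $\varphi_{i+1}$-nodes the $\EF$ can reach are the direct successors; (iii) for a $\forall$-node $w\in V_i$, $\varphi_i$ is $\AG(\mu_i\oplus\varphi_{i+1})$, which by (i) and the choice of $\mu_i$ says exactly that no reachable node of $V_{i+1}$ fails $\varphi_{i+1}$, i.e.\ \emph{every} direct successor satisfies it. Combining (ii) and (iii) with the defining clauses of $\apath_G$ gives the main claim: $K,w\models\varphi_i$ iff $\apath_G(w,T)$ for all $i\le m$ and $w\in V_i$; in particular $\langle G,s,T\rangle\in\ASAGAP_{\log}$ iff $K,s\models\varphi_0$. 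Since $m\le\log|V|$, the formula $\varphi_0$ has size $O(\log|V|)$, and $\langle K,s,\varphi_0\rangle$ is computable in logarithmic space, which finishes the reduction.

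The main obstacle is the exclusive-or bookkeeping: one must pick, uniformly for every level $i$, the exact set of markers in $\mu_i$ so that all four regimes line up at once — inertness of $\varphi_{i+1}$ above slice $i+1$ and on the sink, the correct value on slice $i+1$, the value forced on slices $V_{\le i}$ so that level $i-1$ is consistent, and the extra atom $g$ inserted at $\forall$-levels to implement the negation — and then check that this single choice makes claims (i)--(iii) go through at every slice, in particular at the boundary slice $V_{m-1}$ and around the added sink $e$. Everything else is the routine slice-by-slice induction already carried out for $\EU$ and $\EG$ in this paper.
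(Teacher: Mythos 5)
Your overall strategy---reducing from the $\AC1$-complete $\ASAGAP_{\log}$, simulating $\AG\alpha$ as $g\oplus\EF(g\oplus\alpha)$ with a globally true atom, and threading slice markers through the exclusive-ors---is exactly the paper's. But the formula you build has a genuine gap that no choice of the marker sets $\mu_i$ can repair. You nest only $\varphi_{i+1}$ inside $\varphi_i$ and demand that $\mu_i\oplus\varphi_{i+1}$ be \emph{constant} on every slice $V_j$ with $j\ge i+2$ and on the sink, so that the outer $\EF$ (or the simulated $\AG$) is unaffected by anything beyond slice $V_{i+1}$. This fails already at $j=i+2$: a node $u\in V_{i+2}$ reaches itself, so by your own intended semantics $\varphi_{i+1}$ holds at $u$ exactly when $\varphi_{i+2}$ does, i.e.\ exactly when $\apath_G(u,T)$. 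Hence $\varphi_{i+1}$ is \emph{not} constant on $V_{i+2}$, and since $\mu_i$ is a XOR of slice markers and therefore constant on each slice, $\mu_i\oplus\varphi_{i+1}$ acquires true witnesses two slices below $w$ whenever the nodes of $V_{i+2}$ reachable from $w$ disagree on $\apath_G(\cdot,T)$. These spurious witnesses break your claim (ii), whose justification (``the only $\varphi_{i+1}$-nodes the $\EF$ can reach are the direct successors'') is wrong because $\EF$ ranges over all reachable nodes including $w$ itself and everything further down; claim (i) fails on slice $V_{j+1}$ for the same reason.

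The paper's construction addresses precisely this point, and the remedy is not mere bookkeeping: the inner formula of $\varphi_i$ is $\bigoplus_{z\in\alpha(i)}z\oplus\bigoplus_{l=0}^{i-1}\varphi_l$, the XOR of \emph{all} lower-level formulas. The invariants (Claim~\ref{claim:EFoplus-props}) are that on each layer the two adjacent formulas agree, so the pair $\varphi_j\oplus\varphi_{j+1}$ cancels there; that formulas two or more levels above have a parity-determined constant value; and that the values of the formulas below are pinned down via the auxiliary chains $a_i,b_i$---for which your $\sharp$-construction offers no substitute. This telescoping is exactly what forces $|\varphi_{i+1}|\approx 2\cdot|\varphi_i|$ and is the reason the reduction is confined to graphs of logarithmic depth. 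A useful sanity check: your $\varphi_0$ has size $O(m)$, so if your reduction were sound it would apply verbatim to the unrestricted, $\P$-complete $\ASAGAP$ and yield $\P$-hardness of $\CTLMC{\EF,\oplus}$---a question the paper explicitly leaves open and conjectures to have a negative answer.
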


\begin{proof}
We give a reduction from the $\AC1$-complete problem $\ASAGAP_{\log}$.
Let $\langle G,s,T \rangle$ be an instance of $\ASAGAP_{\log}$,
where $V=V_{\forall} \cup V_{\exists}$
consists of slices $V = V_0 \cup V_1 \cup \ldots \cup V_{\ell}$.
W.l.o.g. we assume that $V_{\ell} \subseteq V_{\forall}$. 

Next we describe the construction of a Kripke model $\KEFo$ that bases on $G$.
In order to ease the readability of the proof,
we prefer to use the indices of the slices in reverse order.
Let $W'_0=V_{\ell}$ (the slice with nodes without successors),
$W'_1 = V_{\ell-1}, W'_2=V_{\ell-2}, \ldots, W'_{\ell}=V_0$.
By the above convention, $W'_0$ consists of $\forall$-nodes.
Thus $V_{\forall} = \bigcup_{i \text{ even}} W'_i$ and $V_{\exists} =   \bigcup_{i \text{ odd}} W'_i$.
Eventually, we add $2(\ell+1)$ new nodes and define $W_i = W'_i \cup \{a_i, b_i\}$ for $i=0,1,\ldots,\ell$.
We will call each $W_i$ as \emph{layer $i$},
and $W=\bigcup\limits_{i=0}^{\ell} W_i$ is the set of nodes of $\KEFo$.

Next we consider the edges.
We take all edges from $E$, and add loops $(u,u)$ for all $u\in W_0$.
The new $a_i$ nodes form a path $\{(a_i,a_{i-1}) \mid i=\ell,\ell-1,\ldots,1\}$,
and the new $b_i$ nodes form a path $E_b=\{(b_i,b_{i-1}) \mid i=\ell,\ell-1,\ldots,1\}$.
Moreover, for odd $i$ every node $u\in W_i$ has an edge $(u,a_{i-1})$ to $a_{i-1}$,
and for even $i\geq 2$  every node $u\in W_i$ has an edge $(u,b_{i-1})$ to $b_{i-1}$.
Let $E'$ denote this set of edges.

We complete the description of $\KEFo$ with the assignment $\xi$.
It marks each layer $W_i$ with an individual atom $z_i$.
Moreover, the nodes in $T$ and $b_0$ are marked with $t$.
$$
\xi(w)=\begin{cases}
        \{z_0,t\}, & \text{ if $w\in W_0 \cap (T\cup\{b_0\})$} \\
        \{z_i\},   & \text{ if $w\in W_i \cap \overline{T\cup\{b_0\}}$}
       \end{cases}
$$
The Kripke model $\KEFo$ constructed from $G$ is defined as $\KEFo=(W,E',\xi)$.
Figure~\ref{fig:EFoplus} shows an example for the construction.
(This example does not have logarithmic depth, but gives
a good insight into the construction.)

\begin{figure}[ht]
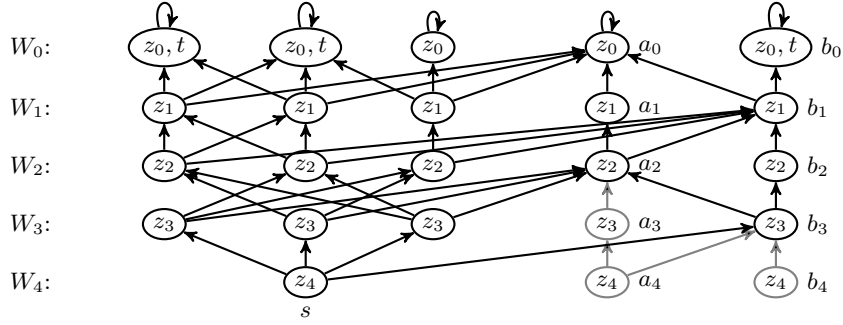


\centering

\EFoplusACeinshart

\caption{Kripke model $\KEFo$ constructed from an alternating graph.}
\label{fig:EFoplus}
\end{figure}

For $i=0,1,2,\ldots,\ell$, we inductively define formulas $\varphi_i$ as follows.
Here we use $\AG \alpha$ as abbreviation for $s \oplus \EF (s\oplus \alpha)$
for a new atom $s$ that is satisfied in every node of the Kripke model.
Under this condition, $s \oplus \EF (s\oplus \alpha) \equiv \neg \EF \neg \alpha \equiv \AG \alpha$.
$$
\varphi_i = 
\begin{cases}
t, & \text{ if $i=0$} \\
\EF\Big(\bigoplus\limits_{z\in\alpha(i)} z ~\oplus~ \bigoplus\limits_{j=0}^{i-1} \varphi_j \Big), & 
                   \text{ if $i>0$ and odd (layer $i$: $\exists$-nodes)} \\
\AG\Big(\bigoplus\limits_{z\in\alpha(i)} z ~\oplus \bigoplus\limits_{j=0}^{i-1} \varphi_j \Big), & 
                   \text{ if $i>0$ and even (layer $i$: $\forall$-nodes)} 
\end{cases}
$$
where each $\alpha(i)$ is a subset of $\{z_0,z_1,\ldots,z_i\}$ defined as follows.
Let $\# A$ denote the number of elements of the set $A$.
\begin{itemize}
\item for odd $i$ and $j< i-1$ :

$z_j\in\alpha(i)$ ~~~iff 

~~~$\#\{ m \in\{0,1,2,\ldots,j-1\} \mid m \text{ odd} \} + \#\{ m\in\{j+2,\ldots,i-1\} \mid m \text{ even}\}$ is odd

\item for odd $i$ and $j\in\{i-1,i\}$ :

$z_j\in\alpha(i)$ ~~~iff 
~~~$\#\{m\in\{0,1,2,\ldots,i-2\} \mid m \text{ odd}\}$ is odd

\item for even $i$ and $j< i-1$ :

$z_j\in\alpha(i)$ ~~~iff 

~~~$\#\{ m \in\{0,1,2,\ldots,j-1\} \mid m \text{ odd} \} + \#\{ m\in\{j+2,\ldots,i-1\} \mid m \text{ even}\}$ is even

\item for even $i$ and $j\in\{i-1,i\}$ :

$z_j\in\alpha(i)$ ~~~iff 
~~~$\#\{m\in\{0,1,2,\ldots,i-2\} \mid m \text{ odd}\}$ is odd
\end{itemize}
As examples, we write down $\varphi_0$, $\varphi_1$, $\varphi_2$, and $\varphi_3$.
\begin{itemize}
\item $\varphi_0=t$.
\item
For $\varphi_1$: $z_0,z_1\not\in\alpha(1)$ since $\#\{m\in\emptyset\mid m\text{ odd}\}=0$ is even.

Thus, $\varphi_1=\EF t$.
\item
For $\varphi_2$: $z_0\in\alpha(2)$ since $\#\{m\in\emptyset\mid m\text{ odd}\}+\#\{m\in\emptyset\mid m\text{ even}\}=0$ is even.
$z_1,z_2\not\in\alpha(2)$ since $\#\{m\in\{0\}\mid m\text{ odd}\}=0$ is even.

Thus $\varphi_2=\AG(z_0 \oplus t \oplus \EF t)$.
\item
For $\varphi_3$: $z_0\in\alpha(3)$ since $\#\{m\in\emptyset\mid m\text{ odd}\}+\#\{m\in\{2\}\mid m\text{ even}\}=1$ is odd.
$z_1\in\alpha(3)$ since $\#\{m\in\{3\}\mid m\text{ odd}\}+\#\{m\in\emptyset\mid m\text{ even}\}=1$ is odd.
$z_2,z_3\in\alpha(3)$ since $\#\{m\in\{0,1\}\mid m\text{ odd}\}=1$ is odd.

Thus $\varphi_3=\EF(z_0\oplus z_1 \oplus z_2\oplus z_3 \oplus t \oplus \EF t \oplus \AG(z_0 \oplus t \oplus \EF t))$.
\end{itemize}

The following claim contains the crucial properties of Kripke model $\KEFo$ and the formulas $\varphi_i$.

\begin{ownClaim}\label{claim:EFoplus-props}
For every $j=0,1,\ldots,\ell$ and every node $w_j\in W_j$ the following holds.
\begin{description}
\item[(A)] $\KEFo,w_j \models \varphi_j$ if and only if $\KEFo,w_j\models \varphi_{j+1}$.
\item[(B)] For all $i\geq j+2$ holds $\KEFo,w_j \models \varphi_i$ if and only if $i$ is even.
%\item[(D)] For all $i<j$ holds: $w_j \models \varphi_i$ if and only if $i$ is odd.
\item[(C)] $\KEFo,b_j\models \varphi_j$ and $\KEFo,a_j\nmodels\varphi_j$.
\item[(D)] For $w_j\in V\cap W_j$: $\KEFo,w_j\models \varphi_j$ if and only if $\apath_G(w_j,T)$.
\end{description}
\end{ownClaim}

\begin{proof}
 Throughout the proof,
we will use the following straightforward connection between sums of $z_i$ and their satisfaction in different layers.
By the construction of the Kripke model $\KEFo$, each $z_j$ is satisfied exactly in nodes of layer $W_j$.
Therefore  for all $i\geq j$ and 
\begin{align}
\text{for every $w_j\in W_j$ holds:}~~z_j\in\alpha(i) \text{~~if and only if~~} \KEFo,w_j\models \bigoplus\limits_{z\in\alpha(i)} z.
\label{layers_and_z}
\end{align}

The proof of the Claim proceeds by induction on $j$.

The \textbf{induction base} is $j=0$.
For case \lequiv,
we have to consider $\varphi_0=t$ and $\varphi_1=\EF t$.
In layer $W_0$, all nodes only have itself as successor,
and therefore $t$ is satisfied in a node of layer $W_0$ if and only if $\EF t$ is satisfied by this node. 

For case \daneben, clearly $\KEFo,b_0\models t$ and $\KEFo,a_0\nmodels t$.

For case \ecircuit, $\KEFo,w_0\models t$ if and only if $\KEFo,w_0\in T$ if and only if $\apath_G(w_0,T)$.

For case \drueber, we proceed by induction on $i$. First, notice that for every $w_0$ in layer $0$,
$\KEFo,w_0\models\EF(\psi)$ iff $\KEFo,w_0\models \psi$, and $\KEFo,w_0\models\AG(\psi)$ iff $\KEFo,w_0\models\psi$.

The base case is $i=2$.
Every node in layer $W_0$ satisfies,
$\varphi_2= \AG(z_0 \oplus t \oplus \EF t)$.
For the inductive step, consider a node $w_0\in W_0$.
Notice that $\KEFo,w_0\nmodels\varphi_0\oplus\varphi_1$ (part \lequiv).
By the inductive hypothesis, 
all formulas $\varphi_q$ for even $q$ with $2\leq q <i$ are satisfied in $w_0$, 
and all formulas $\varphi_r$ for odd $r$ with $2\leq r <i$ are not satisfied in $w_0$.
By the semantics of $\oplus$ we can conclude
\begin{align}
\KEFo,w_0\models\bigoplus\limits_{l=0}^{i-1}\varphi_l & \text{~~~ if and only if~~~} \#\{m\in\{2,3,\ldots,i-1\}\mid m\text{ even}\} \text{ is odd.}
\label{bc1}
\end{align}

We have to consider the cases for odd resp.~even $i$ separately,
and we start with $i>2$ being odd. 
By the definition of $\alpha(i)$ we have
\begin{align}
z_0\in\alpha(i) \text{~~if and only if~~} \#\{m\in\{2,3,\ldots,i-1\}\mid m\text{ even}\}\text{ is odd.}
\label{bc2}
\end{align}
From (\ref{bc1}) and (\ref{bc2}) being equivalences with the same right-hand side, 
and applying (\ref{layers_and_z}) for $w_0$ in layer $j=0$,
we get
\begin{align*}
\KEFo,w_0 \models \bigoplus\limits_{z\in\alpha(i)} z & \text{~~~if and only if~~~} \KEFo,w_0\models\bigoplus\limits_{l=0}^{i-1} \varphi_l,
\end{align*}
and therefore
\begin{align*}
\KEFo,w_0\nmodels\bigoplus\limits_{z\in\alpha(i)} z ~\oplus~  \bigoplus\limits_{l=0}^{i-1}\varphi_l ,
& \text{~~yielding~~} \KEFo,w_0\nmodels\underbrace{\EF\Big(\bigoplus\limits_{z\in\alpha(i)} z ~\oplus~  \bigoplus\limits_{l=0}^{i-1}\varphi_l\Big)}_{=\varphi_i} .
\end{align*}

For even $i>2$, we have 
\begin{align}
z_0\in\alpha(i) \text{~~if and only if~~} \#\{ m\in\{2,3,\ldots,i-1\} \mid m \text{ even}\} \text{ is even.}
\label{bc3}
\end{align}
From (\ref{bc1}) and (\ref{bc3}), 
and applying (\ref{layers_and_z}) for $w_0$ in layer $j=0$,
we get 
\begin{align*}
\KEFo,w_0 \models \bigoplus\limits_{z\in\alpha(i)} z & \text{~~~if and only if~~~} \KEFo,w_0\nmodels \bigoplus\limits_{l=0}^{i-1} \varphi_l .
\end{align*}
Therefore, 
\begin{align*}
\KEFo,w_0\models\bigoplus\limits_{z\in\alpha(i)} z ~\oplus~  \bigoplus\limits_{l=0}^{i-1}\varphi_l ,
& \text{~~yielding~~} \KEFo,w_0\models\AG\Big(\bigoplus\limits_{z\in\alpha(i)} z ~\oplus~  \bigoplus\limits_{l=0}^{i-1}\varphi_l\Big) .
\end{align*}

This concludes the proofs of the base cases.

For the \textbf{induction step}, consider $j>0$.
We start with some essential observations for nodes $w_j\in W_j$.
For even $i>j$, the formula $\varphi_i$ has the form $\AG(\ldots)$,
and by the semantics of $\AG$,  we have that $\KEFo,w_j\models \varphi_{i}$ holds if and only if
\begin{align}
 & \KEFo, w_j\models \bigoplus\limits_{z\in\alpha(i)} z ~\oplus~ \bigoplus\limits_{l=0}^{i-1} \varphi_l, \text{~~~and} \label{obsef-1x}\\
 & \text{for all successors $v$ of $w_j$ holds } \KEFo, v\models \varphi_{i} . 
\end{align}
Since every successor $v$ of $w_j$ is in layer $j-1$, and $i\geq(j-1)+2$,
from part \drueber\ of the induction hypothesis follows that $\KEFo,v\models\varphi_i$ for all $v$ in layer $j-1$.
Therefore, $\KEFo,w_j\models \varphi_{i}$ is equivalent to (\ref{obsef-1x}).
Similarly, for odd $i>j$ holds $\KEFo,w_j\models \varphi_{i}$ if and only if
\begin{align*}
 & \KEFo,w_j\models \bigoplus\limits_{z\in\alpha(i)} z ~\oplus~ \bigoplus\limits_{l=0}^{i-1} \varphi_l, \text{~~~or} \\
 & \text{for some successor $v$ of $w_j$ holds } \KEFo,v\models \varphi_{i} .
\end{align*}
Since every successor $v$ of $w_j$ is in layer $j-1$, and $i\geq(j-1)+2$,
from part \drueber\ of the induction hypothesis follows that $\KEFo,v\nmodels\varphi_i$.
Therefore we obtain the first observation
\begin{align}
\text{for }i>j: ~~ \KEFo,w_j\models \varphi_{i} & \text{ if and only if } 
 \KEFo,w_j\models \bigoplus\limits_{z\in\alpha(i)} z ~\oplus~ \bigoplus\limits_{l=0}^{i-1} \varphi_l . \label{obsEFAG}
\end{align}

By the inductive hypothesis {\daneben} we know that
$\KEFo,b_r\models \varphi_r$ for all $r<j$.
For all odd $r<j$ it holds that $b_r$ is reachable from $w_j$.
Since for odd $r$, the formula $\varphi_r$ has the form $\EF(\ldots)$,
it follows that $\KEFo,w_j\models \varphi_r$ for all odd $r<j$.
The inductive hypothesis {\daneben} also yields that $\KEFo,a_q\nmodels \varphi_q$ for all even $q<j$.
Since all such $a_q$ are reachable from $w_j$, 
and for even $q$ the formula $\varphi_q$ has the form $\AG(\ldots)$,
it follows that $\KEFo,w_j\nmodels \varphi_q$ for all even $q<j$.
Both together yield for every $t<j$,
\begin{align}
\KEFo,w_j\models \bigoplus_{l=0}^{t} \varphi_l & \text{~~~ if and only if~~~ }
\#\{m\in\{0,1,2,\ldots,t\} \mid m\text{ odd}\} \text{ is odd.} \label{obsdrueber}
\end{align}

Now back to the inductive step.
We start with the \textbf{inductive step for part \lequiv}.
Let $w_j\in W_j$ for $j>0$.
By (\ref{obsEFAG}) we get
\begin{align}
\KEFo,w_j\models \varphi_{j+1} & \text{~~~if and only if~~~}\KEFo, w_j\models \bigoplus\limits_{z\in\alpha(j+1)} z ~\oplus~ \bigoplus\limits_{l=0}^{j} \varphi_l . \label{equiv-1}
\end{align}
Because $z_j\in\alpha(j+1)$ if and only if $\#\{m\in\{0,1,2,\ldots,j-1\}\mid m\text{ odd}\}$ is odd,
it follows with (\ref{layers_and_z}) and (\ref{obsdrueber}) that 
\begin{align*}
\KEFo,w_j \models \bigoplus\limits_{z\in\alpha(j+1)} z & \text{~~if and only if~~}\KEFo, w_j\models \bigoplus\limits_{l=0}^{j-1} \varphi_l
\end{align*}
and therefore
\begin{align*}
\KEFo,w_j \nmodels \bigoplus\limits_{z\in\alpha(j+1)} z ~\oplus~ \bigoplus\limits_{l=0}^{j-1} \varphi_l .
\end{align*}
Adding $\varphi_j$ we get 
\begin{align}
\KEFo,w_j\models \bigoplus\limits_{z\in\alpha(j+1)} z ~\oplus~ \underbrace{\bigoplus\limits_{l=0}^{j} \varphi_l}_{=(\bigoplus\limits_{l=0}^{j-1} \varphi_l)\oplus\varphi_j} & % (case (i) from above)
\text{~~~if and only if~~~} \KEFo,w_j\models \varphi_j. \label{equiv-5}
\end{align}
(\ref{equiv-1}) and (\ref{equiv-5}) yields $\KEFo,w_j\models \varphi_{j}$ if and only if $\KEFo,w_j\models\varphi_{j+1}$.

This also proves 
\begin{align}
\text{for all $j$:}~~ \KEFo,w_j\nmodels \varphi_j \oplus \varphi_{j+1} . \label{obsjj+1}
\end{align}

We continue with the \textbf{induction step for case \drueber} for $j>0$, and proceed by induction on $i$.
The base case is $i=j+2$.
By (\ref{obsEFAG}) we have 
\begin{align}
\KEFo,w_j\models \varphi_{j+2} & \text{~~if and only if~~} \KEFo,w_j\models \bigoplus\limits_{z\in\alpha(j+2)} z ~\oplus~ \bigoplus\limits_{l=0}^{j+1} \varphi_l . \label{drueber-1}
\end{align}

With (\ref{obsjj+1}) we get 
\begin{align}
\KEFo,w_j\models \bigoplus\limits_{z\in\alpha(j+2)} z ~\oplus~ \bigoplus\limits_{l=0}^{j+1} \varphi_l &
\text{~~iff~~} \KEFo,w_j\models \bigoplus\limits_{z\in\alpha(j+2)} z ~\oplus~ \bigoplus\limits_{l=0}^{j-1} \varphi_l. \label{drueber-2}
\end{align}
We consider the cases for even resp.~odd $j+2$ separately.
First, we consider odd $j+2$.
%Notice that $w_j\nmodels\varphi_{j-1}$, since $a_{j-1}$ is a successor of $w_j$ and
%$a_{j-1}\nmodels \varphi_{j-1}$ by inductive hypothesis \daneben.
Since $\#\{ m\in\{j+2,\ldots,(j+2)-1\} \mid m \text{ even}\}=0$,
we get that $z_j\in\alpha(j+2)$ iff $\#\{ m \in\{0,1,2,\ldots,j-1\} \mid m \text{ odd} \}$ is odd.
With (\ref{obsdrueber}) we get 
\begin{align*}
\KEFo,w_j \models \bigoplus_{z\in\alpha(j+2)} z & \text{~~ if and only if ~~} w_j\models \bigoplus\limits_{l=0}^{j-1} \varphi_l 
\end{align*}
and thus
\begin{align}
\KEFo,w_j \nmodels \bigoplus_{z\in\alpha(j+2)} z ~\oplus~ \bigoplus\limits_{l=0}^{j-1} \varphi_l . \label{drueber-3}
\end{align}
From (\ref{drueber-1}), (\ref{drueber-2}), and (\ref{drueber-3}) we get $w_j \nmodels \varphi_{j+2}$ for odd $j$.

For even $j+2$, we proceed similarly.
Since $\#\{ m\in\{j+2,\ldots,(j+2)-1\} \mid m \text{ even}\}=0$,
we get that $z_j\in\alpha(j+2)$ iff $\#\{ m \in\{0,1,2,\ldots,j-1\} \mid m \text{ odd} \}$ is even.
With (\ref{obsdrueber}) we get 
\begin{align*}
\KEFo,w_j \models \bigoplus_{z\in\alpha(j+2)} z & \text{~~ if and only if ~~}\KEFo, w_j\nmodels \bigoplus\limits_{l=0}^{j-1} \varphi_l 
\end{align*}
and thus
\begin{align}
\KEFo,w_j \models \bigoplus_{z\in\alpha(j+2)} z ~\oplus~ \bigoplus\limits_{l=0}^{j-1} \varphi_l . \label{drueber-4}
\end{align}
From (\ref{drueber-1}), (\ref{drueber-2}), and (\ref{drueber-4}) we get $\KEFo,w_j \models \varphi_{j+2}$ for even $j$.

Now for the inductive step $i>j+2$.
From (\ref{obsEFAG}) we have
\begin{align}
\KEFo,w_j\models \varphi_{i} & \text{~~if and only if~~} \KEFo,w_j\models \bigoplus\limits_{z\in\alpha(i)} z ~\oplus~ \bigoplus\limits_{l=0}^{i-1} \varphi_l. \label{drueber-5}
\end{align}
By the inductive hypothesis, we know for $r<i$ holds
\begin{align*}
\KEFo,w_j\nmodels \varphi_r  \text{ for odd } r\geq j+2 \text{~~and~~} \KEFo, w_j\models \varphi_q \text{ for even } q\geq j+2 .
\end{align*}
This means
\begin{align}
\KEFo,w_j \models \bigoplus\limits_{l=j+2}^{i-1}\varphi_l & \text{~~if and only if~~} \#\{m\in\{j+2,\ldots,i-1\}\mid m\text{ even}\} \text{ is odd} .
\label{drueber-7}
\end{align}
With (\ref{obsdrueber}) we get
\begin{align*}
& \KEFo,w_j \models \bigoplus\limits_{l=0}^{j-1}\varphi_l \oplus \bigoplus\limits_{l=j+2}^{i-1}\varphi_l 
\text{~~iff~~} \\ & \#\{m\in\{0,1,\ldots,j-1\}\mid m\text{ odd}\}  + \#\{m\in\{j+2,\ldots,i-1\}\mid m\text{ even}\} \text{ is odd} .
\end{align*}
And with (\ref{obsjj+1}) $w_j\nmodels \varphi_j \oplus \varphi_{j+1}$ we eventually get
\begin{align}
\KEFo,w_j \models \bigoplus\limits_{l=0}^{i-1}\varphi_l 
\text{~~iff~~} & \#\{m\in\{0,1,\ldots,j-1\}\mid m\text{ odd}\} \notag \\ &  + \#\{m\in\{j+2,\ldots,i-1\}\mid m\text{ even}\} \text{ is odd} .
\label{drueber-8}
\end{align}

For odd $i$, we have
$z_j\in\alpha(i)$ iff $\#\{m\in\{0,1,\ldots,j-1\} \mid m \text{ odd}\}+\#\{m\in\{j+2,\ldots,i-1\} \mid m \text{ even}\}$ is odd.
With (\ref{drueber-8}) follows
\begin{align*}
\KEFo,w_j \nmodels \bigoplus\limits_{z\in\alpha(i)} z ~~\oplus~~  \bigoplus\limits_{l=0}^{i-1}\varphi_l \text{ ~~~~~(for odd $i$). }
\end{align*}
With (\ref{drueber-5}) follows $\KEFo,w_j\nmodels \varphi_i$ for odd $i>j+2$. 

For even $i$, we have
$z_j\in\alpha(i)$ iff $\#\{m\in\{0,1,2,\ldots,j-1\} \mid m \text{ odd}\}+\#\{m\in\{j+2,\ldots,i-1\} \mid m \text{ even}\}$ is even.
With (\ref{drueber-8}) follows
\begin{align*}
\KEFo,w_j \models \bigoplus\limits_{z\in\alpha(i)} z ~~\oplus~~  \bigoplus\limits_{l=0}^{i-1}\varphi_l \text{ ~~~~~(for even $i$). }
\end{align*}
With (\ref{drueber-5}) follows $\KEFo,w_j\models \varphi_i$ for even $i>j+2$. 
This concludes the proof of the inductive step for \drueber.

Now we consider the \textbf{inductive step for part \daneben}.
We start with even $i>0$ and the state $b_i$.
By the semantics of $\AG$ we get 
\begin{align}
\KEFo,b_i \models \varphi_i \text{~~iff~~}
\KEFo,b_i & \models \bigoplus\limits_{z\in\alpha(i)} z ~\oplus~ \bigoplus\limits_{l=0}^{i-1} \varphi_l \text{~~and~~} 
\KEFo,b_{i-1} \models \varphi_i . \label{bi-2} 
\end{align}

Part {\lequiv} of the general inductive hypothesis yields
that $\KEFo,b_{i-1} \models \varphi_{i}$ is equivalent to $\KEFo,b_{i-1} \models \varphi_{i-1}$,
and the latter holds due to the inductive hypothesis. 
Thus from (\ref{bi-2}) remains
\begin{align}
\KEFo,b_i \models \varphi_i & \text{~~if and only if~~} \KEFo,b_i \models \bigoplus\limits_{z\in\alpha(i)} z ~\oplus~ \bigoplus\limits_{l=0}^{i-1} \varphi_l .
\label{bi-3}
\end{align}
We now consider the right-hand side of (\ref{bi-3}).
With (\ref{obsdrueber}) we get
\begin{align}
\KEFo,b_i\models  \bigoplus\limits_{l=0}^{i-1} \varphi_l & \text{~~if and only if~~} \#\{m\in\{0,1,2,\ldots,i-1\} \mid m\text{ odd}\} \text{ is odd.}
\label{bi-4}
\end{align}
We have $z_{i}\in\alpha(i)$ iff $\#\{m\in\{0,1,2,\ldots,i-2\} \mid m \text{ odd}\}$ is odd.
Since $i-1$ is odd,
we get $z_{i}\in\alpha(i)$ iff $\#\{m\in\{0,1,2,\ldots,i-2,i-1\} \mid m \text{ odd}\}$ is even. 
With (\ref{bi-4}) we get 
\begin{align}
\KEFo,b_i\models \bigoplus\limits_{z\in\alpha(i)}z \text{~~iff~~}\KEFo, b_i\nmodels \bigoplus\limits_{l=0}^{i-1} \varphi_i,
\text{~~hence~~}
\KEFo,b_i \models \bigoplus\limits_{z\in\alpha(i)} z ~\oplus~ \bigoplus\limits_{l=0}^{i-1} \varphi_l .
\label{bi-5}
\end{align}
From (\ref{bi-5}) and (\ref{bi-3}) follows $\KEFo,b_i \models \varphi_i$ (for even $i$).

For even $i$ and state $a_i$, we have $\KEFo,a_i\models\varphi_i$ if and only if
\begin{align}
 \KEFo,a_i &  \models \bigoplus\limits_{z\in\alpha(i)} z ~\oplus~ \bigoplus\limits_{l=0}^{i-1} \varphi_l, \text{~~and} \notag \\
 \KEFo,b_{i-1} & \models \varphi_i, \text{~~and} \notag \\
 \KEFo,a_{i-1} & \models \varphi_i. \label{ai-A}
\end{align}

Since (\ref{ai-A}) is equivalent to $\KEFo,a_{i-1} \models \varphi_{i-1}$ (part {\lequiv} of the general inductive hypothesis),
and $\KEFo,a_{i-1} \nmodels \varphi_{i-1}$ (inductive hypothesis),
it follows that $\KEFo,a_i\nmodels\varphi_i$.

Now consider odd $i>0$ and state $b_i$.
By semantics of $\EF$ we get $\KEFo,b_i \models \varphi_i$ if and only if
\begin{align}
 & \KEFo,b_i \models \bigoplus\limits_{z\in\alpha(i)} z ~\oplus~ \bigoplus\limits_{l=0}^{i-1} \varphi_l, \text{~~or} \notag \\
 &\KEFo, b_{i-1} \models \varphi_i,  \text{~~or} \label{bi-B} \\
 &\KEFo, a_{i-1}\models \varphi_i. \notag
\end{align}
Part (\ref{bi-B}) follows from part {\lequiv} of the general inductive hypothesis
and the inductive hypothesis $\KEFo,b_{i-1} \models \varphi_{i-1}$.
Thus $\KEFo,b_i \models \varphi_i$ is proven.

For odd $i>0$ and state $a_i$, we have
$\KEFo,a_i \models \varphi_i$ if and only if
\begin{align*}
& \KEFo,a_i \models \bigoplus\limits_{z\in\alpha(i)} z ~\oplus~ \bigoplus\limits_{l=0}^{i-1} \varphi_l, \text{~~and} \\
& \KEFo,a_{i-1} \models \varphi_i.
\end{align*}
From the inductive hypothesis $\KEFo,a_{i-1}\nmodels \varphi_{i-1}$
and from part {\lequiv} of the general inductive hypothesis follows $\KEFo,a_{i-1} \nmodels \varphi_i$.
Thus
\begin{align}
\KEFo,a_i \models \varphi_i & \text{~~if and only if~~} \KEFo,a_i \models \bigoplus\limits_{z\in\alpha(i)} z ~\oplus~ \bigoplus\limits_{l=0}^{i-1} \varphi_l
\label{ai-3}
\end{align}
With (\ref{obsdrueber}) we have
\begin{align}
\KEFo,a_i\models  \bigoplus\limits_{l=0}^{i-1} \varphi_l \text{~~if and only if~~} \#\{m\in\{0,1,2,\ldots,i-1\} \mid m\text{ odd}\} \text{~is odd.}
\label{ai-3a}
\end{align}
But $z_{i}\in\alpha(i)$ iff $\#\{m\in\{0,1,2,\ldots,i-2\} \mid m \text{ odd}\}$ is odd.
Since $i-1$ is even,
we have $z_{i}\in\alpha(i)$ iff $\#\{m\in\{0,1,2,\ldots,i-2,i-1\} \mid m \text{ odd}\}$ is odd.
Thus 
\begin{align}
\KEFo,a_{i}\models \bigoplus\limits_{z\in\alpha(i)} z\text{~iff~}a_i\models  \bigoplus\limits_{l=0}^{i-1} \varphi_l, 
\text{~what yields~}
\KEFo,a_i\nmodels \bigoplus\limits_{z\in\alpha(i)} z ~\oplus~\bigoplus\limits_{l=0}^{i-1} \varphi_l.
\label{ai-4}
\end{align}
From (\ref{ai-4}) and (\ref{ai-3}) follows $\KEFo,a_i \nmodels \varphi_i$.

For the \textbf{inductive step of part {\ecircuit}}, let $w_i\in W_i\cap V$ be a node in layer $i>0$.
We start with even $i>0$ and formula $\varphi_i$ of the form $\AG(\ldots)$.
By the semantics of $\AG$ we have $\KEFo,w_i \models \varphi_i$ if and only if
\begin{align}
& \KEFo,w_i \models \bigoplus\limits_{z\in\alpha(i)} z ~\oplus~ \bigoplus\limits_{l=0}^{i-1} \varphi_l, \text{~~and} \label{wieven-1} \\
& \text{for all successors $v$ of $w_i$ holds } \KEFo,v \models \varphi_i . \label{wieven-2}
\end{align}
Since all successors $v$ of $w_i$ are in layer $i-1$,
with the general inductive hypothesis {\lequiv} we get that (\ref{wieven-2}) is equivalent to
\begin{align}
& \text{ for all successors $v$ of $w_i$ holds } \KEFo,v \models \varphi_{i-1} . \label{wieven-3a}
\end{align}
Since for the successor $b_{i-1}$ of $w_i$,
the general inductive hypothesis part {\daneben} yields $\KEFo,b_{i-1}\models\varphi_{i-1}$,
we get that (\ref{wieven-3a}) is equivalent to 
\begin{align}
& \text{ for all successors $v\in V\cap W_{i-1}$ of $w_i$ holds } \KEFo,v \models \varphi_{i-1} . \label{wieven-3}
\end{align}

Since $z_i\in\alpha(i)$ iff $\#\{m\in\{0,1,2,\ldots,i-2\}\mid m \text{ odd}\}$ is odd, and $i-1$ is odd,
we get $z_i\in\alpha(i)$ iff $\#\{m\in\{0,1,2,\ldots,i-2,i-1\}\mid m \text{ odd}\}$ is even.
With (\ref{obsdrueber}) follows
\begin{align*}
\KEFo,w_i \models \bigoplus\limits_{z\in\alpha(i)} z ~\oplus~ \bigoplus\limits_{l=0}^{i-1} \varphi_l .
\end{align*}
Thus (\ref{wieven-1}) holds, and with the equivalence of (\ref{wieven-2}) and (\ref{wieven-3}) we get 
\begin{align*}
\KEFo,w_i \models \varphi_i & \text{ ~~iff~~ for all successors $v\in V\cap W_{i-1}$ of $w_i$ } \KEFo,v \models \varphi_{i-1} . %\label{wieven-4}
\end{align*}
By the inductive hypothesis and the construction of $\KEFo$ from $G$ we get
\begin{align*}
\KEFo,w_i \models \varphi_i & \text{ ~~iff~~ for all successors $v\in V\cap W_{i-1}$ of $w_i$ in $G$ } \apath_G(v,T) . %\label{wieven-5}
\end{align*}
Since $i$ is even, $w_i$ is an $\forall$-node.
This yields what we look for, namely
\begin{align*}
\KEFo,w_i \models \varphi_i & \text{ ~~if and only if~~ } \apath_G(w_i,T) ~~~\text{(for even $i$)}. %\label{wieven-6}
\end{align*}

Next we consider odd $i>0$.
Then $\varphi_i$ has the form $\EF(\ldots)$.
From the semantics of $\EF$ we get that $\KEFo,w_i \models \varphi_i$ if and only if
\begin{align}
& \KEFo,w_i \models \bigoplus\limits_{z\in\alpha(i)} z ~\oplus~ \bigoplus\limits_{l=0}^{i-1} \varphi_l, \text{~~or} \label{wiodd-1} \\
& \text{for some successor $v$ of $w_i$ holds } \KEFo,v \models \varphi_i . \label{wiodd-2}
\end{align}

Since $z_i\in\alpha(i)$ iff $\#\{m\in\{0,1,2,\ldots,i-2\}\mid m \text{ odd}\}$ is odd,
and $i-1$ is even, we get
$z_i\in\alpha(i)$ iff $\#\{m\in\{0,1,2,\ldots,i-2,i-1\}\mid m \text{ odd}\}$ is odd.
With (\ref{obsdrueber}) follows
\begin{align}
\KEFo,w_i \nmodels \bigoplus\limits_{z\in\alpha(i)}z ~\oplus \bigoplus\limits_{l=0}^{i-2} \varphi_l .
\label{wiodd-3}
\end{align}
what shows that (\ref{wiodd-1}) does not hold.
Thus from (\ref{wiodd-1}) and (\ref{wiodd-3}) we get
\begin{align*}
\KEFo,w_i \models \varphi_i & \text{~~iff~~ for some successor $v$ of $w_i$ holds } \KEFo,v \models \varphi_{i-1} . %\label{wiodd-4}
\end{align*}
For successor $a_{i-1}$ of $w_i$ holds $\KEFo,a_{i-1}\nmodels\varphi_{i}$ (general inductive hypothesis {\daneben}, {\lequiv}).
Therefore 
\begin{align*}
\KEFo,w_i \models \varphi_i & \text{~~iff~~ for some successor $v\in V\cap W_{i-1}$ of $w_i$ } \KEFo,v \models \varphi_{i-1} . %\label{wiodd-4}
\end{align*}
By the inductive hypothesis and the construction of $\KEFo$ from $G$ we get
\begin{align*}
\KEFo,w_i \models \varphi_i & \text{ ~~iff~~ for some successor $v\in V\cap W_{i-1}$ of $w_i$ in $G$ } \apath_G(v,T) . %\label{wieodd-5}
\end{align*}
Since $i$ is odd, $w_i$ is an $\exists$-node.
This concludes the proof of the Claim with
\begin{align*}
\KEFo,w_i \models \varphi_i & \text{ ~~if and only if~~ } \apath_G(w_i,T) ~~~\text{(for odd $i$)}. %\label{wieodd-6}
\end{align*}
\claimqed
\end{proof}

We now have that $\langle G,s,T \rangle \in \ASAGAP_{\log}$ if and only if $\langle \KEFo, s, \varphi_{\ell}\rangle\in\CTLMC{\EF,\oplus}$.
In order to estimate the size $|\varphi_i|$ of formula $\varphi_i$,
let $|\varphi_i|$ be the number of appearances of atoms in $\varphi$.
Then $|\varphi_0|=1$, and $|\varphi_{i+1}|\leq (i+1) + \sum_{j=0}^{i-1} |\varphi_j|$.
This yields $|\varphi_{i+1}|\leq 2\cdot|\varphi_i|+1$. 
Since the depth $\ell$ of $G$ is logarithmic in the size of $G$,
we get that $\varphi_l$ has size polynomial in the size of $G$.
Thus the reduction function described above can be computed in logarithmic space.
Since $\ASAGAP_{\log}$ is $\AC1$-complete,
it follows that $\CTLMC{\EF,\oplus}$ is $\AC1$-hard under logspace reducibility.
\end{proof}

\vspace*{2ex}

\end{document}